\documentclass[12pt]{iopart}
\usepackage{iopams}
\usepackage{graphicx}
\usepackage{bm}
\usepackage{iopams,setstack}
\usepackage{graphicx}
\usepackage{mathrsfs}

\usepackage[usenames]{color}
\definecolor{quique}{rgb}{0.1,0,0.55}

\newcommand{\lag}{\ensuremath{\mathscr{L}}}        

\newcommand{\D}{\displaystyle}
\newcommand{\beq}{\begin{equation}}
\newcommand{\eeq}{\end{equation}}
\newcommand{\beqa}{\begin{eqnarray}}
\newcommand{\eeqa}{\end{eqnarray}}
\def\nn{\nonumber\\}
\def\eq#1{(\ref{#1})}

\def\cd#1{\ensuremath{\nabla_{#1}}}          
\def\pd#1{\ensuremath{\partial_{#1}}}        


\def\dlag#1{\frac{\partial\lag}{\partial\ \! #1}}

\def\st{spacetime}

\def\mch{\scriptscriptstyle}

\def\text#1{{\rm #1}}

\def\lab#1{\label{#1}}

\newtheorem{theorem}{Theorem}[section]
\newtheorem{lem}[theorem]{Lemma}

\newtheorem{cor}[theorem]{Corollary}
\newtheorem{rem}[theorem]{Remark}
\newtheorem{notation}[theorem]{Notation}
\newtheorem{pro}[theorem]{Proposition}



\def\R{\mathbb{R}}
\def\O{\Omega}

\def\a{\alpha}

\def\ve{\varepsilon}

\def\p{\partial}

\def\E{\mathcal{E}}

\begin{document}
\title[An alternative well-posedness property and \dots]{An alternative well-posedness property and static spacetimes with naked singularities}


\author{Ricardo\ E  Gamboa Sarav\'{\i}$^{1,2}$,  Marcela Sanmartino$^3$ and  Philippe Tchamitchian $ ^4$}
\address{$^1$ Departamento de F\'{\i}sica, Facultad de Ciencias
Exactas, Universidad Nacional de La Plata, Casilla de Correo 67, 1900 La Plata, Argentina. }
\address{$^2$ IFLP, CONICET,
Argentina.  }
\address{$^3$ Departamento de Matem\'{a}tica, Facultad de Ciencias
Exactas, Universidad Nacional de La Plata, Casilla de Correo 172, 1900 La Plata, Argentina. }
\address{$^4$ Universit\'e Paul C\'ezanne,CNRS,LATP (UMR 6632),
Facult\'e des Sciences et Techniques, LATP,
Case cour A, Avenue Escadrille Normandie-Niemen, F-13397 Marseille
Cedex 20, France}

\ead{quique@fisica.unlp.edu.ar, tatu@mate.unlp.edu.ar, philippe.tchamitchian@univ-cezanne.fr}

\date{\today}

%
\begin{abstract}
In the first part of this paper, we show that the Cauchy problem for wave propagation in some
static spacetimes presenting a singular time-like boundary is well posed,
if we only require the waves to have finite energy, although no boundary condition is required.
This feature does not come from essential self-adjointness, which is false in these cases, but from a different
 phenomenon that we  call the {\bf alternative well-posedness property}, whose origin  is due to the degeneracy of the metric components near the boundary.

Beyond these examples, in the second part, we  characterize the
type of degeneracy which leads to this phenomenon.
\end{abstract}

\submitto{\CQG}
\pacs{04.20.Cv, 04,20.Dw, 04.20Ex}

\section{Introduction}

The aim of this paper  is to investigate the well-posedness of the
Cauchy problem for the propagation of waves in static spacetimes
presenting a singular time-like boundary, and to clarify the
boundary behaviour of the waves. Such singularities can arise when
solving Einstein equations in the vacuum or in situations where
matter is present but located away from singularities.

This line of research has been initiated by Wald in \cite{W}, and
further developed by, among others, the authors of references
\cite{HM,IH,IW, S,SST}.

 The propagation of waves is formally given by a standard wave
equation of the form
$$\partial_{tt} \phi + A \phi = 0,$$
where $A$ is a second-order symmetric operator with variable
coefficients independent of time, defined on $C^\infty_0
(\Omega)$, the space of test functions (see 2.1 for precise
definitions). Hence, the Cauchy problem is well-posed as soon as
one is able to choose a physically meaningful self-adjoint
extension of $A$, on the Hilbert space $H$ naturally associated to
the setting.

In most cases, this is done either by adding suitable boundary
conditions or because $A$ turns out to be essentially
self-adjoint. Since here, there is no physically meaningful
boundary condition, many authors discussed the essentially
self-adjointness of $A$: although never explicitely proved, the
above cited papers are based on the fact that $A$ does not fulfill
this property.

This is the reason why, in \cite{IH}, the authors   suggested to
replace $H$ by another space, on which they claimed that one
recovers the essentially self-adjointness of $A$. However, this is
wrong, and we prove it at the end of the first part.

Other choices have been to deal with the Friedrichs extension of
$A$, the most clear argument for doing so being stated in
\cite{SST}: ``the Friedrichs extension is the only one whose
domain is contained in $H^1$". In this work we show how this
observation turns out to be at the core of the problem:

When no boundary conditions can be provided (physically or
mathematically), by choosing the Friedrichs extension, the
solution turns out naturally to have finite energy. But, of
course, this extension cannot be interpreted as imposing any
boundary condition.

On the other hand, requiring  the solutions the condition to have
finite energy is, in general,  not enough to have a well posed
problem. However, it may occur, that under this sole condition,
the problem turns out to be well-posed without any boundary
condition, i.e., the solution is completely determined by the
Cauchy data. Indeed, as we show in this work, it happens when the
smallest possible space (the completion of smooth compactly
supported functions on $\O$ with the energy norm) coincides with
the largest possible  one: the energy space  (which obviously
contains functions that do not vanish at the boundary).

\bigskip

In part I, we show this for explicit examples: Taub's plane
symmetric spacetime \cite{taub} and its generalization to higher
dimensions \cite{L}, and Schwarzschild solution with negative
mass. The last one has already been discussed in the literature
\cite{HM,IH}.

 The singularity of  spacetime induces non standard
behavior of the coefficients of $A$ when the point reaches the
boundary. Roughly speaking, in the normal direction to the
boundary, the coefficients vanish, while they explode in the
parallel direction. In particular, the normal coefficient vanishes
like some  power of the  distance to the boundary.

 We start by showing that the operator $A$ is not
essentially self-adjoint. Hence, we have to choose one of its many
self-adjoint extensions. Our criterion is to require the waves to
have finite energy at any time.

Contrary to the standard cases (like, for instance, vibrating
strings), where boundary conditions are needed, we show here that
this sole condition is enough to select a unique self-adjoint
extension of $A$. No boundary conditions must be imposed to
functions in the domain of this extension, and the Cauchy problem
is well-posed with initial conditions only.

 However, the absence of boundary condition does not imply the
absence of boundary behaviour. As a matter of fact, it happens
that the waves do have a non trivial boundary behaviour, i.e., a
trace on $\partial \Omega$, which moreover obeys a regularity law
which depends on the above mentioned exponent.

\bigskip
 In the second part, we turn our attention to the general
problem of deciding when boundary conditions are needed or not for
the Cauchy problem to be well-posed in the absence of essential
self-adjointness. Our setting is that for a divergence operator on
the half-space, fulfilling a pointwise ellipticity condition; we
do not prescribe any {\it a priori} form of global ellipticity,
thus allowing arbitrary degeneracies near the boundary (provided
the coefficients remain locally integrable  up to the boundary,
however). Paralleling somehow our first part, we use the Dirichlet
form associated to the operator to define an energy space which is
the largest possible space whose elements all have finite energy.
Then, we say that the operator has the {\bf alternative
well-posedness property} if it has one and only one self-adjoint
extension with domain included in the energy space.

 We clarify the link between this property and the
density in the energy space of test functions compactly supported
in the geometric domain. On the basis of this abstract
preliminary, we finally give a necessary condition and a
sufficient condition for this property to hold. Under a mild
assumption on the coefficients of the operators, which is
reminiscent of Muckenhoupt $A_2$ class, these conditions are
equivalent, so that we obtain a characterization of the {
alternative well-posedness property}. Its application to static
spacetimes such as the examples above is straightforward.

\vspace{1.5cm}

\begin{center}
{{ \bf \large Part I. Wave propagation in Taub's spacetime and
other examples}}
\end{center}

\vspace{.2cm}

\section{  Massless scalar field in Taub's plane symmetric spacetime}\label{massles}

\subsection{Geometric setting }

We consider the $(n+2)$-dimensional \st\ with $n\geq 2$
and line element \beqa ds^2=- z^{-1+\frac1n}\,
\left(dt^2-dz^2\right)+ z^{2/n}\left[(dx^{\mch
1})^2+\dots+(dx^{\mch{n}})^2 \right] ,\nn
-\infty<t<\infty,\quad\quad-\infty<x^{\mch{i}}<\infty,\,1\leq
i\leq n,\quad\quad0<z<\infty \,.\nonumber  \eeqa
 Throughout the paper, we will use the following
notation:  in $\Omega =\R^n\times (0,\infty)$, the
current point is $(x,z)$, with $x=(x^{\mch{1}}, ...,
x^{\mch{n}})\in \R^n$, $z>0$; the Lebesgue measure on $\Omega $ is
$d\mu$, and on $\R^n$ is $dx$; the gradient and the Laplacian on
$\R^n$ are $\nabla= (\pd {x^{\mch{1}}}, ..., \pd {x^{\mch{n}}}
)^T$ and $\Delta= \sum_{i=1}^n\pd {x^{\mch{i}}x^{\mch{i}}}$.

 When $n=2$, this \st\ is Taub's plane symmetric vacuum
solution \cite{taub}, which is the unique nontrivial static and
plane symmetric solution of Einstein vacuum equations
($R_{ab}=0$). It
has a singular boundary at $z=0$ (see 
 \cite{gs} for a detailed study of the properties of this
solution).

 By matching it to inner solutions it
turns out to be the exterior solution of some static and plane
symmetric distributions of matter \cite{gs,gs2,gs3}. In such a
case, the singularities are not the sources of the fields, but
they arise owing to the attraction of distant matter. We
call them {\bf empty repelling singular boundaries} (see also
\cite{gs1}).

 In this spacetime, we consider the propagation of a
massless scalar field with Lagrangian density \beq\label{lag}
\lag= -\frac{1}{2}\nabla^a\phi\, \cd a \phi =
 -\frac{1}{2} g^{ab}\pd a \phi\, \pd b \phi\ , \eeq
 where $\nabla $  denotes the covariant
derivative (Levi-Civita connection).

 As usual, we obtain the field equations by requiring
that the action
$$ S=\int\lag(\cd a\phi,\phi,g_{ab})\sqrt{|g|}\ dt
d\mu\ $$
 be stationary under arbitrary variations of the fields $\delta
\phi$ in the interior of any compact region, but vanishing at its
boundary. Thus, we have
$$ \cd a\left(\dlag{ \cd a \phi}\
\right)=\dlag{ \phi}\ .
$$
In our case, this reads
 \beqa\label{wave}
\cd a \nabla^a \phi\ &=& \frac{\pd a \left(\sqrt{|g|}\,g^{ab}\pd b
\phi \right)} {\sqrt{|g|}}\nn &=& z^{-1+\frac1n}\left(-\,\pd
{tt}\phi +{\frac1z \pd z(z\, \pd z \phi
)}+\frac{1}{z^{1+\frac1n}}\Delta \phi\right)\nn &=&0. \eeqa As it
is well known (see for example \cite{gs4}), from the Lagrangian
density $\lag$ we get the energy-stress tensor
$$
T^{ab}= -\dlag{ \cd a \phi}\ \nabla^b \phi \ + g^{ab} \lag =
g^{ac}g^{bd}\,\pd c \phi\, \pd d \phi -\frac{1}{2}
g^{ab}g^{cd}\,\pd c \phi\,  \pd d \phi\ ,
$$
which is symmetric and, for smooth enough solution of (\ref{wave}),
 has vanishing covariant divergence
($ \cd a T^{ ab}
=0$). Since the \st\ is stationary, $\pd t$ is a Killing vector
field. Therefore by integrating $(-(\pd t)_b T^{bt})$ over the
whole space we get that, whenever it exists, the total energy of
the field configuration is
$$
E(\phi,t)=\frac{1}{2}\int_{\Omega} \left(z\,(\pd t \phi)^2+z\,(\pd
z \phi)^2+\frac{1}{z^{\frac1n}}\,(\nabla \phi)^2\right) \, dx\,
dz.
$$

\subsection{Statement of the results}

To study the properties of the solutions of the wave
equation (\ref{wave}), we start with defining the underlying
elliptic differential operator $A$ and the Hilbert space $H$ on
which $A$ is symmetric.

 Define, when $\varphi \in C_0^{\infty}(\Omega)$, the
operator $A$ by
$$
A\varphi =-\frac{1}{z}\pd z\left(z\, \pd z \varphi
\right)-\frac{1}{z^{1+\frac1n}} \Delta \varphi.
$$
Then consider the Hilbert space
\begin{eqnarray}
H&:=&L^2(\O, z\, d\mu)\nn &=&\{\varphi(x,z) \ :\ \int_{\O}
|\varphi(x,z)|^2\, z\, d\mu <\infty\}.\nonumber
\end{eqnarray}
By construction the operator $A$ is symmetric on $H$, with
\begin{eqnarray}
<A\varphi,  \eta>_H&=&\int_{\O} \left(z \pd z \varphi \, \pd z
\bar{\eta} \,+ \frac{1}{z^{\frac1n}}\nabla \varphi \cdot \nabla
\bar{\eta}\right) d\mu\nn &=&:b(\varphi,\eta)\nonumber
\end{eqnarray}
 for $\varphi, \eta \in C_0^{\infty}(\Omega)$.

 This leads to introducing the ``energy space''
\begin{equation}\label{energyspace}
\mathcal{E}=\{\varphi \in H^1_{loc}(\O)\cap H \ :\,
b(\varphi,\varphi)<\infty\},
\end{equation}
where  $H^1_{loc}(\O)$ is the usual local Sobolev space. It is
straightforward to check that $\E$, equipped with its natural norm
 $$
 \|\varphi\|^2_{\E}:=b(\varphi, \varphi)+\|\varphi\|^2_{H},
$$ is a Hilbert space. This is the largest subspace of $H$ on which
the form $b$ is finite everywhere.
\begin{rem}
The space $C_c^{\infty}(\Omega)$ of the restrictions to $\Omega$
of $C_0^{\infty}(\R^{n+1})$ functions is included in $\E$; to
prove that it is dense in both $H$ and $\E$ is left to the reader.
\end{rem}

Our first question is whether $A$ is essentially
self-adjoint or not: as a result, it is not. However,
we are only looking for those extensions with domain included in
the energy space, because we are interested in waves having finite
energy. When taking into account this restriction, we recover the
uniqueness of the self-adjoint extension of $A$.

\begin{theorem}\label{esencial}
The operator $A$ is not essentially self-adjoint. However, there
exists only one self-adjoint extension of $A$ whose domain $D$ is
included in the energy space $\E$.
\end{theorem}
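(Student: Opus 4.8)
The plan is to split the statement into its two halves and treat them by the classical deficiency-index / quadratic-form machinery, exploiting the explicit radial structure of $A$ near $z=0$.

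\textbf{Step 1: $A$ is not essentially self-adjoint.} I would work in the separated variables: decomposing $\varphi$ via Fourier transform in $x\in\R^n$, the operator $A$ acts on the $z$-variable as the family of ordinary differential operators $L_\xi\psi = -\frac1z\pd z(z\,\pd z\psi)+|\xi|^2 z^{-1-1/n}\psi$ on $L^2((0,\infty),z\,dz)$. Essential self-adjointness of $A$ on $H$ is equivalent to essential self-adjointness of each $L_\xi$ (with uniform control), and the latter is governed by Weyl's limit-point/limit-circle criterion at the endpoints $z=0$ and $z=\infty$. At $z=0$ the potential term $|\xi|^2 z^{-1-1/n}$ is integrable against the weight near $0$ and the formal solutions of $L_\xi\psi=0$ behave like $1$ and $\log z$, both of which are square-integrable against $z\,dz$ near $0$; hence $z=0$ is in the limit-circle case, producing a nontrivial deficiency space. (At $z=\infty$ one checks the limit-point case, so no extra indeterminacy there.) Concretely, it suffices to exhibit a single $L^2(\O,z\,d\mu)$ solution of $A u = \pm i u$ (or simply of $(A+1)u=0$), which one can do by taking $\xi=0$ and solving the resulting modified-Bessel-type ODE in $z$; the decaying solution at infinity is square-integrable against $z\,dz$ near $0$ as well. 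This shows $\dim\ker(A^*\mp i)\ge 1$, so $A$ is not essentially self-adjoint.

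\textbf{Step 2: existence of a self-adjoint extension with domain in $\E$.} Here I would invoke the Friedrichs extension $A_F$ of $A$. Since $A$ is nonnegative ($b(\varphi,\varphi)\ge 0$), $A_F$ exists, is self-adjoint, and by the standard construction its form domain is the completion $\E_0$ of $C_0^\infty(\Omega)$ in the norm $\|\cdot\|_\E$, while $D(A_F)=\{u\in\E_0: \text{the functional } v\mapsto b(u,v)\text{ is } H\text{-bounded}\}\subset\E_0\subset\E$. Thus $A_F$ is a self-adjoint extension with $D(A_F)\subset\E$, settling existence.

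\textbf{Step 3: uniqueness — the main obstacle.} This is the crux. I must show that \emph{any} self-adjoint extension $\widetilde A$ with $D(\widetilde A)\subset\E$ coincides with $A_F$. The strategy: self-adjoint extensions of $A$ are parametrized (von Neumann / boundary-triple theory) by the behaviour at $z=0$ of functions in their domain, i.e.\ by a boundary condition relating the two asymptotic coefficients $c_0(x)$ and $c_1(x)$ in the expansion $u(x,z)\sim c_0(x)+c_1(x)\log z$ (plus lower-order terms) as $z\to 0^+$. The point is a \emph{one-sided} finiteness computation: I would show that $b(u,u)<\infty$ together with $u\in H$ forces $c_1\equiv 0$, because the gradient energy $\int z|\pd z u|^2\,d\mu$ of a $\log z$ term diverges (its $z$-derivative is $\sim 1/z$, and $\int_0 z\cdot z^{-2}\,dz=\int_0 z^{-1}\,dz=\infty$). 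Hence the constraint "$D(\widetilde A)\subset\E$" already pins down the boundary behaviour to the single condition $c_1=0$, which is exactly the Friedrichs boundary condition; since the self-adjoint extension is uniquely determined by its (self-adjoint) boundary condition, $\widetilde A=A_F$. The delicate parts are: (i) justifying the asymptotic expansion with remainder control for every element of $\E$ (not just classical solutions) — this needs a local elliptic regularity / Hardy-inequality argument in $z$ near the boundary, uniformly in $x$ after Fourier transform; and (ii) checking that the surviving $c_1=0$ condition is genuinely self-adjoint and not something that would need a further free parameter, which amounts to verifying that the remaining boundary form vanishes identically on $\{c_1=0\}$. I expect (i) to be where the real work lies, since it is precisely the degeneracy of the weight $z$ at $z=0$ that both creates the non-essential-self-adjointness and, simultaneously, makes the energy finiteness condition strong enough to kill the would-be free parameter.
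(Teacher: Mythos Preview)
Your Step~1 and Step~2 are fine and close to the paper's argument, with one quibble: taking $\xi=0$ in Step~1 does not produce an element of $H$, since a function constant in $x$ is not in $L^2(\R^n)$. The paper instead exhibits the explicit deficiency vector $\hat\eta(\xi,z)=K_0\bigl(\tfrac{2n}{n-1}\,z^{(n-1)/(2n)}|\xi|\bigr)$, which lies in $H$, satisfies $A^*\eta=0$, and fails to lie in $\E$ (precisely because of the $\log z$ behaviour you describe). Your limit-circle reasoning is the right heuristic, but to finish you must assemble an honest $L^2$ function over all of $\Omega$; the Bessel $K_0$ with its natural $\xi$-dependence does this in one stroke.

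The substantive difference is in Step~3. The paper does \emph{not} argue via boundary asymptotics and the $c_1=0$ condition. Instead it proves a density lemma: $C_0^\infty(\Omega)$ is dense in $\E$, i.e.\ $\E_0=\E$. The proof is short and constructive --- multiply by the cutoff $h_\varepsilon(z)=\varepsilon^{-1}|\log z|^{-\varepsilon}$ near $z=0$, check that $\int_0 z|h_\varepsilon'|^2\,dz\to 0$, then translate and mollify. Once $\E_0=\E$, uniqueness is immediate from the standard characterization of the Friedrichs extension as the \emph{only} self-adjoint extension whose operator domain lies in the form closure $\E_0$. Your route through boundary triples and the asymptotic expansion $u\sim c_0(x)+c_1(x)\log z$ is in principle workable, and your observation that $\int_0 z\cdot z^{-2}\,dz=\infty$ kills the $\log$ coefficient is exactly the mechanism; but carrying it out rigorously requires the regularity and remainder control you flag in~(i), uniformly across the infinite-dimensional deficiency space (the indices here are $(\infty,\infty)$, not $(1,1)$), and then a separate argument that the surviving condition really singles out one extension. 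The paper's density lemma bypasses all of this and, moreover, isolates the structural point: in Part~II the identity $\E_0=\E$ is shown to be \emph{equivalent} to the alternative well-posedness property for general degenerate divergence-form operators, so the density lemma is not just a shortcut but the heart of the matter.
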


 We will see later that the domain of this particular
extension is
$$
 D:=\{\varphi \in \mathcal{E} \ : \exists\  C> 0\
 \forall \ \eta \in \mathcal{E}, \ |b(\varphi,\eta)|\leq C \|\eta\|_H \}.
$$

\begin{notation}
For the sake of simplicity, the self-adjoint extension of $A$ given by
the theorem above is denoted the same way.
\end{notation}
In the next section we will show that the uniqueness of such an
extension  comes from the density in $\E$ of
$C_0^{\infty}(\Omega)$. This important density property prevents
the space $\E$ to possess any kind of trace operator or, more
generally, any continuous linear form supported on the boundary.
Hence there is no boundary condition attached to the definition of
$A$.

\medskip
Now, coming back to the wave equation, we take suitable functions
$f$ and  $g$ on $\Omega$ and consider the Cauchy problem
\ \\
$$(P) \left\{\begin{array}{cc}
 \pd {tt}\phi + A\phi &=0,\\
 \phi(0,\cdot)&=f,\\
  \pd t \phi (0,\cdot)&= g.
\end{array}\right.
$$

\begin{theorem}\label{solution}

\begin{enumerate}
\item Assume $f\in \E$  and $g\in H$. Then the problem (P) has a unique solution
$$
\phi \in C([0,\infty) ; \E)\cap C^1([0,\infty) ; H),
$$
and there exists a constant $C>0$ such that

$$
\forall \ t>0\ \ \ \|\phi(t,\cdot)\|_{\E}+ \|\pd t
\phi(t,\cdot)\|_{H}\leq C(\|f \|_{\E}+ \|g\|_{H}).
$$
\item In this case, the energy
$$
E(\phi,t):=\frac{1}{2}\int_{\Omega} \left(z\,(\pd t
\phi)^2+z\,(\pd z \phi)^2+\frac{1}{z^{\frac1n}}\,|\nabla \phi|^2
\right)d\mu
 $$
 is well-defined and conserved:
$$
\forall\  t>0 \ \ \ E(\phi, t)=\frac12 \left( \|g\|_H^2+ b(f,f)
\right).
$$
\item If, in addition, $f\in D$  and $g\in \E$, we have
$$
\phi \in C([0,\infty) ; D)\cap C^1([0,\infty) ; \E),
$$
and, for another constant $C>0$,

$$
\forall\ t>0\ \ \ \|\phi(t,\cdot)\|_{D}+ \|\pd t
\phi(t,\cdot)\|_{\E}\leq C(\|f \|_{D}+ \|g\|_{\E}).
$$
\end{enumerate}
\end{theorem}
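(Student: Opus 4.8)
The plan is to derive everything from the spectral theorem applied to the self-adjoint operator $A$ furnished by Theorem \ref{esencial}, which is nonnegative since $b(\varphi,\varphi)\geq 0$. First I would observe that the energy space $\E$ coincides with the form domain of $A$, i.e. with $D(A^{1/2})$ equipped with the norm $\|\varphi\|_\E^2 = \|A^{1/2}\varphi\|_H^2 + \|\varphi\|_H^2$; this identification is exactly the content of the claim that $D$ (the domain of the self-adjoint extension) equals $\{\varphi\in\E : |b(\varphi,\eta)|\leq C\|\eta\|_H\ \forall\eta\in\E\}$, together with the density of $C_0^\infty(\Omega)$ in $\E$ proved in the next section. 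Granting this, part (1) follows from the standard theory of the abstract wave equation $\phi'' + A\phi = 0$: writing $\omega = (I+A)^{1/2}$, the solution is
$$
\phi(t,\cdot) = \cos(t\sqrt{A})\,f + \frac{\sin(t\sqrt{A})}{\sqrt{A}}\,g,
$$
interpreted through the functional calculus (the operator $\sin(t\sqrt A)/\sqrt A$ is bounded, equal to $t$ on the kernel of $A$). One checks directly from the spectral representation that $\phi\in C([0,\infty);\E)\cap C^1([0,\infty);H)$ when $f\in\E=D(\sqrt{A+I})$ and $g\in H$, and that the stated a priori bound holds with a constant depending only on the elementary inequalities $|\cos(t\lambda)|\leq 1$, $|\sin(t\lambda)/\lambda|\leq \max(1,1/\lambda)$.

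For part (2), conservation of energy is the abstract identity
$$
\|\pd t\phi(t,\cdot)\|_H^2 + \|\sqrt{A}\,\phi(t,\cdot)\|_H^2 = \|g\|_H^2 + \|\sqrt{A}\,f\|_H^2,
$$
which one obtains either by differentiating $t\mapsto \|\pd t\phi\|_H^2 + b(\phi,\phi)$ and using the equation, or directly from the spectral representation using $\cos^2 + \sin^2 = 1$. The only point requiring care is that $b(\phi(t,\cdot),\phi(t,\cdot)) = \|\sqrt A\,\phi(t,\cdot)\|_H^2$, which again is the identification of $\E$ with the form domain; with that in hand the integral defining $E(\phi,t)$ is finite and equals the right-hand side. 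Part (3) is the same argument run one level up the scale of spaces: if $f\in D = D(A)$ and $g\in\E = D(\sqrt A)$, then applying the functional calculus shows $\sqrt A\,\phi$, $A\phi$ and $\pd t\phi$, $\pd{tt}\phi$ satisfy the regularity and bounds claimed, because $A\phi(t,\cdot) = \cos(t\sqrt A)(Af) + \frac{\sin(t\sqrt A)}{\sqrt A}(\sqrt A\,\sqrt A\,g)$ with $\sqrt A\,g\in H$, etc.; equivalently one notes that $A$ commutes with the propagators, so it maps a solution with data $(f,g)$ to the solution with data $(Af,Ag)$, reducing (3) to (1).

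The main obstacle is not the abstract semigroup machinery, which is routine once $A$ is known to be self-adjoint and nonnegative, but the precise identification $\E = D(\sqrt{A})$ (as sets and up to equivalence of norms) and the concrete description of $D(A)$ as the set $D$ displayed after Theorem \ref{esencial}. This is where the density of $C_0^\infty(\Omega)$ in $\E$ — deferred to the next section — is essential: it guarantees that the symmetric form $b$ defined initially on test functions is closable with closure having form domain exactly $\E$, so that the Friedrichs-type extension it generates is the unique self-adjoint extension with domain in $\E$, and its form domain is $\E$. Without that density one would only get $\overline{C_0^\infty(\Omega)}^{\|\cdot\|_\E}\subseteq\E$ with possibly strict inclusion, and the uniqueness in Theorem \ref{esencial} — hence the well-posedness here — would fail. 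So I would state the proof of Theorem \ref{solution} as a corollary of Theorem \ref{esencial} plus the functional-calculus identities above, and refer forward to the density result for the one nontrivial ingredient.
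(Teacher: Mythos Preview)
Your proposal is correct and follows essentially the same route as the paper: both define the solution via the functional-calculus propagators $\cos(t\sqrt{A})$ and $A^{-1/2}\sin(t\sqrt{A})$, rely on the identification $\E=D(A^{1/2})$ (itself resting on the density of $C_0^\infty(\Omega)$ in $\E$ established in the next section), and obtain the continuity statements and energy conservation by density arguments. The only cosmetic difference is that the paper proves energy conservation by the multiply-by-$\partial_t\phi$-and-integrate computation, whereas you also note the direct spectral route via $\cos^2+\sin^2=1$.
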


This result shows that the Cauchy problem (P) is well posed
without any boundary condition on $\phi$. This does not necessarily mean,
however, that $\phi(t,\cdot)$ vanishes, or has no trace at all,
on the boundary. Indeed, provided $f$ and $g$ are regular enough,
$\phi(t,\cdot)$ does have a trace on $\partial \Omega$ at each time $t>0$, which
is entirely determined by the Cauchy data:

\begin{theorem}\label{trazat1}

Assume $f\in D$ and $g \in \E$, and let $\phi$ be the solution of
(P) given by Theorem \ref{solution}. Then, for each $t>0$, ${\D
\lim_{z\to 0}\phi(t,\cdot,z)}$ exists in $L^2(\mathbb{R}^n)$.

\end{theorem}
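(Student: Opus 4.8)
\medskip

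The plan is to establish a one-dimensional trace inequality in the $z$ variable, uniform in the transverse Fourier mode, and then to integrate it in $x$ (or rather in the dual variable $\xi$) using the finite energy available from Theorem \ref{solution}. First I would take the partial Fourier transform in $x$, writing $\hat\phi(t,\xi,z)$ for the transform of $\phi(t,\cdot,z)$; since $f\in D$ and $g\in\E$, part (3) of Theorem \ref{solution} gives $\phi(t,\cdot)\in D\subset\E$ and $\pd t\phi(t,\cdot)\in\E$ for every $t$, so in particular $b(\phi(t,\cdot),\phi(t,\cdot))<\infty$. In Fourier variables this says
$$
\int_{\R^n}\int_0^\infty\Bigl(z\,|\pd z\hat\phi|^2+\frac{|\xi|^2}{z^{1/n}}\,|\hat\phi|^2\Bigr)\,dz\,d\xi<\infty,
$$
and $\hat\phi(t,\cdot)\in H$ gives $\int\!\int z\,|\hat\phi|^2\,dz\,d\xi<\infty$ as well.

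\medskip

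The core of the argument is the following elementary observation: for a function $u(z)$ on $(0,\infty)$ with $\int_0^1 z\,|u'|^2\,dz<\infty$ and $\int_0^1 |u|^2\,z^{-1/n}\,dz<\infty$, the limit $u(0^+)$ exists and
$$
|u(0^+)|^2\leq C\Bigl(\int_0^1 z\,|u'|^2\,dz+\int_0^1 z^{-1/n}\,|u|^2\,dz\Bigr).
$$
To see this one writes, for $0<z_1<z_2\leq1$, $u(z_1)-u(z_2)=-\int_{z_1}^{z_2}u'(s)\,ds$ and estimates $|u(z_1)-u(z_2)|\leq\bigl(\int_{z_1}^{z_2}s^{-1}\,ds\bigr)^{1/2}\bigl(\int_{z_1}^{z_2}s\,|u'|^2\,ds\bigr)^{1/2}$; this only shows a logarithmic modulus of continuity, which is \emph{not} enough to control $|u(0^+)|^2$ by the energy directly. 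The better route is to fix the normal direction: since $\int_0^1 z^{-1/n}|u|^2\,dz<\infty$ and $z^{-1/n}$ is not integrable at $0$ (as $n\geq2$ forces $1/n\leq 1/2<1$, wait --- $1/n<1$, so $z^{-1/n}$ \emph{is} integrable; the relevant non-integrable weight here will instead come from combining with $\pd z$). I would therefore instead use the substitution adapted to the operator: set $\rho=z^{1/2}$ (so $z\,\pd z=\tfrac12\rho\,\pd\rho$ and $z\,dz=\tfrac12\rho^3\,d\rho$, roughly) to turn the degenerate $z$-weight into a regular one near $\rho=0$, reducing to a standard Hardy/trace inequality on a weighted half-line where the weight is bounded below. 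Carrying this through yields the pointwise bound above with a fixed constant $C$ independent of $u$, hence independent of $\xi$.

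\medskip

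Applying this with $u(z)=\hat\phi(t,\xi,z)$ for each fixed $\xi$, I get that $\lim_{z\to0}\hat\phi(t,\xi,z)=:\widehat{\phi_0}(t,\xi)$ exists for a.e.\ $\xi$, with
$$
|\widehat{\phi_0}(t,\xi)|^2\leq C\Bigl(\int_0^1 z\,|\pd z\hat\phi(t,\xi,z)|^2\,dz+\int_0^1 z^{-1/n}\,|\hat\phi(t,\xi,z)|^2\,dz\Bigr)+C\int_0^\infty z\,|\hat\phi|^2\,dz,
$$
the last term handling the part $z\geq1$. Integrating in $\xi$ over $\R^n$ and using the finiteness of $b(\phi(t,\cdot),\phi(t,\cdot))$ and $\|\phi(t,\cdot)\|_H$ noted above shows $\widehat{\phi_0}(t,\cdot)\in L^2(\R^n)$, and by Plancherel the function $\phi_0(t,\cdot)$ --- the candidate trace --- lies in $L^2(\R^n)$. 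To upgrade the pointwise-in-$\xi$ convergence to convergence of $\phi(t,\cdot,z)$ in $L^2(\R^n)$ as $z\to0$, I would show $\int_{\R^n}|\hat\phi(t,\xi,z)-\widehat{\phi_0}(t,\xi)|^2\,d\xi\to0$ by dominated convergence: the integrand tends to $0$ for a.e.\ $\xi$, and it is dominated by a fixed $L^1(\R^n)$ function built, as above, from the ($\xi$-integrable) energy densities, using that $|\hat\phi(t,\xi,z)|^2\leq C(\cdots)$ uniformly in $z\in(0,1]$ by the same trace estimate applied on $(0,z)$ versus $(z,1)$.

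\medskip

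\emph{Main obstacle.} The delicate point is the uniform-in-$\xi$ trace inequality near $z=0$: the naive energy estimate only gives a logarithmic modulus of continuity in $z$, which does not by itself bound the boundary value. One must exploit the specific degeneracy $z^{-1+1/n}$ of the metric — i.e.\ the interplay between the weight $z$ on $|\pd z\hat\phi|^2$ and the weight $z^{-1/n}$ on $|\hat\phi|^2$ — through the change of variables that regularizes the normal weight, so that a genuine (non-logarithmic) Hardy-type inequality applies and produces a constant independent of the transverse frequency. Once that inequality is in hand, the passage from Fourier modes back to $L^2(\R^n)$ and the dominated-convergence argument for the limit are routine, given Theorem \ref{solution}(3).
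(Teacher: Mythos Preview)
Your approach has a genuine gap at its core: the one-dimensional trace inequality you want,
$$
|u(0^+)|^2\leq C\Bigl(\int_0^1 z\,|u'|^2\,dz+\int_0^1 z^{-1/n}\,|u|^2\,dz\Bigr),
$$
is simply false. Take $u(z)=\sin(\log|\log z|)$ (smoothly cut off near $z=1$): then $|u|\leq 1$ so $\int_0^1 z^{-1/n}|u|^2\,dz<\infty$, while $z|u'(z)|^2\leq \dfrac{C}{z(\log z)^2}$, which is integrable on $(0,1)$; yet $u(0^+)$ does not exist. The substitution $\rho=z^{1/2}$ does not help, since $\int_0^1 z|u'|^2\,dz=\tfrac12\int_0^1\rho\,|\partial_\rho u|^2\,d\rho$ and the degenerate weight $\rho$ reappears. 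More conceptually, Lemma~\ref{densidad} of the paper says $C_0^\infty(\Omega)$ is dense in $\E$, and the authors spell out the consequence: there is \emph{no} continuous trace operator on $\E$. Any inequality bounding the trace by quantities controlled by the $\E$-norm alone is therefore doomed; and your $\xi$-integration step compounds the problem, since $\int_{\R^n}\int_0^1 z^{-1/n}|\hat\phi|^2\,dz\,d\xi$ is not dominated by $b(\phi,\phi)+\|\phi\|_H^2$ (the energy carries a factor $|\xi|^2$, the $H$-norm carries a weight $z$).

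What is missing is the use of the \emph{stronger} information $\phi(t,\cdot)\in D$, i.e.\ $A\phi(t,\cdot)\in H$, guaranteed by Theorem~\ref{solution}(3). The paper's route is to take the Fourier transform in $x$, solve the resulting ODE $-\frac1z\partial_z(z\,\partial_z\hat\varphi)+\frac{|\xi|^2}{z^{1+1/n}}\hat\varphi=\hat\psi$ explicitly via modified Bessel functions, and from this representation extract the pointwise bound $\sup_{z\leq 1}\|z^{1/n}\partial_z\varphi(z)\|_{L^2(\R^n)}\leq C\|A\varphi\|_H$ (Lemma~\ref{pz}). Since $z^{-1/n}$ is integrable on $(0,1)$, the formula $\varphi(0)=-\int_0^1\partial_z\varphi(z)\,dz$ then gives existence of the trace in $L^2(\R^n)$ with $\|\varphi(0)\|_{L^2}\leq C(\|A\varphi\|_H+\|\varphi\|_\E)$. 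The $\|A\varphi\|_H$ on the right is essential and cannot be replaced by the $\E$-norm.
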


By standard arguments, at each fixed $z>0$, the trace
$\phi(t,\cdot,z)$ on the hyperplane $\Gamma_z = : \{(x,z); x \in
\mathbb{R}^n\}$ exists in the Sobolev space ${H}^1(\mathbb{R}^n)$
(even in ${H}^{3/2}(\mathbb{R}^n)$, see for instance \cite{LM}).
This theorem says that the trace of $\phi$ on the boundary exists
as the strong limit of $\phi(t,\cdot,z)$, for the $L^2$-topology,
of its traces on $\Gamma_z$ when $z \rightarrow 0$.

 We could have written down the theorem above with stronger topologies,
namely that of ${H}^1$ and even ${H}^{3/2}$. But there exists a more
striking result. If we denote
by $\phi(t,\cdot,0)$ the trace of $\phi(t,\cdot)$ on $\partial \Omega$, we have

\begin{theorem}\label{trazat2}
Under the preceding hypotheses, $\phi(t,\cdot,0) \in
{H}^{\frac{2n}{n-1}}(\mathbb{R}^n)$ \- for each $t>0$\-, and we
have more precisely $\phi(\cdot,\cdot,0)\in C([0,\infty) ;
{H}^{\frac{2n}{n-1}}(\mathbb{R}^n))$.
\end{theorem}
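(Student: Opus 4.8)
The plan is to reduce to a one-dimensional problem by the partial Fourier transform in $x\in\R^n$, and then to establish an explicit trace formula on each Fourier slice. Writing $\hat\varphi(\xi,z)$ for that transform and $\mu=|\xi|$, the operator $A$ becomes the direct integral of the one-dimensional operators
$$
L_\xi=-\frac1z\,\pd z\!\left(z\,\pd z\cdot\right)+\frac{\mu^2}{z^{1+\frac1n}}
$$
on $L^2\big((0,\infty),z\,dz\big)$, with $\widehat{A\varphi}(\xi,\cdot)=L_\xi\hat\varphi(\xi,\cdot)$. By Theorem \ref{esencial} together with this decomposition, $\varphi\in D$ amounts to saying that $\hat\varphi(\xi,\cdot)$ lies, for a.e.\ $\xi$, in the domain $D_\xi$ of the unique self-adjoint extension of $L_\xi$ with domain in the one-dimensional energy space, and that $\int_{\R^n}\|L_\xi\hat\varphi(\xi,\cdot)\|_{L^2(z\,dz)}^2\,d\xi<\infty$. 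I would apply this to $\varphi=\phi(t,\cdot)$, which belongs to $D$ for each $t$ by Theorem \ref{solution}(3).

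On a fixed slice, the substitution $\rho=z^{\frac{n-1}{2n}}$ turns $L_\xi u=0$ into the modified Bessel equation of order $0$ with parameter $\frac{2n}{n-1}\mu$. Its two solutions are $u_0(z)=I_0\!\big(\frac{2n}{n-1}\mu\,z^{\frac{n-1}{2n}}\big)$, bounded with finite energy at $z=0$ but growing as $z\to\infty$, and $u_\infty(z)=K_0\!\big(\frac{2n}{n-1}\mu\,z^{\frac{n-1}{2n}}\big)$, decaying exponentially as $z\to\infty$ and behaving like $-\frac{n-1}{2n}\log z$ near $z=0$, so that $z\,u_\infty'(z)\to-\frac{n-1}{2n}$ as $z\to0$. (Equivalently, scaling forces $u_\infty^\xi(z)=u_\infty^{(1)}\big(\mu^{\frac{2n}{n-1}}z\big)$, and limit-circle theory at $z=0$ together with limit-point theory at $z=\infty$ provides these asymptotics.) The first key step is the trace formula
$$
w(0)=\frac{2n}{n-1}\,\big\langle L_\xi w,\,u_\infty\big\rangle_{L^2((0,\infty),z\,dz)}\qquad(w\in D_\xi),
$$
obtained from Green's identity $\langle L_\xi w,u_\infty\rangle-\langle w,L_\xi u_\infty\rangle=\big[\,z\big(\overline{u_\infty'}\,w-w'\,\overline{u_\infty}\big)\big]_\e^R$ by letting $\e\to0$ and $R\to\infty$: at $R=\infty$ the bracket vanishes by the exponential decay of $u_\infty$ against the at most polynomial growth of $w,w'$ (local $H^2$ bounds from $L_\xi w\in L^2$); at $\e=0$ it tends to $-\frac{n-1}{2n}w(0)$, since membership of $w$ in $D_\xi$ forces both that $w(0)=\lim_{z\to0}w(z)$ exists and that $z\,w'(z)=\int_0^z\!\big(\frac{\mu^2}{s^{1/n}}w(s)-s\,(L_\xi w)(s)\big)\,ds$ tends to $0$ fast enough to kill $z\,w'(z)\,\overline{u_\infty(z)}=O\!\big(|\log z|\,|z\,w'(z)|\big)$. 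Extracting this precise near-boundary control of $w$ and $zw'$ from $w\in D_\xi$ — essentially the quantitative form of the trace existence already behind Theorem \ref{trazat1} — is, I expect, the main technical obstacle.

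The second key step is the computation $\|u_\infty\|_{L^2((0,\infty),z\,dz)}=C\,|\xi|^{-\frac{2n}{n-1}}$: the substitution $x=\frac{2n}{n-1}\mu\,z^{\frac{n-1}{2n}}$ gives $\|u_\infty\|_{L^2(z\,dz)}^2=\mathrm{const}\cdot\mu^{-\frac{4n}{n-1}}\int_0^\infty K_0(x)^2\,x^{\frac{3n+1}{n-1}}\,dx$ with convergent integral, or, more simply, the scaling relation above gives it at once. This is exactly where the exponent $\frac{2n}{n-1}$ is produced. Combining the trace formula with Cauchy--Schwarz,
$$
\big|\hat\phi(t,\xi,0)\big|\le C\,|\xi|^{-\frac{2n}{n-1}}\,\big\|\widehat{A\phi}(t,\xi,\cdot)\big\|_{L^2((0,\infty),z\,dz)},
$$
hence $\int_{\R^n}|\xi|^{\frac{4n}{n-1}}\,|\hat\phi(t,\xi,0)|^2\,d\xi\le C^2\|A\phi(t,\cdot)\|_H^2<\infty$. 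Together with Theorem \ref{trazat1}, which gives $\phi(t,\cdot,0)\in L^2(\R^n)$ and thereby controls the low frequencies $|\xi|\le1$, this yields $\phi(t,\cdot,0)\in H^{\frac{2n}{n-1}}(\R^n)$ with $\|\phi(t,\cdot,0)\|_{H^{2n/(n-1)}}\le C\,\|\phi(t,\cdot)\|_D$.

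Finally, for the continuity in $t$ I would apply the last estimate to the difference $\phi(t,\cdot)-\phi(t',\cdot)\in D$ and use that $t\mapsto\phi(t,\cdot)$ is continuous into $D$ by Theorem \ref{solution}(3), together with the $L^2$-continuity in $t$ of the trace contained in Theorem \ref{trazat1} for the low-frequency part. This gives $\|\phi(t,\cdot,0)-\phi(t',\cdot,0)\|_{H^{2n/(n-1)}}\to0$ as $t'\to t$, i.e.\ $\phi(\cdot,\cdot,0)\in C\big([0,\infty);H^{\frac{2n}{n-1}}(\R^n)\big)$.
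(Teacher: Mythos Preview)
Your proposal is correct and follows essentially the same strategy as the paper: take the partial Fourier transform in $x$, express the trace on each slice through the $K_0$-based solution of the homogeneous equation, and conclude by Cauchy--Schwarz together with the scaling $\|u_\infty^\xi\|_{L^2(z\,dz)}=C\,|\xi|^{-2n/(n-1)}$. The one genuine difference is how the trace formula is obtained. The paper first writes down the variation-of-parameters representation
\[
\hat\varphi(\xi,z)=\Big(\int_z^\infty s\,\hat\psi(\xi,s)\,u_1(s|\xi|^{1/\alpha_n})\,ds\Big)u_0(z|\xi|^{1/\alpha_n})+\Big(\int_0^z s\,\hat\psi(\xi,s)\,u_0(s|\xi|^{1/\alpha_n})\,ds\Big)u_1(z|\xi|^{1/\alpha_n}),
\]
argues from the Bessel asymptotics that the two free constants must vanish when $\varphi\in D$, and then reads off $\hat\varphi(\xi,0)=\int_0^\infty s\,\hat\psi(\xi,s)\,u_1(s|\xi|^{1/\alpha_n})\,ds$; your Green's identity yields precisely the same formula without passing through the explicit Green kernel. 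The paper's route costs more up front (the asymptotics of Remark~\ref{behavior} and the estimates of Lemma~\ref{pz}) but also delivers more than is needed for Theorem~\ref{trazat2}: the identification $\Delta\varphi(0)=-\lim_{z\to0}(1-\tfrac1n)z^{1/n}\pd z\varphi(z)$ and the sharpness statement in Lemma~\ref{laplaciano}. For the theorem itself your argument is the more direct one; the ``main technical obstacle'' you flag---that $w\in D_\xi$ forces $w(0)$ to exist and $z\,w'(z)\log z\to0$---is exactly the information the paper extracts from the representation formula and the behaviour of $u_0,u_1$ near the origin.
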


If it was a classical case, the trace of $\phi(t)$ would be at
most in ${H}^{3/2}(\mathbb{R}^n)$ because $\phi(t)$ would be in
${H}^{2}(\Omega)$. This is what happens for $\phi(t,\cdot,z)$ at
each $z>0$, which has no reason to be in
${H}^{\frac{2n}{n-1}}(\mathbb{R}^n)$ unless additional assumptions
on $\phi$ are made. What happens here is a compensation phenomenon
between the normal and the tangential degeneracies of the
coefficients of $A$ near the boundary, so that there is a gain of
regularity on the trace $\phi(t,\cdot,0)$ with respect to the
regularity of $\phi(t,\cdot,z)$, $z>0$.

 The theorems stated above will be deduced from the
results of the following section.

\section{The domain of $A$ and its properties }

\subsection{The space $\E$ and the definition of $A$}

\begin{lem}\label{densidad}
  $ C_0^{\infty}(\Omega)$ is dense in $\mathcal{E}$.
\end{lem}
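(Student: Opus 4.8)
The plan is to approximate an arbitrary $\varphi\in\mathcal{E}$ in two stages: first cut off the tangential variables $x\in\mathbb{R}^n$ far away, then handle the delicate region near the singular boundary $z=0$ (and also near $z=\infty$). The tangential truncation is the easy part: given $\varphi\in\mathcal{E}$, pick $\chi_R(x)=\chi(x/R)$ with $\chi\in C_0^\infty(\mathbb{R}^n)$, $\chi\equiv 1$ near $0$; then $\chi_R\varphi\to\varphi$ in $H$ by dominated convergence, and $b(\chi_R\varphi-\varphi,\chi_R\varphi-\varphi)\to 0$ because $\nabla\chi_R=O(1/R)$ while the weight $z^{-1/n}(\nabla\varphi)^2$ is integrable, so the cross term and the $|\nabla\chi_R|^2|\varphi|^2$ term both vanish in the limit. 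After this reduction we may assume $\varphi$ is supported in a slab $\{|x|\le R\}$, and by a standard mollification in $x$ (convolution commutes with $\partial_z$ and with multiplication by functions of $z$ only) we may further assume $\varphi$ is smooth in $x$.

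The heart of the matter is the cutoff in $z$ near $0$. The key observation is that the normal part of the energy is $\int z|\partial_z\varphi|^2\,d\mu$, i.e. the one-dimensional weight $z\,dz$ on $(0,\infty)$, for which the point $z=0$ is at \emph{infinite distance} in the natural metric $dz/\sqrt{z}\cdot\sqrt{z}$... more precisely, the relevant fact is that $\int_0^1 \frac{dz}{z}=\infty$, so a logarithmic cutoff works: set $\psi_\delta(z)=1$ for $z\ge\sqrt\delta$, $\psi_\delta(z)=\frac{\log(z/\delta)}{\log(1/\sqrt\delta)}$ for $\delta\le z\le\sqrt\delta$, and $\psi_\delta(z)=0$ for $z\le\delta$. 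Then $\int z|\psi_\delta'(z)|^2\,dz = \frac{1}{(\log(1/\sqrt\delta))^2}\int_\delta^{\sqrt\delta}\frac{dz}{z} = \frac{1}{\log(1/\sqrt\delta)}\to 0$. Hence $b(\psi_\delta\varphi-\varphi,\cdot)$ is controlled: the term $\int z|\psi_\delta'|^2|\varphi|^2\,d\mu$ is bounded by $\|\varphi\|_{L^\infty}^2$ times that vanishing integral if $\varphi$ is bounded, and in general one truncates $\varphi$'s height first, or uses that $\varphi\in H^1_{loc}$ gives enough local control; the remaining pieces ($\int_{z\le\sqrt\delta} z|\partial_z\varphi|^2$, and all the tangential-gradient pieces $\int z^{-1/n}|\nabla\varphi|^2$ over $\{z\le\sqrt\delta\}$) go to $0$ by dominated convergence since the full integrals are finite. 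An analogous (even easier) cutoff removes the region $z\ge M$, using $\int_M^\infty \frac{dz}{z}=\infty$ again. At this stage $\varphi$ is compactly supported in $\Omega$ and lies in $\mathcal{E}$, hence in $H^1$ of a fixed relatively compact open set on which the operator is uniformly elliptic with smooth coefficients; a final mollification in all variables (or the classical density of $C_0^\infty$ in $H^1_0$ of a smooth bounded domain) produces the desired $C_0^\infty(\Omega)$ approximants, converging in $\mathcal{E}$ because on that fixed compact set the $\mathcal{E}$-norm is equivalent to the $H^1$-norm.

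The main obstacle I expect is making the near-boundary estimate uniform without assuming $\varphi\in L^\infty$: the product $\psi_\delta'\varphi$ must be shown small in the weighted $L^2(z\,d\mu)$ norm, and a priori $\varphi$ need only be in $H^1_{loc}\cap H$. The remedy is to combine two facts — first truncate the \emph{values} of $\varphi$ via $T_k(\varphi)=\max(-k,\min(k,\mathrm{Re}\,\varphi))+\dots$, which is a contraction for the Dirichlet form $b$ and converges to $\varphi$ in $\mathcal{E}$ (standard Markovian property of $b$), reducing to bounded $\varphi$; or alternatively exploit a weighted one-dimensional Hardy/Poincaré-type inequality showing $\int_0^{\sqrt\delta}|\varphi|^2 z\,d\mu \lesssim \big(\log\tfrac1\delta\big)^{-1}\!\big(\text{energy}\big)+o(1)$, so that the logarithmic gain in $\int z|\psi_\delta'|^2$ still beats any mild growth of $\varphi$ near $z=0$. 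Either route closes the argument; the Markovian-truncation route is cleanest and is the one I would write up.
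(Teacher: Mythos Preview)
Your core idea—the logarithmic cutoff near $z=0$ exploiting $\int_0^1 \frac{dz}{z}=\infty$—is exactly what the paper uses, and your sketch would compile into a correct proof. The paper takes a shorter route: it first invokes Remark~2.1 (density of $C_c^\infty(\Omega)$, i.e.\ restrictions of test functions on $\R^{n+1}$, in $\E$) and so only needs to approximate a smooth, bounded, compactly-supported-in-$\bar\Omega$ function; this bypasses entirely the tangential cutoff, the large-$z$ cutoff, and the Markovian truncation you worry about. The remaining step is then just your near-boundary cutoff (the paper uses $h_\varepsilon(z)=\varepsilon^{-1}|\log z|^{-\varepsilon}$ rather than your two-scale logarithmic $\psi_\delta$, but the decisive computation $\int_0^1 z\,|h_\varepsilon'|^2\,dz\to 0$ is the same), followed by a translation $z\mapsto z-\alpha$ and mollification.

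One step in your write-up is under-justified as stated. You call the tangential truncation ``the easy part'' and assert that the term $\int_\Omega z^{-1/n}\,|\nabla\chi_R|^2\,|\varphi|^2\,d\mu$ vanishes ``because $\nabla\chi_R=O(1/R)$''. But that presumes $\int_\Omega z^{-1/n}|\varphi|^2\,d\mu<\infty$, which is \emph{not} part of the definition of $\E$ (the weight $z^{-1/n}$ appears only against $|\nabla_x\varphi|^2$, not $|\varphi|^2$). This is in fact true, but it requires an argument: writing $\varphi(\cdot,z)=\varphi(\cdot,z_0)-\int_z^{z_0}\partial_z\varphi(\cdot,s)\,ds$ and using Cauchy--Schwarz with the normal energy gives $\|\varphi(\cdot,z)\|_{L^2(\R^n)}\le C(1+|\log z|^{1/2})$ for small $z$, whence $\int_0^1 z^{-1/n}\|\varphi(\cdot,z)\|_{L^2}^2\,dz<\infty$ since $1/n<1$; the region $z\ge 1$ is controlled by the $H$-norm. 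Alternatively (and more in the spirit of the paper), you can avoid the issue by reversing the order: do the Markovian truncation and the $z$-cutoffs first, after which $\varphi$ lives in a slab $\{a\le z\le M\}$ where all weights are comparable to constants and the tangential cutoff is genuinely trivial.
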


 We pointed out in section \ref{massles} that
$C_c^{\infty}(\Omega)$ is included and dense in $\E$. This Lemma,
hence, implies that {\bf{there is no trace operator in $\E$}} .
Let us be more precise. We claim that there exists no topological
linear space $F$ and no operator $T$ such that
\ \\
i) $T$ is linear and continuous from $\E$ to $F$.
\ \\
ii) If $\varphi \in C_c^{\infty}(\Omega)$, then $\varphi(\cdot ,
0) \in F$, and $T \varphi = \varphi(\cdot , 0)$.

 Indeed, property ii) implies that $T$ vanishes on
$C_0^{\infty}(\Omega)$ but not on $\E$, which contradicts   i) and
Lemma \ref{densidad}.

 Let us now prove the Lemma. It suffices to approximate
in $\E$ any $\varphi \in C_c^{\infty}(\Omega)$ by functions in
$C_0^{\infty}(\Omega)$. Let $\varphi \in C_c^{\infty}(\Omega)$. We
first construct $\varphi_{\varepsilon} \in C_c(\Omega) \cap \E$
for all $\varepsilon >0$ such that $\varphi_{\varepsilon} (x,0)
=0$ and $\displaystyle{\lim_{\varepsilon\to
0}\varphi_{\varepsilon}= \varphi \ \mbox{ in} \ \E}$. To this
purpose, we set
$$
\varphi_{\varepsilon}(x,z)= \varphi(x,z)h_{\varepsilon}(z),
$$
where
$$
h_{\varepsilon}(z)=\left\{\begin{array}{cc}
\frac{1}{\ve} |\log z|^{-\ve} \ &\ \mbox{if}\ \ z\leq z(\ve)\\
 1 \  &\ \mbox{if} \ \ z> z(\ve),
\end{array}\right.
$$
with $\displaystyle{|\log z(\ve)|= \ve^{-\frac1{\ve}}}$; then
 $\varphi_{\varepsilon}(x,z)\in C_c(\Omega) \cap \E$ for all $ \ve >0$.
Furthermore, notice that
$$
\int_0^{\infty}| \pd z h_{\varepsilon}|^2z \, dz = \int_{z\leq
z(\ve)}|\log z|^{-2-2\ve}\;\frac{1}{z}\;dz \underset{\ve \to
0}{\longrightarrow}0.
$$
Using this observation and the dominated convergence theorem, it is straightforward to obtain
\begin{equation}\label{a'}
\|\varphi_{\varepsilon}- \varphi \|_{\E}\underset{\ve \to
0}{\longrightarrow}0.
\end{equation}
Then, for all $\varepsilon >0, \alpha>0$, we define
$$
\varphi_{\varepsilon, \alpha}(\cdot, z)=\left\{\begin{array}{cc}
\varphi_{\varepsilon}(\cdot, z-\alpha) &\ \mbox{if}\ z\geq \a,\\
0 &\ \mbox{if} \  z\leq \a.
\end{array}\right.
$$
Since $\varphi_{\varepsilon}$ vanishes on $\partial \Omega$,
we have
$$ \varphi_{\varepsilon, \alpha}
\underset{\alpha \to
0^+}{\longrightarrow}\varphi_{\varepsilon}.$$ With (\ref{a'}) and
the fact that $\varphi_{\varepsilon, \alpha}\in C_0(\Omega) \cap
\E$, this gives the density in $\E$ of $C_0(\Omega) \cap \E$.
The proof of the lemma finishes with a standard
regularization scheme.
\begin{flushright}
$\Box$
\end{flushright}

 We define the operator $\mathcal{A}$ on the domain
$$
D= \{\varphi \in \mathcal{E} \ :\, \exists\  C> 0 \ \forall \ \eta
\in \E, \ |b(\varphi,\eta)|\leq C \|\eta\|_H \}
$$
by the classical procedure: if $\varphi \in D$, there exists $\psi \in H$ such that
$\displaystyle{b(\varphi,\eta)=<\psi\, ,\eta >}$ for all $\eta \in
\E$, and we set $\psi=\mathcal{A} \varphi$.

\medskip

{\bf Proof of Theorem \ref{esencial}:}

By the symmetry of the bilinear form $b(\varphi, \eta)$ and by the definition of
$D$, $(\mathcal{A}, D)$ is a self-adjoint extension of
$(A,C_0^{\infty}(\Omega))$. Now by Lemma \ref{densidad},
$(\mathcal{A}, D)$ is the Friedrichs extension of
$(A,C_0^{\infty}(\Omega))$; thus it is the only extension with
domain included in $\E$.

 In order to see that
$(A,C_0^{\infty}(\Omega))$ is not essentially self-adjoint it is
enough to give a function $\eta$ such that $\eta \in D(A^*)$,
where
$$ D(A^*)= \{\eta \in H \ :\, \exists\ C> 0: \forall
\varphi \in C_0^{\infty}(\Omega), \ |<\eta, A \varphi>|\leq C
\|\varphi\|_H \},
$$
and $\eta \notin \E$.

Taking $\eta$ such that its Fourier transform in $x$ is
$\hat{\eta}(\xi,z)= K_0(\frac{2n}{n-1}\ z^{\frac{n-1}{2n}}|\xi|)$,
where $K_0$ is the modified Bessel function of the second kind, we
have that $\eta \in H, \ A^*\eta=0$,  and $\eta \notin \E$ {(the
properties of the function $\eta$ are discussed below, see Remark
\ref{behavior})},  thus the proof is finished.
\begin{flushright}
$\Box$
\end{flushright}
\medskip

 As already mentioned, from now on we do not distinguish
between both operators, and use the letter $A$ to denote them.

\begin{rem}
Let $\psi$ in $H$. Then, if we consider the equation
$\psi=A\varphi$, where $\varphi\in H$, in the distribution sense,
either it has a unique solution or it has no solution.
Prescribing some boundary condition would lead to an
overdetermined problem, with no solution in general.

\end{rem}

\subsection{Traces of functions in  $D$}

Our second key result is that although there is no trace
operator in the whole space $\E$, every function in the domain $D$
does have a trace on $\partial \Omega$. If $\varphi \in D$ we
denote by $\varphi(z)$ its trace on the subspace $\Gamma_z=
\R^n\times \{z\}$, $z>0$. Such a trace exists thanks to classical
results on elliptic operators with $C^{\infty}$ coefficients,
\cite{LM}.

\medskip
 In order to study the behavior of $\varphi(z)$ in
$L^2(\R^n)$, we consider $\psi \in H$ such that $A\varphi= \psi$.
Taking the Fourier Transform in $x$, we write this equation as
\beqa \label{fourier} \hat{A \varphi}(\xi,z) =-\frac{1}{z}\;\pd
z\big(z\, \pd z \hat{\varphi}(\xi,z)
\big)+\frac{|\xi|^2}{z^{1+\frac1n}}\hat{\varphi}(\xi,z)\,=\hat{\psi}(\xi,z).
\eeqa

Any solution of the differential equation
\begin{equation}\label{ordinaria}
z\, u''(z)+u'(z)-\frac{1}{z^{\frac1n}}u(z)\,=-z\, v(z)
\end{equation}
can be written as
$$u(z)=\left(\int_z^{\infty}s\,v(s)\, u_1(s)\,
ds+C_0\right)\, u_0(z)+
 \left( \int_0^z s\, v(s)\, u_0(s)\, ds+ C_1\right)\, u_1(z),
 $$
with $u_0(z)=I_0\left(\frac{z^{\alpha_n}}{\alpha_n}\right)$ and
$u_1(z)=\frac{1}{\alpha_n}K_0\left(\frac{z^{\alpha_n}}{\alpha_n}\right)$,
where $\alpha_n=\frac{1}{2}-\frac{1}{2n}$ and $I_0$ and $K_0$ are
modified Bessel functions and solutions of the homogeneous
ordinary differential equation associated with (\ref{ordinaria}).

\begin{rem}\label{behavior}
Near the origin, the behavior of the solutions $u_0(z)$ and
$u_1(z)$ is:

$$
\hskip-1,5cm u_0(z)=
1+\frac{z^{{1}-\frac{1}{n}}}{(1-\frac{1}{n})^2}\,
+O\left(z^{{2}-\frac{2}{n}}\right),\;\; u_0'(z)=
\frac{z^{-\frac{1}{n}}}{(1-\frac{1}{n})}\,
+O\left(z^{{1}-\frac{2}{n}}\right),
$$
$$
u_1(z)=-\log (z)+ c + O\left(z^{{1}-\frac{1}{n}}\right),
u_1'(z)=-\frac{1}{z}+c\ {z^{-\frac{1}{n}}}+O\left(\log
(z)z^{-\frac{1}{n}}\right),
$$
where $c$ is a different constant in each equation.

Also, for large values of $z$, we have
\begin{eqnarray} u_0(z)&=&\left({\sqrt{\frac{\alpha_n}{2\pi}}}\,\,z^{-\frac{\alpha_n}2}
+O\left(z^{-\frac{3\alpha_n}2}\right)\right)e^{\frac{z^{\alpha_n}}{\alpha_n}}\nn
u_0'(z)&=&\left({\sqrt{\frac{\alpha_n}{2
\pi}}}\,\,z^{-1+\frac{\alpha_n}2}+O\left(z^{-1-\frac{\alpha_n}2}\right)\right)e^{\frac{z^{\alpha_n}}{\alpha_n}}\nn
u_1(z)&=& \left(\sqrt{\frac{
\pi}{2\,\alpha_n}}\,z^{-\frac{\alpha_n}2}+O\left(z^{-\frac{3\alpha_n}2}\right)\right)e^{-\frac{z^{\alpha_n}}{\alpha_n}}\nn
   u_1'(z)&=& -\left(\sqrt{\frac{
\pi}{2\,\alpha_n}}\,z^{-1+\frac{\alpha_n}2}+O\left(z^{-1-\frac{\alpha_n}2}\right)\right)e^{-\frac{z^{\alpha_n}}{\alpha_n}}.\nonumber
\end{eqnarray}
\end{rem}

 As a consequence of this remark we obtain that for $\psi
\in L^2((0,\infty), z\, dz)$, the only solution $u$ to
(\ref{ordinaria}) such that $u$ and  $\partial_z u$ belong to
$L^2((0,\infty), z\, dz)$ is
$$
u(z)=\left(\int_z^{\infty}s\,v(s)\, u_1(s)\, ds \right)\,
u_0(z)+ \left(\int_0^z s\, v(s)\,u_0(s)\, ds\right)\, u_1(z).
$$

By scaling in the equation (\ref{ordinaria}), we obtain that the only solution to
(\ref{fourier}) which belongs to $ \hat{\E}=: \{\hat{\varphi}: \varphi \in \E \}$  is
 \beqa\label{fourierz}
 \hat{\varphi}(\xi,z) &=& \left(\int_z^{\infty} s\, \hat{\psi}(\xi,s)\, u_1(s|\xi|^{\frac1{\alpha_n}})
 \, ds\right)\ u_0(z|\xi|^{\frac1{\alpha_n}})\nn
  &+&  \left(\int_0^z s\, \hat{\psi}(\xi,s)\, u_0(s |\xi|^{\frac1{\alpha_n}})\, ds
  \right)u_1(z|\xi|^{\frac1{\alpha_n}}).
 \eeqa
 \\
\begin{lem}\label{pz}
Assume $\varphi \in D$ and Supp $\varphi\subset  \R^n \times [0,1]
$. There exists a constant $C$ independent of $\varphi$ such that
$$
 \sup_{z \leq 1}\|z^{\frac1n}\, \partial_z \varphi(z)\|_{L^2(\R^n)}\leq \, C \|A \varphi\|_{H}.
$$
\end{lem}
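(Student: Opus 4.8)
The plan is to work on the Fourier side in $x$, where the problem decouples into the family of ODEs (\ref{ordinaria}) indexed by $|\xi|$, and to use the explicit representation (\ref{fourierz}) of $\hat\varphi$ in terms of $\hat\psi := \widehat{A\varphi}$ and the modified Bessel solutions $u_0,u_1$. By Plancherel it suffices to bound, uniformly in $z\le 1$,
$$
\int_{\R^n} |\xi|^{2/n}\, |\partial_z \hat\varphi(\xi,z)|^2 \, d\xi
\;\le\; C \int_{\R^n}\!\!\int_0^\infty |\hat\psi(\xi,s)|^2\, s\, ds\, d\xi .
$$
After the scaling $s \mapsto s|\xi|^{1/\alpha_n}$ used to pass from (\ref{ordinaria}) to (\ref{fourierz}), the factor $|\xi|^{2/n}$ should be absorbed into powers of the scaling variable, reducing everything to a single scalar inequality for the operator $v \mapsto u$ of (\ref{ordinaria}): one wants
$$
\sup_{z\le 1}\; z^{2/n}\,|u'(z)|^2 \;\le\; C \int_0^\infty |v(s)|^2\, s\, ds
$$
for $v$ supported near the origin (the support condition $\mathrm{Supp}\,\varphi \subset \R^n\times[0,1]$ is what lets us restrict attention to $s\le$ const and $z\le 1$, and keeps the ``large $z$'' asymptotics of Remark \ref{behavior} out of play).

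Concretely, I would differentiate the representation
$$
u(z)=\Big(\int_z^{\infty}s\,v(s)\,u_1(s)\,ds\Big)u_0(z)+\Big(\int_0^z s\,v(s)\,u_0(s)\,ds\Big)u_1(z)
$$
in $z$. The two terms coming from differentiating the integrals cancel (they are $\mp z v(z) u_1(z) u_0(z)$), so
$$
u'(z)=\Big(\int_z^{\infty}s\,v(s)\,u_1(s)\,ds\Big)u_0'(z)+\Big(\int_0^z s\,v(s)\,u_0(s)\,ds\Big)u_1'(z).
$$
Now I insert the small-$z$ asymptotics from Remark \ref{behavior}: $u_0'(z)\sim c\,z^{-1/n}$, $u_1'(z)\sim -z^{-1}$, $u_0(z)\sim 1$, $u_1(z)\sim -\log z$. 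So $z^{1/n}u_0'(z)$ is bounded, and $z^{1/n}u_1'(z)\sim -z^{1/n-1}$. The first term is then controlled by $\int_z^\infty s|v(s)|\,|u_1(s)|\,ds$, and by Cauchy--Schwarz with weight $s\,ds$ this is $\le \big(\int_0^\infty |v|^2 s\,ds\big)^{1/2}\big(\int_0^{\text{const}} s\,|u_1(s)|^2\,ds\big)^{1/2}$, and $\int_0^{\cdot} s\,(\log s)^2\,ds<\infty$, giving a uniform bound. For the second term I need $z^{1/n-1}\int_0^z s\,|v(s)|\,|u_0(s)|\,ds$ to be bounded; again Cauchy--Schwarz gives $z^{1/n-1}\big(\int_0^\infty|v|^2 s\,ds\big)^{1/2}\big(\int_0^z s\,|u_0(s)|^2\,ds\big)^{1/2}\lesssim z^{1/n-1}\cdot z\cdot\|v\|$, and since $1/n-1+1 = 1/n>0$ this tends to $0$ as $z\to 0$ and stays bounded for $z\le 1$. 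Reassembling the pieces and undoing the $\xi$-scaling yields the claimed estimate with a constant depending only on $n$.

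The main obstacle I anticipate is bookkeeping the scaling exponents: one must check that after substituting $s\to s|\xi|^{1/\alpha_n}$, $z\to z|\xi|^{1/\alpha_n}$ in the $u_0,u_1$ arguments, the prefactor $z^{1/n}$ together with the Jacobian and the $s\,ds$ weight recombines so that all powers of $|\xi|$ cancel exactly — this is precisely the ``compensation between normal and tangential degeneracies'' the introduction advertises, and it is the reason the exponent $1/n$ (rather than something larger) appears. A secondary point of care is that the bound from the second term only decays, rather than being trivially bounded, so one genuinely uses $z\le 1$ and the sign of $1/n$; and one should note that $\hat\psi = \widehat{A\varphi}$ is legitimate precisely because $\varphi\in D$, and that the representation (\ref{fourierz}) is the correct one (i.e.\ the constants $C_0,C_1$ vanish) exactly because $\varphi\in\E$, which was established in the paragraph preceding the lemma.
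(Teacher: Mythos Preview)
Your proposal contains a genuine gap, concentrated in the scaling step. The first display already signals the issue: by Plancherel, what you must bound uniformly in $z\le 1$ is
\[
z^{2/n}\int_{\R^n}|\partial_z\hat\varphi(\xi,z)|^2\,d\xi,
\]
with the factor $z^{2/n}$, not $|\xi|^{2/n}$, sitting outside. More importantly, the claimed reduction to a \emph{single} scalar inequality for the normalized ODE (\ref{ordinaria}) with ``$v$ supported near the origin'' and ``$z\le 1$'' does not go through. The scaling that takes (\ref{fourier}) to (\ref{ordinaria}) is $z\mapsto z|\xi|^{1/\alpha_n}$, $s\mapsto s|\xi|^{1/\alpha_n}$; under it the constraints $z\le 1$ and $\mathrm{supp}\,\hat\psi(\xi,\cdot)\subset[0,1]$ become $w\le |\xi|^{1/\alpha_n}$ and $\mathrm{supp}\,V\subset[0,|\xi|^{1/\alpha_n}]$, which depend on $|\xi|$. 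The powers of $|\xi|$ do not cancel to give your $\xi$-free scalar inequality; what remains is an inequality whose ``$z\le 1$'' box is replaced by ``$w\le W$'' with $W=|\xi|^{1/\alpha_n}$ arbitrary.

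Consequently, the sentence ``the support condition \dots\ keeps the large $z$ asymptotics of Remark \ref{behavior} out of play'' is where the argument fails. The functions $u_0,u_1$ are always evaluated at $z|\xi|^{1/\alpha_n}$ and $s|\xi|^{1/\alpha_n}$, and for large $|\xi|$ these arguments are large even though $z,s\le 1$; in that regime $u_0'$ grows like $\exp\big((z|\xi|^{1/\alpha_n})^{\alpha_n}/\alpha_n\big)$ and only the matching exponential decay of $u_1$ in the integral can compensate. Your Cauchy--Schwarz estimates based on $u_0'\sim c\,z^{-1/n}$, $u_1'\sim -z^{-1}$, $u_1\sim -\log z$ are valid only on the set $\{z|\xi|^{1/\alpha_n}\le 1\}$ and give no control on its complement. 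The paper's proof is organised precisely around this dichotomy: it splits $\R^n$ into $\{z|\xi|^{1/\alpha_n}\ge 1\}$ and $\{z|\xi|^{1/\alpha_n}\le 1\}$, uses the exponential asymptotics of Remark \ref{behavior} together with (\ref{u1}) and (\ref{u0}) on the first region, and an argument close to yours (Minkowski plus the support condition) on the second.
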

\begin{proof}
By derivating (\ref{fourierz}) with respect to $z$, we obtain
\begin{eqnarray}
 \partial_{z}\hat{\varphi}(\xi,z)&=&\left(\int_{z}^{\infty}\!\!\! s\, \hat{\psi}(\xi,s)
 \, u_{1}(s|\xi|^{{\frac1{\alpha_n}}})ds\right) |\xi|^{{\frac1{\alpha_n}}} u'_{0}(z|\xi|^{{\frac1{\alpha_n}}})\nn
  &+& \left(\int_{0}^{z}\!\!\! s\, \hat{\psi}(\xi,s)\, u_{0}(s|\xi|^{{\frac1{\alpha_n}}}) ds\right) |\xi|^{{\frac1{\alpha_n}}}
  u'_{1}(z|\xi|^{{\frac1{\alpha_n}}}).
  \end{eqnarray}
In order to estimate this function in $L^2(\R^n)$, we fix $z$ and
consider first the case $ z|\xi|^{\frac1{\alpha_n}}\geq 1 $. We
have
\begin{eqnarray} \left\|\left(\int_{z}^{\infty}\!\! s\,\hat{\psi}(\xi,s)\,
u_{1}(s|\xi|^{{\frac1{\alpha_n}}})ds
\right)|\xi|^{{\frac1{\alpha_n}}}
u'_{0}(z|\xi|^{{\frac1{\alpha_n}}})\
\mathbf{1}_{\{z|\xi|^{\frac1{\alpha_n}}\geq
1\}}\right\|_{L^2(d\xi)}^2\nn
\label{L21}
\fl
\leq \int_{\{z|\xi|^{\frac1{\alpha_n}}\geq 1\}}\left(\int_{z}^{\infty} s\,
|\hat{\psi}(\xi,s)|^2ds\right)\left(\int_{z}^{\infty} s\,
u^2_{1}(s\, |\xi|^{{\frac1{\alpha_n}}})\, ds\right)
 |\xi|^{\frac2{\alpha_n}} u'^2_{0}(z|\xi|^{\frac1{\alpha_n}})\, d\xi.\nonumber
\end{eqnarray}
Using the behavior of $u_1$ and $u_0$ given in the remark
\ref{behavior}, we obtain
\begin{equation}\label{u1}
\int_{z}^{\infty} s\, u^2_{1}(s|\xi|^{\frac1{\alpha_n}})\, ds \leq
C z^{2-2\alpha_n}|\xi|^{-2}
e^{-2\frac{z^{\alpha_n}}{\alpha_n}|\xi|}
\end{equation}
and
\begin{eqnarray}  \label{L21'}
\left\|\left(\int_{z}^{\infty}\!\! s\,\hat{\psi}(\xi,s)\,
u_{1}(s|\xi|^{\frac1{\alpha_n}})ds \right)|\xi|^{\frac1{\alpha_n}}
u'_{0}(z|\xi|^{\frac1{\alpha_n}})\
\mathbf{1}_{\{z|\xi|^{\frac1{\alpha_n}}\geq
1\}}\right\|_{L^2(d\xi)}^2 \leq C\,\|\psi\|^2_H.\nn
\end{eqnarray}

When $ z|\xi|^{\frac1{\alpha_n}}\geq 1 $, we also have
\begin{equation}\label{u0}
\int_{0}^{z} s\, u^2_{0}(s|\xi|^{\frac1{\alpha_n}})\, ds \leq C
z^{2-2\alpha_n}|\xi|^{-2} e^{2\frac{z^{\alpha_n}}{\alpha_n}|\xi|}.
\end{equation}
This gives
\begin{eqnarray}  \label{L22}
\left\|\left(\int_0^{z} s\, \hat{\psi}(\xi,s)\,
u_{0}(s|\xi|^{\frac1{\alpha_n}})\, ds\right)
|\xi|^{\frac1{\alpha_n}} u'_{1}(z|\xi|^{\frac1{\alpha_n}})\;
\mathbf{1}_{\{z|\xi|^{\frac1{\alpha_n}}\geq
1\}}\right\|_{L^2(d\xi)}^2&& \nn \leq
\int_{\{z|\xi|^{\frac1{\alpha_n}}\geq 1\}}\left(\int_0^{z} s\,
|\hat{\psi}(\xi,s)|^2\, ds\right)\left(\int_0^{z} s\,
u^2_{0}(s|\xi|^{\frac1{\alpha_n}})\, ds\right)
 |\xi|^{8} u'^2_{1}(z|\xi|^{\frac1{\alpha_n}})\, d\xi &&\nn
\leq C \int_{\{z|\xi|^{\frac1{\alpha_n}}\geq
1\}}\left(\int_{z}^{\infty} s\, |\hat{\psi}(\xi,s)|^2\, ds\right)
z^{-\alpha_n}|\xi|^{-1}d\xi\, \leq C \|\psi\|^2_H.&&\nn
\end{eqnarray}

In the case $ z|\xi|^{\frac1{\alpha_n}}\leq 1 $, we use Minkowski
inequality and the support condition on $\varphi$ to obtain

\begin{eqnarray}\label{L23}
\left\|\left(\int_{z}^{\infty} s\, \hat{\psi}(\xi,s)\,
u_{1}(s|\xi|^{\frac1{\alpha_n}})\, ds\right)
|\xi|^{\frac1{\alpha_n}} u'_{0}(z|\xi|^{\frac1{\alpha_n}})\
\mathbf{1}_{\{z|\xi|^{\frac1{\alpha_n}}\leq 1\}}\right\|\nn \leq C
z^{-\frac1n} \int_{z}^{\infty} \sqrt{s}\,
\|\hat{\psi}(s)\|_{L^2(d\xi)}\, ds \leq Cz^{-\frac1n}\|\psi\|_H,
\end{eqnarray}
and
\begin{eqnarray}  \label{L24}
\left\|\left(\int_{0}^{z} s\, \hat{\psi}(\xi,s)\,
u_{0}(s|\xi|^{\frac1{\alpha_n}})\, ds\right)
|\xi|^{\frac1{\alpha_n}} u'_{1}(z|\xi|^{\frac1{\alpha_n}})\
\mathbf{1}_{\{z|\xi|^{\frac1{\alpha_n}}\leq
1\}}\right\|_{L^2(d\xi)}\,\nn \leq C \int_{0}^{z} \frac{s}{z}\,\|
\hat{\psi}(\xi,s)\|_{L^2(d\xi)}\, ds \leq  C\, \|\psi \|_H.
\end{eqnarray}

Thanks to (\ref{L21'}), (\ref{L22}), (\ref{L23}) and (\ref{L24}), and recalling that
 $\psi= A\varphi$,
 the Lemma is proved.
\begin{flushright}
$\Box$
\end{flushright}
\end{proof}

\medskip
 Let now $\varphi \in D$, and assume first that Supp
$\varphi\subset  \R^n \times [0,1]$. Then the last lemma applies,
the formula
$$
\varphi(0)= -\int_0^1 \pd z \varphi(z) dz
$$
defines the trace $\varphi(0)$ of $\varphi$ on $\partial \Omega$,
and gives that $\varphi(0) \in L^2(\R^n)$, with
 \begin{equation}\label{t1}
 \|\varphi(0)\|_{L^2(\R^n)}\leq C \|A \varphi\|_H.
\end{equation}

In the general case, where $\varphi$ does not satisfy any support condition, we modify
$\varphi$ in  $\tilde{\varphi}$ by setting
$$
\tilde{\varphi}(x,z)=\varphi(x,z)h(z),
$$
where $h \in C^{\infty}(\R)$ is such that $h(z)=1$ when $z\leq
\frac12$ and $h(z)=0$ when $z\geq 1$. It is easy to check that
$\tilde{\varphi}\in D$ and that
$$
 \|A \tilde{\varphi}\|_{H}\leq \|A \varphi\|_H + C \|\varphi\|_{\E}.
$$

The inequality (\ref{t1}) applied to $\tilde{ \varphi}$, with the fact that
 $\tilde{ \varphi}=\varphi$ on $\R^n \times [0,\frac12] $, ends the proof of
 Theorem \ref{trazat1}, giving the estimate

\begin{equation}\label{t2}
 \|\varphi(0)\|_{L^2(\R^n)}\leq C \left(\|A \varphi\|_H + \|\varphi\|_{\E}\right).
\end{equation}

\medskip
 The additional regularity of the trace is contained in the next
more precise result:

\break

\begin{lem}\label{laplaciano}
Let $\varphi \in D$ and $\psi\in H$ such that $A\varphi=\psi$.
Then
\begin{equation}\label{i}
\hskip-1.5cm i)\ \ \sup_{z \leq \frac12}\|z^{\frac1n}\, \partial_z \varphi(z)\|_{L^2(\R^n)}\leq \,
 C \left(\|A \varphi\|_{H} + \|\varphi\|_{\E}\right),
\end{equation}
\begin{equation} \label{ii}
\hskip-1.5cm ii)\ \ \Delta \varphi(0)\in L^2(\R^n)\ \ and\ \
 \|\Delta \varphi(0)\|_{L^2(\R^n)}
 \leq C \left(\|A \varphi\|_H + \|\varphi\|_{\E}\right),
\end{equation}
\begin{equation} \label{iii}
\hskip-1.5cm iii)\ \ \Delta \varphi(0)=-\lim_{z\to 0} (1-\frac1n)
z^{\frac1n}\, \p_z \varphi(z)\ \ {in\ the\ strong\  topology\ of\
} L^2(\R^n).
\end{equation}
\begin{equation} \label{iv}
\hskip-1.5cm iv)\ \ \Delta ^{\frac{n}{n-1}} \varphi(0)\in L^2(\R^n).
\end{equation}

v) Moreover, the preceding statement is sharp, in the sense
that for any \newline    $ \gamma\in H^{\frac{2n}{n-1}}(\R^n)$,
there exist $\varphi\in D$ such that $\varphi(0)=\gamma$.
\end{lem}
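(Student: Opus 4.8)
The plan is to push the Fourier-side analysis of Lemma \ref{pz} one step further. Items $i)$--$iv)$ concern a function $\varphi \in D$ with $A\varphi = \psi \in H$; after multiplying $\varphi$ by the cutoff $h(z)$ as in the proof of Theorem \ref{trazat1}, I may assume $\mathrm{Supp}\,\varphi \subset \R^n\times[0,1]$, at the cost of replacing $\|A\varphi\|_H$ by $\|A\varphi\|_H + C\|\varphi\|_{\E}$ — this gives $i)$ immediately from Lemma \ref{pz}. For $ii)$--$iii)$, the key observation is that the original equation $A\varphi = \psi$ can be rewritten as $-\frac1z\pd z(z\,\pd z\varphi) = \psi + z^{-1-1/n}\Delta\varphi$, and integrating $z\,\pd z(z\,\pd z\varphi)$ from $0$ to $1$ against $dz$ shows that $\lim_{z\to0} z\,\pd z\varphi$ exists in a suitable sense. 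But in fact the cleaner route is to use the representation \eqref{fourierz}: taking the Fourier transform in $x$, multiply by $z^{1/n}$ and use Remark \ref{behavior} to compute $\lim_{z\to0} z^{1/n}\pd z\hat\varphi(\xi,z)$ explicitly. From the small-$z$ expansions $u_0'(z)\sim z^{-1/n}/(1-\tfrac1n)$ and $u_1'(z)\sim -1/z$, one finds that the $u_1'$-term in $\pd z\hat\varphi$, multiplied by $z^{1/n}$, tends to $0$ (because $\int_0^z s\,\hat\psi(\xi,s)u_0(s|\xi|^{1/\alpha_n})\,ds = O(z^{2})$ for $\psi$ locally $L^2$, after Cauchy--Schwarz in $z$), while the $u_0'$-term gives $\frac{1}{1-1/n}|\xi|^{1/\alpha_n}\int_0^\infty s\,\hat\psi(\xi,s)u_1(s|\xi|^{1/\alpha_n})\,ds$. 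Since $\alpha_n = \tfrac12(1-\tfrac1n)$, one has $|\xi|^{1/\alpha_n} = |\xi|^{2/(1-1/n)}$, and a scaling/Schur-test estimate — exactly of the type carried out in \eqref{u1}--\eqref{L24} — bounds the $L^2(d\xi)$ norm of $|\xi|^2$ times this quantity by $C\|\psi\|_H$. Recognizing $-(1-\tfrac1n)\lim_{z\to0}z^{1/n}\pd z\varphi(z)$ as $\Delta\varphi(0)$ (the factor $|\xi|^2$ on the Fourier side) then yields $ii)$ and $iii)$ together; the strong convergence in $iii)$ comes from dominated convergence on the $\xi$-integral, the dominating function being the one produced by the Schur estimate.

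For $iv)$, the gain of regularity, I would iterate: the bound just obtained says $|\xi|^2\hat\varphi(\xi,0) = \widehat{\Delta\varphi(0)}(\xi) \in L^2$ with the integrand controlled by $|\xi|^{1/\alpha_n}|\int_0^\infty s\,\hat\psi u_1(s|\xi|^{1/\alpha_n})\,ds|$ up to constants. Plug the large- and small-argument behavior of $u_1$ from Remark \ref{behavior} directly into $\hat\varphi(\xi,0) = \int_0^\infty s\,\hat\psi(\xi,s)u_1(s|\xi|^{1/\alpha_n})\,ds$: on the region $s|\xi|^{1/\alpha_n}\geq 1$ the exponential decay of $u_1$ makes the contribution harmless, and on $s|\xi|^{1/\alpha_n}\leq 1$ one has $u_1(s|\xi|^{1/\alpha_n}) \approx -\log(s|\xi|^{1/\alpha_n})$, so $|\hat\varphi(\xi,0)| \lesssim \int_0^{|\xi|^{-1/\alpha_n}\wedge 1} s|\log(s|\xi|^{1/\alpha_n})|\,|\hat\psi(\xi,s)|\,ds$. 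A Cauchy--Schwarz in $s$ against the measure $s\,ds$ produces a factor $\big(\int_0^{|\xi|^{-1/\alpha_n}} s\,\log^2(s|\xi|^{1/\alpha_n})\,ds\big)^{1/2} \approx |\xi|^{-1/\alpha_n} = |\xi|^{-2/(1-1/n)}$, i.e. $|\xi|^{2n/(n-1)}|\hat\varphi(\xi,0)| \lesssim \|\hat\psi(\cdot,\xi)\|_{L^2(s\,ds)}$ pointwise in $\xi$, and integrating in $\xi$ gives $\Delta^{n/(n-1)}\varphi(0) = \mathcal F^{-1}(|\xi|^{2n/(n-1)}\hat\varphi(\xi,0)) \in L^2(\R^n)$, which is $iv)$. (The exponent $2n/(n-1) = 2/(2\alpha_n)$ is exactly the power of $|\xi|$ that the $u_1$-kernel can absorb; this is the "compensation phenomenon" advertised in the introduction.)

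Finally, for the sharpness statement $v)$, I would run the construction in reverse: given $\gamma \in H^{2n/(n-1)}(\R^n)$, define $\varphi$ through its Fourier transform by $\hat\varphi(\xi,z) = \hat\gamma(\xi)\,\frac{u_0(z|\xi|^{1/\alpha_n})}{u_0(0)} \chi(z) = \hat\gamma(\xi)\,u_0(z|\xi|^{1/\alpha_n})\chi(z)$ for a fixed smooth cutoff $\chi$ supported in $[0,1]$ and equal to $1$ near $0$ (recall $u_0(0)=1$); then $\varphi(0) = \gamma$ by construction. The work is to verify $\varphi \in D$: one checks $\varphi \in \E$ by estimating $\int z\,|\pd z\hat\varphi|^2 + z^{-1/n}|\xi|^2|\hat\varphi|^2\,dz\,d\xi$ using $u_0(y)\sim 1 + cy^{2}$ near $y=0$ and the exponential growth of $u_0$ for large $y$ (killed by $\chi$), which costs exactly $H^{2n/(n-1)}$ regularity on $\gamma$ in the worst region $z|\xi|^{1/\alpha_n}\sim 1$; and one checks $A\varphi \in H$ by noting that $u_0(z|\xi|^{1/\alpha_n})$ solves the homogeneous equation, so $A\varphi$ only picks up the commutator terms where derivatives hit $\chi$, which are supported in $z\in[\tfrac12,1]$ where everything is smooth and bounded. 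I expect the main obstacle to be bookkeeping: getting the powers of $|\xi|$, $z$, and $\alpha_n$ to match across the regions $z|\xi|^{1/\alpha_n}\lessgtr1$ in both the direct estimates for $iv)$ and the reverse estimate for $v)$, since the logarithmic behavior of $u_1$ and $u_0$ near the origin is what makes the exponent $\tfrac{2n}{n-1}$ come out exactly (rather than with room to spare), so the sharpness is genuinely tight and leaves no slack in the inequalities.
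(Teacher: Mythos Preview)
Your treatment of $i)$--$iv)$ follows the paper's Fourier-side analysis. There is a computational slip in $ii)$/$iii)$: the limit of $z^{1/n}\partial_z\hat\varphi(\xi,z)$ as $z\to 0$ carries the factor $|\xi|^{2}$, not $|\xi|^{1/\alpha_n}$, because
\[
z^{1/n}\,|\xi|^{1/\alpha_n}\,u_0'(z|\xi|^{1/\alpha_n})
\;\longrightarrow\;
\frac{|\xi|^{(1/\alpha_n)(1-1/n)}}{1-\tfrac1n}
=\frac{|\xi|^{2}}{1-\tfrac1n},
\]
so you dropped the $w^{-1/n}$ from $u_0'(w)$. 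Once corrected, your limit is precisely the paper's $\hat\lambda(\xi)/(1-\tfrac1n)$. The paper then identifies $\lambda$ with $-\Delta\varphi(0)$ through an explicit comparison of $\lambda$ with $\Delta\varphi(z)$ and a regularization $\varphi_\varepsilon=(I-\varepsilon\Delta)^{-1}\varphi$; your shortcut of reading $\hat\lambda(\xi)=|\xi|^{2}\hat\varphi(\xi,0)$ directly from \eqref{fourierz} at $z=0$ is legitimate once you invoke the $L^2$-convergence $\varphi(z)\to\varphi(0)$ already obtained from Lemma~\ref{pz}, so this difference is one of presentation rather than substance.

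Your construction for $v)$, however, has a genuine gap. You set $\hat\varphi(\xi,z)=\hat\gamma(\xi)\,u_0(z|\xi|^{1/\alpha_n})\,\chi(z)$ and argue that $A\varphi$ reduces to commutator terms supported in $z\in[\tfrac12,1]$, ``where everything is smooth and bounded''. But $u_0$ grows exponentially in its argument: for $z\in[\tfrac12,1]$ and $|\xi|$ large one has, by Remark~\ref{behavior},
\[
u_0(z|\xi|^{1/\alpha_n})\;\sim\; c\,(z|\xi|^{1/\alpha_n})^{-\alpha_n/2}\exp\!\bigl(z^{\alpha_n}|\xi|/\alpha_n\bigr),
\]
which blows up in $|\xi|$. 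The cutoff $\chi$ kills large $z$, not large $|\xi|$. Hence your $\varphi$ is not even in $H$, let alone in $D$, unless $\hat\gamma$ decays exponentially---which membership in $H^{2n/(n-1)}$ does not give. The paper circumvents this by replacing $u_0$ with a \emph{bounded} profile $F(z^{1-1/n}|\xi|^{2})$, where $F$, $F'$, $F''$ are bounded, $F(0)=1$, $F'(0)=(1-\tfrac1n)^{-2}$, and $F$ satisfies the cancellation condition \eqref{CF}. The conditions on $F$ at the origin are exactly what force the leading singular terms in $A\varphi$ near $z=0$ to cancel, while the boundedness of $F$ keeps the $|\xi|$-behaviour under control; with these in place the verification that $\varphi\in D$ costs precisely the $H^{2n/(n-1)}$ norm of $\gamma$. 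To repair your argument you need a bounded substitute for $u_0$ with the correct two-term Taylor expansion at the origin---for instance, truncate the homogeneous solution by a cutoff in the variable $z|\xi|^{1/\alpha_n}$ rather than in $z$ alone, which amounts to the paper's construction.
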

\begin{proof}
The proof of i) goes through replacing $\varphi$ with
$\tilde{\varphi}$ and applying Lemma \ref{pz}, as we did for
deducing (\ref{t2}) from (\ref{t1}).

 We continue the proof using the same trick, and in order
to alleviate the notation, we write $\varphi$ instead of
$\tilde{\varphi}$.  The second step of the proof consists in
showing that
\begin{equation}\label{t3}
\lim_{z\to 0}(1-\frac1n) z^{\frac1n}\, \p_z \varphi(z)=\lambda (z),
\end{equation}
where  $\lambda$ is defined by
$$
\hat{\lambda}(\xi)=|\xi|^{2} \int_{0}^{\infty}\!\!
s\,\hat{\psi}(\xi,s)\, u_{1}(s|\xi|^{\frac1{\alpha_n}})ds,
$$
and the limit is taken in the strong $L^2$-topology.

To prove that $\lambda \in L^2(\R^n)$ is straightforward: since $
z \mapsto z^{2\alpha_n}\ u_1(z) $
 is bounded, we have by the support condition on $\varphi$ that
\begin{equation}\label{t33}
|\hat{\lambda}(\xi)|\leq  \int_{0}^{1}\!\!
s^{1-2\alpha_n}\,|\hat{\psi}(\xi,s)|\, ds,
\end{equation}
which implies
\begin{equation}\label{t33'}
\|\lambda\|_{L^2(\R^n)}\leq C \|\psi\|_H,
\end{equation} by Cauchy–Schwarz inequality.

  Now, by Remark \ref{behavior}, $u'_{0}(z)\sim
\frac{z^{-\frac{1}{n}}}{(1-\frac{1}{n})}\,
+O\left(z^{{1}-\frac{2}{n}}\right)$ when $z \to 0$. Using this in the proof of
Lemma \ref{pz}, we let the reader check that it gives
$$
c_n z^{\frac1n}\, \p_z \hat{\varphi}(\xi,z)= |\xi|^{2}{\bf
1}_{\{z|\xi|^{\frac1{\alpha_n}}\leq 1\}} \int_{z}^{\infty}\!\!
s\,\hat{\psi}(\xi,s)\,u_{1}(s|\xi|^{\frac1{\alpha_n}}) ds +
\rho(\xi,z),
$$
where $c_n = 1-\frac1n$ and
\begin{equation}\label{rho}
\|\rho(\cdot,z)\|_{L^2(d\xi)}\leq C z^{\frac1n} \|\psi\|_H.
\end{equation}

We thus have, by Lemma \ref{pz} and the dominated convergence
theorem, that
$$
\lim_{z\to 0}c_n z^{\frac1n}\, \p_z \varphi(z)=\lambda (z),
$$
in the strong $L^2$-topology.

\medskip

The next step is to identify  $\lambda$ with $-\Delta\varphi(0)$,
assuming for the moment an additional regularity on $\varphi$,
namely that $\Delta\varphi \in D$. In this case, we already know
that $-\Delta\varphi(0) \in L^2(\R^n)$, since it is the trace of
$-\Delta\varphi$ on $\partial \Omega$. We then start from formula
(\ref{fourierz}) and write
\begin{eqnarray}
-\widehat{\Delta \varphi}(\xi,z)&=&\left(\int_z^{\infty} s\,
\hat{\psi}(\xi,s)\, u_1(s|\xi|^{\frac1{\alpha_n}})\, ds\right)\
|\xi|^2u_0(z|\xi|^{\frac1{\alpha_n}})\nn &+&\left(\int_0^z
s\,\hat{\psi}(\xi,s)\, u_0(s|\xi|^{\frac1{\alpha_n}})\, ds
\right)|\xi|^2u_1(z|\xi|^{\frac1{\alpha_n}}).
 \end{eqnarray}

We decompose $\hat{\lambda}+ \widehat{\Delta \varphi}$ into three
terms:
\begin{eqnarray}
\hat{\lambda}(\xi)+ \widehat{\Delta \varphi}(\xi,z)&=&
\left(\int_0^{z} s\, \hat{\psi}(\xi,s)\,
u_1(s|\xi|^{\frac1{\alpha_n}})\, ds\right)\ |\xi|^2\nn
&+&\left(\int_z^{\infty} s\, \hat{\psi}(\xi,s)\,
u_1(s|\xi|^{\frac1{\alpha_n}})\, ds\right)\ |\xi|^2
(1-u_0(z|\xi|^{\frac1{\alpha_n}}))\nn
&-& \left(\int_0^z s\,
\hat{\psi}(\xi,s)\, u_0(s|\xi|^{\frac1{\alpha_n}} )\, ds
\right)|\xi|^2 u_1(z|\xi|{^{\frac1{\alpha_n}}})\nn
&=&:a(\xi,z)+b(\xi,z)+c(\xi,z).
 \end{eqnarray}

That $\|a(\cdot,z)\|_{L^2(\R^n)}$ tends to 0 with $z$ comes from
the analog of (\ref{t33}). Using the estimates (\ref{u1}) and (\ref{u0}) as in the proof of Lemma \ref{pz}, we see that
\begin{eqnarray}
|b(\xi,z)\ {\bf 1}_{\{z|\xi|^{\frac1{\alpha_n}}\geq
1\}}|^2+|c(\xi,z)\ {\bf 1}_{\{z|\xi|^{\frac1{\alpha_n}}\geq
1\}}|^2\nn \leq C \ |\xi|\ z^{5/4}\left(\int_z^{\infty} s\,
|\hat{\psi}(\xi,s)|^2 \, ds + \int_0^z s\, |\hat{\psi}(\xi,s)|^2\,
ds \right).\nonumber
\end{eqnarray}
This gives
\begin{equation}   \label{b+c}
\int_{\{z|\xi|^{\frac1{\alpha_n}}\geq 1\}} \big(|b(\xi,z)|^2 +
|c(\xi,z)|^2\big)\, d\xi \leq z^{5/4}\|(-\Delta)^{1/4}\psi\|_H^2.
\end{equation}
From he behavior of $u_1$ and $u_0$ near the origin (remark 3.3),
by using the Cauchy-Schwarz inequality we get the estimate
\begin{eqnarray}
 \left|b(\xi,z){\bf 1}_{\{z|\xi|^{\frac1{\alpha_n}}\leq 1\}}\right|&\leq &C z^{1-\frac1n}
 \int_z^{\infty} s|\xi|^{4}\, |\hat{\psi}(\xi,s)|\,|u_1(s|\xi|^{\frac1{\alpha_n}})| \,
 ds.\nn
 &\leq &C z^{\frac1n}\left(\int_z^{\infty} s\, |\hat{\psi}(\xi,s)|^2
 ds\right)^{\frac12},\nonumber
  \end{eqnarray}
  since $s \mapsto s u_1^2(s)$ is integrable. Thus we have

\begin{equation}\label{b}
\int_{\{z|\xi|^{\frac1{\alpha_n}}\leq 1\}} |b(\xi,z)|^2\, d\xi
\leq C z^{\frac2n} \|\psi\|_H^2.
\end{equation}

Finally, we have
\begin{eqnarray}
 \left|\,c(\xi,z){\bf 1}_{\{z|\xi|^{\frac1{\alpha_n}}\leq 1\}}\right|\leq
 C\left(\int_0^z s\, |\hat{\psi}(\xi,s)| ds\right)|\xi|^2|\ln(z|\xi|^{\frac1{\alpha_n}})| \,
 \nn
 \leq C|\xi|^{2\alpha_n-\frac12}\int_0^z\sqrt{s}\, |\hat{\psi}(\xi,s)|ds
\leq C z^{\frac1n}\left(\int_0^z\, s|\hat{\psi}(\xi,s)|^2
ds\right)^{\frac12},
\end{eqnarray}
and this gives
\begin{equation} \label{c}
\int_{\{z|\xi|^{\frac1{\alpha_n}}\leq 1\}} |c(\xi,z)|^2\, d\xi
\leq C z^{\frac2n}\|\psi\|_{H}^2.
\end{equation}

We thus have proved (by the comments on $a$, (\ref{b+c}), (\ref{b}),
and (\ref{c})) that
\begin{equation}\label{fin}
\lim_{z\to 0}\|\lambda + \Delta \varphi(z)\|_{L^2(\R^n)}=0.
\end{equation}

We are now ready for the final step in proving ii) and iii).
 Let $\varphi \in D$, with $\mbox{Supp}\ \varphi \subset \R^n
\times [0,1]$ as we said, and for all $\epsilon >0$, let
\begin{eqnarray}
\varphi_{\varepsilon}&=&(I-\varepsilon\Delta)^{-1}\varphi,\nn
\psi_{\varepsilon}&=&A\varphi_{\varepsilon}=(I-\varepsilon \Delta
)^{-1}\psi.\nonumber
\end{eqnarray}
Then $\displaystyle{\|\psi_{\varepsilon}\|_H\leq \|\psi\|_H}$ and
$\displaystyle{\lim_{\varepsilon \to 0}\|\psi_{\varepsilon}-\psi\|_H =0}$,
 $\varphi_{\varepsilon} \in D,\ \Delta\varphi_{\varepsilon} \in D$, and
$\mbox{Supp}\ \varphi_{\varepsilon} \subset \R^n \times [0,1]$.

Apply Lemma \ref{pz}, (\ref{t3}) and (\ref{fin}) to $\varphi_{\varepsilon}$: it gives
$$
\|\Delta\varphi_{\varepsilon}(0)\|_{L^2(\R^n)} \leq
C\|\psi_{\varepsilon}\|_H\leq C \|\psi\|_H,
$$
therefore, we obtain that $\Delta\varphi(0)\in L^2(\R^n)$, with
$$\|\Delta\varphi(0)\|_{L^2(\R^n)} \leq C \|\psi\|_H.$$
This last estimate and Lemma \ref{pz} are finally applied to
$\varphi-\varphi_{\varepsilon}$ in the following chain of
inequalities:
\begin{eqnarray}
\|\Delta \varphi(0) + c_n z^{\frac1n}\, \p_z
\varphi(z)\|_{L^2(\R^n)}\nn \leq
\|\Delta\varphi(0)-\Delta\varphi_{\varepsilon}(0)\|_{L^2(\R^n)} +
\|\Delta \varphi_{\varepsilon}(0) + \frac12 \sqrt{z}\, \p_z
\varphi_{\varepsilon}(z)\|_{L^2(\R^n)}\nn + c_n \|z^{\frac1n}\,
\p_z \varphi(z)-z^{\frac1n}\, \p_z
\varphi_{\varepsilon}(z)\|_{L^2(\R^n)}\nn \leq C
\|\psi-\psi_{\varepsilon}\|_H + \|\Delta \varphi_{\varepsilon}(0)
+ c_n z^{\frac1n}\, \p_z
\varphi_{\varepsilon}(z)\|_{L^2(\R^n)}.\nonumber
\end{eqnarray}

We know that the first term above tends to zero with $\varepsilon$,
and that the second tends to  zero with $z$ for each fixed $\varepsilon$.
This implies that
$$
\lim_{z\to 0} \|\Delta \varphi(0) + c_nz^{\frac1n}\, \p_z
\varphi(z)\|_{L^2(\R^n)}=0,
$$
and the proof of (\ref{ii}) and (\ref{iii}) is complete.

 To prove (\ref{iv}), we define
$$
\hat{\mu}(\xi):=|\xi|^{\frac{1}{n-1}}\hat{\lambda}(\xi)=
\int_{0}^{\infty}\!\!
s\,|\xi|^{\frac1{\alpha_n}}\hat{\psi}(\xi,s)\,
u_{1}(s|\xi|^{4})ds.
$$
Then, by Cauchy-Schwarz inequality and Remark \ref{behavior}
\begin{eqnarray}
|\hat{\mu}(\xi)|^2 &\leq & \left(\int_{0}^{\infty} s\,
|\hat{\psi}(\xi,s)|^2\,ds\right) \left(\int_{0}^{\infty}
 s\,|\xi|^{\frac1{\alpha_n}}\,|u_{1}(s|\xi|^{4})|^2 ds\right)\nn
 &\leq& C \int_{0}^{\infty}\!\! s\,
|\hat{\psi}(\xi,s)|^2\,ds\, ,\nonumber
 \end{eqnarray}
and we obtain
$$
\int_{\R^n}|\hat{\mu}(\xi)|^2 d \xi \leq C \ \|\psi\|^2_H.
$$
This shows that $\mu \in L^2(\R^n)$, {\it i.e.},
$\Delta^{\frac{n}{n-1}}\varphi(0) \in L^2(\R^n)$.

\medskip
 In order to prove v) let $F \in C^{\infty}(\R)$ such
that $F, F', F''$ are bounded, verifying $F(0)=1,
F'(0)=\frac1{c_n^2}$, and
\begin{equation}\label{CF}
\int_0^{\infty}\Big|F(z^{1-\frac1n})-c_n^2
F'(z^{1-\frac1n})-c_n^2z^{1-\frac1n}F''(z^{1-\frac1n})\Big|^2\
\frac{dz}{z^{1+\frac2n}}
\end{equation}
is finite.

 For example, one can take $
 F(t)=(1+\frac1{c_n}t)^{\frac1{c_n}}e^{-t^4}$ .

 Let $h \in C^{\infty}\big([0, \infty)\big)$ such that
$h(z)=1$ if $z \leq 1$ and $h(z)=0$ if $z \geq 2$. We define
$\varphi$ by its Fourier transform in $x, y$:
$$
\hat {\varphi}(\xi,z):=F(z^{1-\frac1n}\,|\xi|^{2})\,
\hat{\gamma}(\xi)\, h(z),
$$
where $\gamma$ is any arbitrary function in
$H^{\frac{2n}{n-1}}(\R^n)$. We will successively prove that $
\varphi\in H $, $ \varphi\in \E$ and $ \varphi\in D $.

$\bullet$  $ \varphi\in H $ :
$$
\int_{\R^{n+1}_+}|\hat {\varphi}(\xi,z)|^2 z\,  d\mu \leq C
\int_{\R^n}|\hat{\gamma}(\xi)|^2 d \xi<\infty
$$
since $h$ is compactly supported and $F$ is bounded on
$[0,\infty]$.

$\bullet$ $\varphi\in\E$: we first compute $\partial_z
{\varphi}(\xi,z) $, obtaining as its Fourier transform
$$
\partial_z \hat {\varphi}(\xi,z)=\frac{c_n}{z^{\frac1n}}\ F'(z^{1-\frac1n}\,|\xi|^{2})
\, |\xi|^{2}\,\hat{\gamma}(\xi)\, h(z) +
F(z^{1-\frac1n}\,|\xi|^{2})\, \hat{\gamma}(\xi)\, h'(z).
$$
It belongs to $H$ because $F'$ is bounded and
$|\xi|^{\frac{n}{n-1}}\,\hat{\gamma}(\xi) \in L^2(\R^n)$; then, we
turn to $\nabla \varphi$ and write
$$
 \int_{\R^{n+1}_+}|\nabla
\varphi(\xi,z)|^2 \frac1{z^{\frac1n}}\, d\mu
=\int_{\R^{n+1}_+}|\xi|^2\,|\hat {\varphi}(\xi,z)|^2 \,
\frac{d\mu}{z^{\frac1n}} \leq C\int_{\R^n}|\xi|^2\
|\hat{\gamma}(\xi)|^2 d \xi < \infty,
$$ where again we have used
that $F$ is bounded and $h$ is compactly supported.

$\bullet$ $\varphi \in D$: we start by computing
$\widehat{A\varphi}(\xi,z)$
\begin{eqnarray}
\widehat{A\varphi}(\xi,z)&=&\frac{\hat{\gamma}(\xi) \,
h(z)}{z}\Big\{\frac1{z^{\frac1n}}|\xi|^2F(z^{1-\frac1n}\,|\xi|^{2})
-\frac{c_n^2}{z^{\frac1n}}|\xi|^{2}F'(z^{1-\frac1n}\,|\xi|^{2})\nn
&-&\frac{c_n^2}{z^{\frac2n}}|\xi|^{4}F''(z^{1-\frac1n}\,|\xi|^{2})\Big\}
-2c_n\frac1{z^{\frac1n}}F'(z^{1-\frac1n}\,|\xi|^{2})|\xi|^{2}
\hat{\gamma}(\xi) h'(z)\nn
 &-& \frac{1}{z}F(z^{1-\frac1n}\,|\xi|^{2})\hat{\gamma}(\xi) h'(z)
-F(z^{1-\frac1n}\,|\xi|^{2})\hat{\gamma}(\xi) h''(z)\nn
 & = & :\frac{1}{z^{1+\frac1n}}G(z^{1-\frac1n}\,|\xi|^{2})\
 \hat{\gamma}(\xi)\ h(z) + \rho(\xi,z).\nonumber
\end{eqnarray}

It is straightforward to see that  $\rho \in H$. On the other hand, we have
$$
\int_0^{\infty}\int_{\R^n}
\Big|\frac{1}{z^{1+\frac1n}}G(z^{1-\frac1n}\,|\xi|^{2})\Big|^2
|\hat{\gamma}(\xi)|^2h^2(z) z d\mu\leq
\int_{\R^n}|\xi|^{\frac{4n}{n-1}}\,|\hat{\gamma}(\xi)|^2 d\xi
<\infty,
$$
where in the last step we have made use of \eq{CF}, whence
$A\varphi \in H$.

We thus have  constructed a function $\varphi$ in the
domain $D$ of $A$ such that $\varphi(0)=\gamma$ (since
$F(0)=h(0)=1$). Note that $\left.c_n
z^{\frac1n}\partial_z\varphi(z)\right|_{z=0}=- \Delta \varphi(0)$,
as required. The proof of Lemma \ref{laplaciano} is complete.
\begin{flushright}
$\Box$
\end{flushright}
\end{proof}

\section{Solutions of the wave equation. Existence and properties.}

 In this section we apply the results of our study of the
operator $A$ to the resolution of the Cauchy problem for the wave
equation.

\subsection{Well-posedness of (P)}

Here we prove the assertions {\it i}) and {\it iii}) of
Theorem \ref{solution}.

 Given $f \in D$ and $g\in \E$, the solution of (P) is
given by (see, for example, \cite{K})

\begin{equation}\label{onda}
\phi(t,\cdot)= \cos(tA^{\frac12})f+
A^{-\frac12}\sin(tA^{\frac12})g.
\end{equation}
Taking into account that $D(A^{\frac12})= \E$, we have
$\phi(t,\cdot) \in D$ and $\pd t\phi(t,\cdot)  \in \E$. That
$\phi(t,\cdot) $ and $\pd t\phi(t,\cdot) $ are continuous
vector-valued functions (in $D$ and in $\E$ respectively) rely on
a classical density argument we only sketch. For $\varepsilon >0$
we set $f_{\varepsilon}=(I+\varepsilon A)^{-1}f$,
$g_{\varepsilon}=(I+\varepsilon A)^{-1}g$ and
$\phi_{\varepsilon}=(I+\varepsilon A)^{-1}\phi$. Then $\pd
t\phi_{\varepsilon}(t,\cdot) \in D$ and $\pd
{tt}\phi_{\varepsilon}(t,\cdot) \in \E$, with their norms
uniformly bounded in $t$, while  $\phi_{\varepsilon}(t,\cdot)
\rightarrow \phi(t, \cdot) $ in $ D$  and $\pd
t\phi_{\varepsilon}(t,\cdot)\rightarrow \pd t\phi(t,\cdot)$ in
$\E$ when $\varepsilon \to 0$. The conclusion readily follows.

\medskip
 When $f \in \E$ and $g\in H$, we define $\phi(t,\cdot)$
by (\ref{onda}). Then
 $\phi(t,\cdot) \in \E$ and $\pd
t\phi(t,\cdot)  \in H$. The continuity results are obtained by
density arguments in the same way as above.

 The reader should notice that in this case we have
$\pd{tt}\phi(t,\cdot)+A(\phi(t,\cdot))=0$ in $\E'$, where $\E'$ is
the dual space of $\E$; hence $\phi$ is a weak solution of (P).
\subsection{Conservation of the energy}

 Although the argument here is standard, we recall it for
the convenience of the reader. We assume first that $f \in D$ and
$g\in \E$. Then $\phi(t,\cdot)$ is a strong solution of (P) and we
have
\begin{equation}\label{yanose}
\int_{t_1}^{t_2}\int_{\Omega}z\, \pd t\phi \, (\pd {tt} \phi + A
\phi)\, dt\, d\mu = 0.
\end{equation}

We consider each term separately, obtaining for the first one
\begin{eqnarray}  \label{tiempo}
\int_{\Omega}\int_{t_1}^{t_2}z\, \pd t\phi\ \pd {tt} \phi \, dt\,
d\mu=\left.\frac12\int_{\Omega}z\, ( \pd {t} \phi )^2 \,
d\mu\right|_{t_1}^{t_2},
\end{eqnarray}
and for the second one (see for instance \cite{K})
\begin{eqnarray}  \label{comilla}
\int_{t_1}^{t_2}\int_{\Omega}\pd t\phi\ A \phi\, z\, dt\, d\mu &=&
\int_{t_1}^{t_2}<\pd t\phi , A \phi >_H\, dt\nn
&=& \int_{t_1}^{t_2}b(\phi, \pd t\phi )\, dt\nn &=&
\frac12\left.\int_{\Omega}\left((\pd z \phi )^2 z  + \frac
1{z^{\frac1n}}\ (\nabla \phi)^2\right) d\mu \ \right|_{t1}^{t_2}.
\end{eqnarray}

Now, by (\ref{yanose}), adding (\ref{tiempo}) and (\ref{comilla}),
we have for all $t>0$
\begin{eqnarray}
  E(\phi,t)&=&\frac{1}{2}\int_{\Omega} \left(z\,(\pd t
\phi)^2+z\,(\pd z \phi)^2+\frac{1}{z^{\frac1n}}\,|\nabla \phi|^2
\right)d\mu \nn &=&\frac12 \left( \|g\|_H^2+ b(f,f)
\right)\nonumber.
\end{eqnarray}
Again, by a density argument as in the preceding subsection, this
result remains true when $f\in\E$ and $g\in H$.

\subsection{Flux of the energy}

 What precedes has a consequence on the behaviour of the
flux of energy through the hyperplanes $z=cte$, which we now
describe.

 Let $f\in D$ and $g\in H$. The flux of energy from the
region $\{z>z_0\}$ to its complement $\{z<z_0\}$ is defined as
$$
\frac{d}{dt}\frac{1}{2}\int_{\{z>z_0\}} \left(z\,(\pd t
\phi)^2+z\,(\pd z \phi)^2+\frac{1}{z^{\frac1n}}\,|\nabla \phi|^2
\right)d\mu \, .
$$
 By direct computation, it is equal to
$$
\int_{\Gamma_{z_0}} T^{zt}\, dx=-\int_{\Gamma_{z_0}} z\,\pd t
\phi\,\pd z \phi\, dx \, .
$$
Writing
$$
\pd t \phi(t, z_0)- \pd t \phi(t, 1)=\int_{z_0}^1\pd z \pd
t\phi(t, z_0) dz\, ,
$$
and using that $\pd t \phi \in \E$, we obtain
$$
\|\pd t \phi(t, \cdot)\|_{L^2(\Gamma_{z_0})}\leq C\left(1+|\log
z_0|\right)^{\frac12}.
$$
With (\ref{i}) in Lemma \ref{laplaciano}, this gives
$$ \Big|\int_{\Gamma_{z_0}} T^{zt}\, dx \Big|\, \leq \,  C
z_0^{1-\frac1n}\left(1+|\log z_0|\right)^{\frac12}.
$$

In particular ${\D \lim_{z_0\to 0} \int_{\Gamma_{z_0}} T^{zt}\,
dx=0}$. This means that the wave is completely reflected at the
boundary.

\subsection{Traces}

Finally, the results on the traces of the waves
$\phi(t, \cdot, \cdot)$ for each $t$ stated in Theorem
\ref{trazat1} and in Theorem \ref{trazat2} follow from Lemma
\ref{pz} and Lemma \ref{laplaciano} respectively.

\section{Vertical waves}

Here, for the sake of completness,  we consider smooth solutions
of the wave equation $\Box \phi=0$ independent of the horizontal
coordinates $x$ and $y$. Indeed for this case we will be able to find
an explicit representation of $\phi$ in terms of the Cauchy data.

 In this case, \eq{wave} becomes
\begin{equation}\label{fe} \pd {tt}\phi(t,z)
-\frac{\pd z\bigl(z\, \pd z \phi(t,z) \bigr)}{z}=0 .
\end{equation}

We are looking for the solution $\phi(t,z)$ of this equation, with
 given  initial values $\phi(t,z)\big|_{t=0}=f(z)$
and $\pd t \phi(t,z)\big|_{t=0}=g(z)$ at the Cauchy ``surface"
$t=0$,
 such that $\pd t \phi(t,.), \pd z \phi(t,.) \in
L^2((0,\infty), z dz)$.
\ \\
\begin{pro}\label{vertical}
Under the above mentioned conditions, the solution of the equation
(\ref{fe}), for smooth enough functions $f(z)$ and $g(z)$, is
given by

\beqa\lab{z>t}
 \phi(t,z)&=&\frac{1}{2}
\sqrt{1-\frac{t}{z}}\;
   f(z-t)+\frac{1}{2} \sqrt{1+\frac{t}{z}}\;
   f(z+t)\nn
   &+&\frac{t}{16
   z\sqrt{z}}{
   \int_{z-t}^{z+t} \frac{f(\zeta) }{\sqrt{\zeta}}\,\;\;
   _2F_1\left(\frac{3}{2},\frac{3}{2};2;\frac{t^2
   -(\zeta-z)^2}{4 \zeta z}\right) \, d\zeta}\nn
   &+&\frac{1}{2
   \sqrt{z}}{
   \int_{z-t}^{z+t} g(\zeta) \,\sqrt{\zeta}\,\;\;
   _2F_1\left(\frac{1}{2},\frac{1}{2};1;\frac{t^2
   -(\zeta-z)^2}{4 \zeta z}\right) \, d\zeta}
   \,. \eeqa
for $t\leq z$, and
\beqa\lab{z<t}\fl  \phi(t,z)=
\int_{0}^{t-z} \Bigl( f'(\zeta) +g(\zeta)\Bigr) \frac{\zeta }{\sqrt{t^2
-(\zeta-z)^2}}\;\,
_2F_1\left(\frac{1}{2},\frac{1}{2};1;\frac{4 \zeta z}{t^2
-(\zeta-z)^2}\right) \, d\zeta \nn %
\fl +\,(t-z) \int_{0}^{t-z}f(\zeta)\, \frac{1
}{(t-z+\zeta)\;\sqrt{t^2-(\zeta-z)^2}}\;
_2F_1\left(\frac{1}{2},\frac{1}{2};1;\frac{4 \zeta z}{t^2
-(\zeta-z)^2}\right) \, d\zeta \nn %
\fl + \,z \int_{0}^{t-z}f(\zeta) \frac{\zeta\;
(t-z-\zeta)}{(t-z+\zeta)\,\bigl(t^2
-(\zeta-z)^2\bigr)^{3/2}}\,\frac{}{}\,\;
_2F_1\left(\frac{3}{2},\frac{3}{2};2;\frac{4 \zeta z}{t^2
-(\zeta-z)^2}\right) \, d\zeta \nn
\fl +\frac{1}{2
\sqrt{z}}{
\int_{t-z}^{t+z}  \Bigl(f'(\zeta) +g(\zeta)+\frac{f(\zeta)}{2\zeta}\Bigr) \,\sqrt{\zeta}\,\;\;
_2F_1\left(\frac{1}{2},\frac{1}{2};1;\frac{t^2
-(\zeta-z)^2}{4 \zeta z}\right) \, d\zeta}\nn
\fl +\frac{1}{32 z
\sqrt{z}}{
\int_{t-z}^{t+z}  f(\zeta)  \frac{\bigl( z^2-(t-\zeta)^2\bigr)}{\zeta\sqrt{\zeta}}\,\;\;
_2F_1\left(\frac{3}{2},\frac{3}{2};2;\frac{t^2
-(\zeta-z)^2}{4 \zeta z}\right) \, d\zeta}%
\,. \eeqa
for $t \geq z$, where $\;_2F_1$
is the Gauss hypergeometric function.

In particular, at the boundary, we have the trace
   \beqa \lab{z=0} \phi(t,0)=
   f(0)+\int_{0}^{t}  \frac{\Bigl(t\, f'(\zeta) +\zeta\,g(\zeta)\Bigr) }{\sqrt{t^2
   -\zeta^2}} \, d\zeta
   \,. \eeqa
\end{pro}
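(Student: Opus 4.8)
\quad The crucial remark is that the spatial operator in \eq{fe}, namely $z\mapsto\frac1z\partial_z(z\,\partial_z\,\cdot\,)=\partial_{zz}+\frac1z\partial_z$, is exactly the radial part of the Laplacian on $\mathbb{R}^2$. Hence $\phi(t,z)$ solves \eq{fe} with Cauchy data $f(z),g(z)$ if and only if $\Phi(t,\mathbf{x}):=\phi(t,|\mathbf{x}|)$ solves the two--dimensional wave equation $\partial_{tt}\Phi=\Delta\Phi$ with the radial data $\Phi(0,\cdot)=f(|\cdot|)$, $\partial_t\Phi(0,\cdot)=g(|\cdot|)$. The plan is: (i) start from the classical Poisson representation of $\Phi$ (obtained by Hadamard's descent from the three--dimensional Kirchhoff formula); (ii) specialise it to radial data and carry out the angular integration, which is what turns the elementary kernel into Gauss hypergeometric functions; (iii) distinguish the cases according to whether the disc of integration centred at $\mathbf{x}$ misses or contains the origin, producing \eq{z>t} and \eq{z<t}; (iv) deduce the trace \eq{z=0} by evaluating the representation at the centre $\mathbf{x}=0$, where no angular average is needed.

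For (i)--(ii): with $|\mathbf{x}|=z$ one has
$$\Phi(t,\mathbf{x})=\frac1{2\pi}\,\partial_t\!\int_{|\mathbf{y}|\le t}\frac{f(|\mathbf{x}+\mathbf{y}|)}{\sqrt{t^2-|\mathbf{y}|^2}}\,d\mathbf{y}\;+\;\frac1{2\pi}\int_{|\mathbf{y}|\le t}\frac{g(|\mathbf{x}+\mathbf{y}|)}{\sqrt{t^2-|\mathbf{y}|^2}}\,d\mathbf{y}\,,$$
valid for $f,g$ smooth enough. Passing to polar coordinates $\mathbf{y}=\rho(\cos\theta,\sin\theta)$ and setting $\zeta:=|\mathbf{x}+\mathbf{y}|=\sqrt{z^2+\rho^2+2z\rho\cos\theta}$, the substitution $\theta\mapsto\zeta$ (using $\cos\theta=\frac{\zeta^2-z^2-\rho^2}{2z\rho}$) turns the angular integral into one over $\zeta\in[\,|z-\rho|,z+\rho\,]$ with weight $\zeta\big/\sqrt{((z+\rho)^2-\zeta^2)(\zeta^2-(z-\rho)^2)}$; exchanging the $\rho$ and $\zeta$ integrations, the $g$-term reduces, up to elementary factors, to $\int g(\zeta)\,\zeta\,\Big(\int\frac{\rho\,d\rho}{\sqrt{t^2-\rho^2}\sqrt{((z+\rho)^2-\zeta^2)(\zeta^2-(z-\rho)^2)}}\Big)d\zeta$. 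After the rational substitution $u=\rho^2$, which factors the radical as $\sqrt{(t^2-u)(u-(\zeta-z)^2)((\zeta+z)^2-u)}$, the inner integral is a complete elliptic integral, hence a constant multiple of ${}_2F_1\!\big(\frac12,\frac12;1;\cdot\big)$; depending on which root bounds the $u$-range (the root $t^2$, or the larger of $(\zeta\pm z)^2$) the argument comes out as $\frac{t^2-(\zeta-z)^2}{4\zeta z}$ or as $\frac{4\zeta z}{t^2-(\zeta-z)^2}$, both lying in $[0,1]$. The $f$-terms are then obtained by applying $\partial_t$ to the $g$-representation with $f$ in place of $g$: the derivative acts on the explicit $t$'s and on the integration endpoints $z\pm t$ (resp. $t\pm z$), producing on one hand the algebraic prefactors $\frac12\sqrt{1\mp t/z}\,f(z\mp t)$, and on the other — through $\frac{d}{dx}{}_2F_1(\frac12,\frac12;1;x)=\frac14{}_2F_1(\frac32,\frac32;2;x)$ — the ${}_2F_1(\frac32,\frac32;2;\cdot)$ contributions.

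For (iii): the disc $\{|\mathbf{y}|\le t\}$ about $\mathbf{x}$ contains the origin if and only if $t\ge z$. When $t\le z$ one has $\zeta\in[z-t,z+t]\subset(0,\infty)$, the $u$-integration is always truncated by the root $t^2$, and the argument is $\frac{t^2-(\zeta-z)^2}{4\zeta z}\in[0,1]$: this gives exactly \eq{z>t}. When $t\ge z$ the $\zeta$-range is $[0,z+t]$ and splits at $\zeta=t-z$: for $\zeta\le t-z$ one has $\zeta+z\le t$, the $u$-integration runs over the whole gap $[(\zeta-z)^2,(\zeta+z)^2]$ between the two middle roots (the circle $|\mathbf{y}|=\rho$ meeting $\{|\mathbf{x}+\mathbf{y}|=\zeta\}$ along two arcs, both counted), and the argument is $\frac{4\zeta z}{t^2-(\zeta-z)^2}\in[0,1]$; for $t-z\le\zeta\le t+z$ it is truncated by $t^2$ again, with argument $\frac{t^2-(\zeta-z)^2}{4\zeta z}\in[0,1]$. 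Bookkeeping the two arcs on $[0,t-z]$ produces the five integrals of \eq{z<t}, and the integrable, mutually balancing singularities of ${}_2F_1(\frac12,\frac12;1;x)$ and ${}_2F_1(\frac32,\frac32;2;x)$ at the matching value $x=1$ (reached precisely at $\zeta=t-z$) make the two formulas join into a $C^1$ solution across the characteristic $t=z$. Finally, for (iv) one evaluates the Poisson representation at $\mathbf{x}=0$: there the angular integral is trivial, and the $f$- and $g$-contributions to $\phi(t,0)$ are $\partial_t\!\int_0^t\frac{\rho f(\rho)}{\sqrt{t^2-\rho^2}}\,d\rho$ and $\int_0^t\frac{\rho g(\rho)}{\sqrt{t^2-\rho^2}}\,d\rho$; a rescaling $\rho=tr$, differentiation, and one integration by parts (using $r(1-r^2)^{-1/2}=-\partial_r(1-r^2)^{1/2}$) rewrite the first as $f(0)+\int_0^t\frac{t f'(\rho)}{\sqrt{t^2-\rho^2}}\,d\rho$, and \eq{z=0} follows. (Equivalently, \eq{z=0} is recovered by letting $z\to0^+$ in \eq{z<t}: the integrals over the shrinking band $[t-z,t+z]$ and the one carrying the factor $z$ vanish, while $\frac{4\zeta z}{t^2-(\zeta-z)^2}\to0$ forces the surviving ${}_2F_1$'s to $1$.)

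Uniqueness in the prescribed class follows from the conservation of energy for the radial two--dimensional wave equation, as in the proof of Theorem~\ref{solution} and in the energy-flux computation given above: the difference $\psi$ of two admissible solutions has vanishing Cauchy data, and $\frac{d}{dt}\frac12\int_0^\infty\big(z(\partial_t\psi)^2+z(\partial_z\psi)^2\big)\,dz$ equals the flux $\big[z\,\partial_t\psi\,\partial_z\psi\big]_0^\infty$, which vanishes under the finite-energy condition $\partial_t\psi,\partial_z\psi\in L^2((0,\infty),z\,dz)$; hence $\psi\equiv0$. The main difficulty is step (ii)--(iii): identifying the inner $\rho$-integral as a complete elliptic integral with the correct modulus in each geometric sub-case, tracking all constants and endpoint contributions through the $\partial_t$ that generates the $f$-part, and — above all — the careful bookkeeping when $t\ge z$ (the two arcs over $[0,t-z]$, the precise hypergeometric parameters and arguments on each piece, and the $C^1$-matching across $t=z$). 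Everything else is routine.
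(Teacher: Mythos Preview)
Your proposal is correct and takes a genuinely different route from the paper. The paper works in characteristic coordinates $\xi=z+t$, $\eta=z-t$ and applies Riemann's method: it constructs the Riemann function $R=\sqrt{(\xi+\eta)/(\xi_0+\eta_0)}\;{}_2F_1(\tfrac12,\tfrac12;1;w)$ by solving the adjoint equation with the usual characteristic conditions, integrates over the backward characteristic triangle for $z>t$, and for $z<t$ handles the singularity of $R$ at $w=1$ by a careful limiting argument that brings in a second solution $R_1$ (built from the hypergeometric solution regular at infinity) on the region beyond the characteristic through the origin. Your approach instead exploits the coincidence that $\partial_{zz}+\tfrac1z\partial_z$ is the two--dimensional radial Laplacian, lifts the problem to the planar wave equation, and reads everything off Poisson's formula. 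This buys you several things the Riemann--function route does not: the appearance of ${}_2F_1(\tfrac12,\tfrac12;1;\cdot)$ is explained as a complete elliptic integral arising from the angular average rather than emerging from an ansatz; the case split $t\lessgtr z$ acquires the transparent geometric meaning ``disc of dependence contains the origin or not''; the trace formula \eq{z=0} is immediate (Poisson at $\mathbf{x}=0$ needs no angular integration); and the absence of any boundary condition is conceptually obvious, since $z=0$ corresponds to an \emph{interior} point of $\mathbb{R}^2$. What the paper's method buys in return is self--containment: it is a direct hyperbolic argument that does not rely on the dimensional accident $n=2$ in the radial Laplacian, and its machinery would survive perturbations of the coefficient (say $z\mapsto z^\alpha$) where your lift no longer exists. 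One small point worth making explicit in your write--up is that the finite--energy hypothesis $\partial_t\phi,\partial_z\phi\in L^2((0,\infty),z\,dz)$ is exactly what rules out the $\log z$--type radial solutions and hence guarantees that the lifted $\Phi$ extends across the origin, so that the ``only if'' direction of your correspondence, and with it uniqueness, really holds.
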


  Notice that, as already has  been pointed out, the solution is
completely determined by the initial data, and no boundary
condition should and can be provided.

 For the sake of readability,  the proof of this
proposition is postponed to  \ref{apendice}.

 \subsection{Two explicit examples}

 In order to explore the qualitative behavior of these
waves we have explicitly computed some solutions of the Cauchy
problem \eq{fe}. In this section, we show  two specific examples.
In both cases, we choose two particular $C_0^1(\R_+)$ functions
$f$ and $g$  and get $\phi(t,z)$ by numerically integrating the
expressions given in \eq{z>t} and \eq{z<t}.

\subsubsection{Example 1}

We take, at $t=0$, the Cauchy data
$$ f(z)= \cases{0 & $0\leq
z\leq1$\\16(z-1)^2(z-2)^2 \quad\quad& $1\leq z\leq 2$\\0 & $2\leq
z$ }\quad\quad\quad\mbox{and}\quad\quad\quad g(z)=0\,.
$$

In Fig. \ref{Ej1} we display two views of the plot of $\phi(t,z)$ for
$t\in[0,3]$ and $z\in[0,5]$ obtained from \eq{z>t} and \eq{z<t} by
numerical integration.
 Notice that the initial
pulse is decomposed in two pieces,  as occurs with D'Alembert's
solution. One of the waveforms travels in the positive
z-direction, while the other travels in opposite direction. Then
the latter reaches the boundary and $\phi(t,0)$ increases from $0$
to a maximum value.  Later on, it becomes negative and attains a
minimum, afterward it tends to $0$  as $C/t^2$ as it can be
readily seen from \eq{z=0}. Thus we see how it is reflected at the
singular boundary and proceeds to travel upward.
\begin{figure}[h!]\begin{center}
\includegraphics[width=\textwidth]{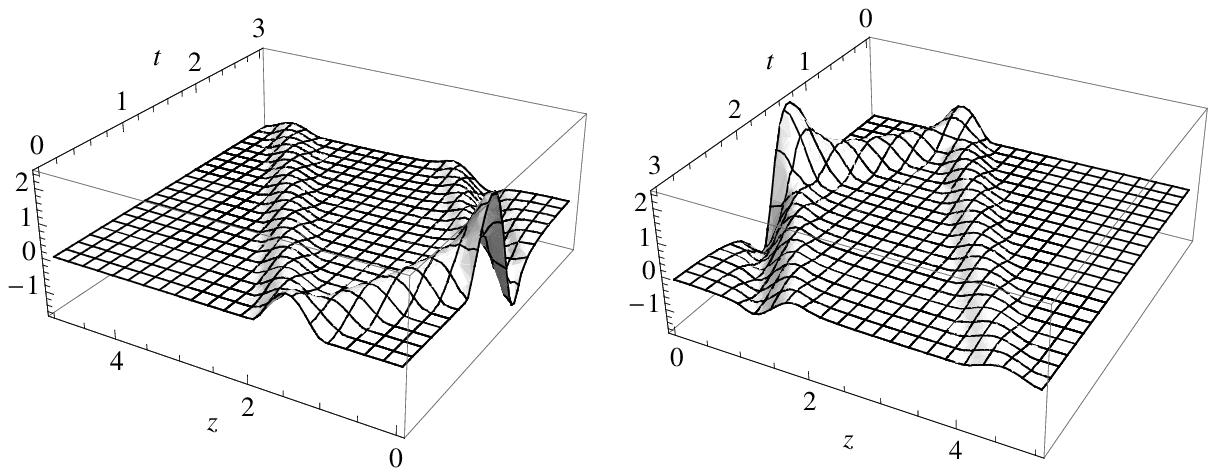}
\caption{\lab{Ej1}Example 1, two views of
$\phi(t,z)$.}\end{center}\end{figure}
\subsubsection{Example 2} In this case, interchanging the roles of
$f$ and $g$, we  set
$$
f(z)=0 \quad\quad\quad\mbox{and}\quad\quad\quad g(z)=\cases{0 &
$0\leq z\leq1$\\16(z-1)^2(z-2)^2 \quad\quad& $1\leq z\leq 2$\\0 &
$2\leq z$ }\,.
$$

In Fig. \ref{Ej2} we display two views of the plot of
$\phi(t,z)$  for $t\in[0,3]$ and $z\in[0,5]$ obtained from
\eq{z>t} and \eq{z<t}.
 In this case, we readily
get from \eq{z=0}  that $\phi(t,0)$ tends to $0$ as $C/t$. We can
see from Fig. \ref{Ej2}, that the wave is also completely
reflected at the boundary.
\begin{figure}[h!]\begin{center}
\includegraphics[width=\textwidth]{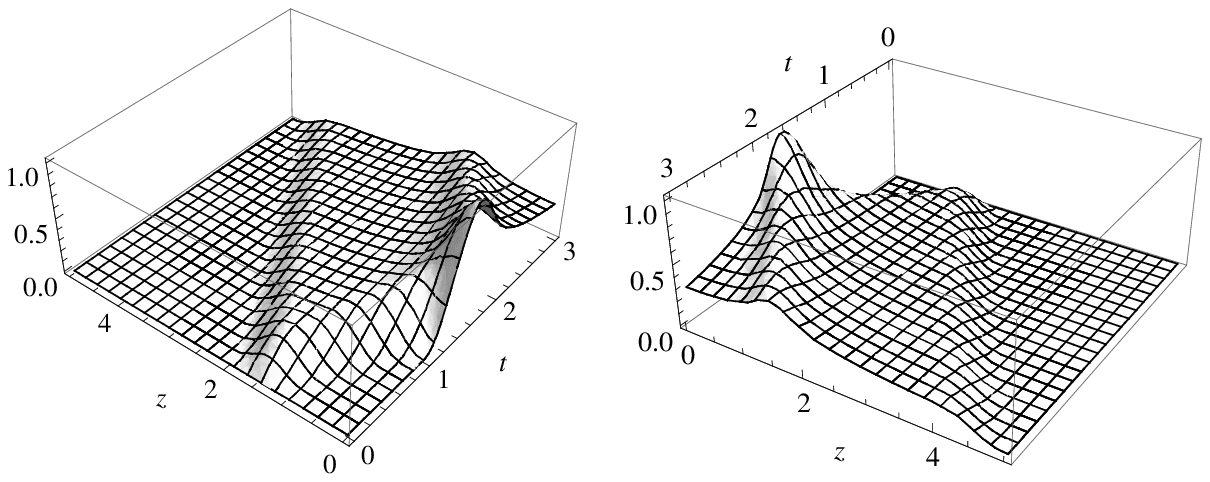}
\caption{\lab{Ej2}Example 2, two views of
$\phi(t,z)$.}\end{center}\end{figure}

 These examples clearly show that, in spite of
$\phi(t,z)$ does not satisfy any prescribed boundary condition at
the origin, some of the qualitative features of the waves are
similar to those of the classical cases, such as vibrating strings
or pressure waves in pipes, with  boundary condition imposed at
the end.  In fact, in all the cases the original pulse decomposes
in two pieces
 and the one travelling to the boundary completely
reflects at it.

However, in the latter cases, the corresponding operator $A$ does
not satisfy the well-posedness property   and  admits infinitely
many self-adjoint extensions with domain included in the energy
space. For instance, the propagation of pressure waves inside the
tube of a wind musical instrument depends drastically on the
physical properties of its end. When it is open, the air pressure
must equal the atmospheric pressure there, and we have to impose
Dirichlet's boundary conditions. Being the end closed, air cannot
move at it, and  Neumann's boundary conditions must be required.
Furthermore, by adding a movable membrane at the extreme, we can
generate more general boundary conditions to be satisfied.
Therefore, in these cases, it is Physics which requires the
existence of infinitely many self-adjoint extensions.

 Whereas, in our case, in which  Physics cannot provide
any boundary condition to be imposed,  the alternative
well-posedness property fortunately tells us that none is actually
needed.

\section{Schwarzschild with negative mass}

The analysis of Schwarzschild \st\ with negative mass
shows similar features than Taub's example, and it is the purpose
of this section to briefly describe it.

We start with
recalling the geometrical setting. The space is $\O=S^2\times
(0,\infty) $ and the metric of the spacetime is

\beqa ds^2=-\left(1 +\frac{2M}{r}\right)
 dt^2+\left(1
+\frac{2M}{r}\right)^{-1}
 dr^2 + r^2 d\Sigma^2 , \nonumber
  \eeqa
where $d\Sigma^2$ denotes the metric on the unit sphere $S^2$ and
$M$ is a positive parameter. In this
case, the wave equation writes
$$
\pd {tt}\phi + A\phi=0
$$
with
 \beqa \label{Asch}A\varphi =-\left(1 +\frac{2M}{r}\right)\frac{1}{r^2}\,
\left[\pd r\left(r^2 \left(1 +\frac{2M}{r}\right) \pd r \varphi
\right)+ \Delta_{S^2} \varphi\right]\nonumber
\eeqa
defined on $C_0^{\infty}(\Omega)$.

 The operator $A$ is
symmetric on the Hilbert space
$$H= \{\varphi(r, \mathbf{\theta}): \int_{\O}|\varphi(r,\mathbf{\theta})|^2
\frac{r^2}{1 +\frac{2M}{r}}dr d\sigma (\mathbf{\theta}\ ,)<\infty
\}
$$ and the associated energy space is defined as in (\ref{energyspace}),
through the sesquilinear form

$$
b(\varphi,\eta)=\int_{\O}\pd r \varphi \ \pd r \eta \
r^2\Big(1+\frac{2M}r\Big) dr d\sigma(\mathbf{\theta})+
\int_{\O}\nabla_{\mathbf{\theta}}\varphi \cdot
\nabla_{\mathbf{\theta}} \, \eta \ r^2 dr d\sigma
(\mathbf{\theta})\ .
$$

\begin{theorem}\label{sch}
The operator $A$ is not essentially self-adjoint. However, it has only
one self-adjoint extension with domain included in $\E$.
\end{theorem}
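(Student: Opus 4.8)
The strategy is to mimic exactly the argument used for Taub's spacetime in Theorems \ref{esencial}, the only difference being the explicit form of the coefficients near the singular boundary $r=0$. First I would observe that, as before, the form $b$ defined above coincides with $\langle A\varphi,\eta\rangle_H$ on $C_0^\infty(\Omega)$, so that $A$ is symmetric and nonnegative, and that the operator $(\mathcal A,D)$ defined via the form, where
$$
D=\{\varphi\in\E\ :\ \exists\, C>0\ \forall\,\eta\in\E,\ |b(\varphi,\eta)|\le C\|\eta\|_H\},
$$
is precisely the Friedrichs extension of $(A,C_0^\infty(\Omega))$. The abstract fact that the Friedrichs extension is the unique self-adjoint extension whose domain is contained in the form domain $\E$ is general; hence, to get the ``however'' part of the statement, it suffices to prove the analogue of Lemma \ref{densidad}, namely that $C_0^\infty(\Omega)$ is dense in $\E$. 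This in turn reduces, by a standard regularization and translation argument, to approximating a function $\varphi\in C_c^\infty(\Omega)$ by functions vanishing near $r=0$, which is done by multiplying by a cutoff $h_\varepsilon(r)$ chosen so that $\int_0^1 |\partial_r h_\varepsilon(r)|^2\, r^2\big(1+\tfrac{2M}{r}\big)\, dr\to 0$; since near $r=0$ the weight $r^2(1+2M/r)\sim 2Mr$ behaves just like the weight $z$ in Taub's case, the same logarithmic cutoff $h_\varepsilon(r)=\varepsilon^{-1}|\log r|^{-\varepsilon}$ for $r\le r(\varepsilon)$ works verbatim, and dominated convergence finishes the approximation in the $\E$-norm.

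For the first assertion — that $A$ is \emph{not} essentially self-adjoint — I would again proceed by exhibiting an element of $D(A^*)$ that does not lie in $\E$. Expanding in spherical harmonics on $S^2$ reduces the eigenvalue equation $A^*\eta=0$ to a family of ordinary differential equations in $r$ indexed by $\ell(\ell+1)$, and one checks that on each mode the homogeneous equation has two solutions whose behaviour at $r=0$ is, respectively, bounded (behaving like a constant plus a correction) and logarithmically divergent, exactly as the pair $u_0,u_1$ in Remark \ref{behavior}. Taking $\eta$ to be (a cutoff of) the $\ell=0$ solution which is in $L^2(H)$ near $r=0$ but whose radial derivative fails to be square-integrable against $r^2(1+2M/r)$ produces $\eta\in H$, $A^*\eta=0$, $\eta\notin\E$; hence the deficiency indices are nonzero and $A$ is not essentially self-adjoint. (Alternatively one can invoke Weyl's limit-point/limit-circle classification at $r=0$ for the radial operators, obtaining limit-circle at $r=0$ for at least the lowest modes.)

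The main obstacle is not conceptual but computational: one must justify that the singular endpoint $r=0$ of the Schwarzschild-negative-mass metric really does fall into the same class as Taub's endpoint $z=0$, i.e.\ that the normal coefficient degenerates like the first power of the distance to the boundary while the tangential part stays comparably nondegenerate. Once the asymptotics $1+2M/r\sim 2M/r$, $r^2(1+2M/r)\sim 2Mr$ as $r\to 0$ are made precise, every estimate used in the Taub case (the control of the cutoff energy, the endpoint behaviour of the radial ODE solutions) transfers with only the obvious changes of constants; the far endpoint $r=\infty$ is regular and causes no trouble since the coefficients there are asymptotically flat. Therefore the proof is a matter of checking that the hypotheses of the Taub argument are met and then citing that argument; I would state this explicitly rather than reproduce all the computations.
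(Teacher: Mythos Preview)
Your proposal is correct and follows essentially the same route as the paper: density of $C_0^\infty(\Omega)$ in $\E$ via the logarithmic cutoff (the weight $r^2(1+2M/r)\sim 2Mr$ near $r=0$ indeed matches the weight $z$ in Lemma~\ref{densidad}), then the Friedrichs-extension argument for uniqueness, and a separation-of-variables counterexample for non-essential self-adjointness.

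The one substantive difference is in the counterexample. The paper observes that the radial ODE for the $\ell$-th mode becomes, after the substitution $s=1+r/M$, exactly Legendre's equation $\partial_s((s^2-1)\partial_s u)=\ell(\ell+1)u$, and takes $\varphi_{l,m}(r,\theta)=Q_\ell(1+r/M)\,Y_\ell^m(\theta)$ with $\ell\ge 1$. Since $Q_\ell(s)\sim s^{-\ell-1}$ at infinity, this is a \emph{global} element of $H$ satisfying $A^*\varphi_{l,m}=0$ exactly, with $Q_\ell(1+r/M)\sim|\log r|$ and $Q_\ell'(1+r/M)\sim 1/r$ near $r=0$ giving $\varphi_{l,m}\notin\E$. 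Your $\ell=0$ solution $\sim\log\frac{r}{r+2M}$ does not decay fast enough at infinity to lie in $H$, which is why you need the cutoff; but then your sentence ``$A^*\eta=0$'' is not literally true---what you actually get (and what suffices) is $A^*\eta\in H$, since the cutoff introduces a smooth compactly supported error away from the singularity. Either route works; the paper's is cleaner because no cutoff is needed and the Legendre identification is explicit.
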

\begin{proof}
The uniqueness of the self-adjoint extension with domain included
in $\E$ is proved as in Theorem \ref{esencial}. We first show that
$C_0^{\infty}$ is dense in $\E$, just by mimicking the proof of
Lemma \ref{densidad}, and then we usethe same argument to
conclude. The counterexample to essential self-adjointness is
given by any function
$$\varphi(r, \mathbf{\theta})= Q_l\left(1+\frac{r}{M}\right)Y_l^m(\mathbf{\theta})
$$
where $\theta \in S^2$ is the current point in the unit sphere,
$l$ is an integer, $l\geq 1$, $m$ is another integer with $|m|\leq
l$, $Y_l^m$ is the spherical harmonic, and $Q_l$ is the second
Legendre function. We recall that
$Q_l\big(1+\frac{r}{M}\big)=O(|\ln r|)$ and
$Q_l'\big(1+\frac{r}{M}\big)=O(\frac1r)$ when $r\rightarrow 0$,
while $Q_l\big(1+\frac{r}{M}\big)=O(r^{-l-1})$ at infinity. Hence,
$\varphi_{l,m}\in H$, but $\varphi_{l,m}\notin \E$ on one hand, on
the other hand, $A^*\varphi_{l,m}=0$ by construction.\begin{flushright}
$\Box$
\end{flushright}

\end{proof}

 The example $\varphi(r,\mathbf{\theta})=(1+Y_l^m(\mathbf{\theta}))\eta(r)$
shows that functions in the domain of $A$ do not necessarily have
a trace at the origin. This comes from the dependency with respect
to angular variables. However, if we consider for each $\varphi $
in the domain of $A$ its mean values over the spheres $S(0,r)$,
angular variables disappear and, as a result the limit when $r \to
0$ does exists, and is not {\it a priori} vanishing. Indeed, we
have the following.
\begin{theorem}\label{lamda}
Let $\varphi \in D(A)$ and define $$\lambda(\varphi,
r)=\int_{S^2}\varphi(r,\mathbf{\theta})d\sigma(\mathbf{\theta}).$$
Then,
$$\lim_{r\to 0}\lambda(\varphi, r)=: \lambda(\varphi)
$$
exists, and
$$
|\lambda(\varphi)|\leq C (\|A\varphi\|_H +\|\varphi\|_{\E})
$$
for some constant $C$.
\end{theorem}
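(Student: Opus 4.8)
The plan is to mimic, in the Schwarzschild setting, the trace argument developed for Taub's spacetime in Lemma \ref{pz} and Lemma \ref{laplaciano}. First I would fix $\varphi\in D(A)$ and set $\psi=A\varphi\in H$. Averaging the eigenfunction equation over the spheres $S(0,r)$, exactly as for $\lambda(\varphi,r)$, kills all the angular modes: if one expands $\varphi$ and $\psi$ in spherical harmonics, only the $\ell=0$ component survives in the mean value, so $\lambda(\varphi,r)=\int_{S^2}\varphi\,d\sigma$ solves a one-dimensional ODE in $r$ driven by the $\ell=0$ component of $\psi$. Concretely, writing $u(r)=\lambda(\varphi,r)$ and $v(r)$ for the corresponding average of $\psi$, one gets an equation of the form
$$
-\frac{1}{r^2}\,\p_r\!\Big(r^2\Big(1+\tfrac{2M}{r}\Big)\p_r u\Big)=v,
$$
which is a regular-singular second-order ODE at $r=0$ whose two independent homogeneous solutions behave, near the origin, like a constant and like $\ln r$ (this is the $\ell=0$ analogue of the behaviour of $Q_\ell(1+r/M)$ recalled in the proof of Theorem \ref{sch}, and it parallels $u_0,u_1$ of Remark \ref{behavior}).

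Next I would write the solution by variation of parameters, selecting — as in the derivation of \eq{fourierz} from Remark \ref{behavior} — the unique combination for which $u$ and $\p_r u$ lie in $L^2((0,R),r^2(1+2M/r)\,dr)$, since $\varphi\in\E$ forces exactly this integrability on its spherical average. This representation expresses $\p_r u(r)$ as an integral against $v$ with kernel controlled by the two homogeneous solutions; because here the normal coefficient $r^2(1+2M/r)$ does \emph{not} degenerate at $r=0$ (it behaves like $2Mr$), the estimates are in fact simpler than in Lemma \ref{pz}: one obtains a bound of the form $\|\p_r u\|_{L^2((0,R))}\le C(\|\psi\|_H+\|\varphi\|_\E)$, after the usual localisation trick of multiplying by a cutoff $h(r)$ equal to $1$ near $0$ and supported in $[0,R]$, which only costs an extra $C\|\varphi\|_\E$ term as in the passage from \eq{t1} to \eq{t2}. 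From $u'\in L^1((0,R))$ the formula $u(0)=u(R)-\int_0^R u'(r)\,dr$ shows that $\lim_{r\to 0}\lambda(\varphi,r)$ exists and satisfies the claimed bound $|\lambda(\varphi)|\le C(\|A\varphi\|_H+\|\varphi\|_\E)$.

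I expect the main technical point to be the careful bookkeeping of the homogeneous solutions of the averaged ODE near $r=0$ and the verification that the integrability of $u,u'$ in the weighted $L^2$ space singles out exactly one variation-of-parameters constant — the same mechanism that, in Part I, underlies the uniqueness in \eq{fourierz}. Once the correct solution is isolated, the absence of degeneracy of the radial coefficient in Schwarzschild makes the remaining integral estimates routine Cauchy--Schwarz arguments, with no compensation phenomenon of the kind seen in Lemma \ref{laplaciano}; in particular one does not obtain (nor expect) any extra regularity of the limit beyond its mere existence, which is consistent with the statement of the theorem.
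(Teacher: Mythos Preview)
Your plan is sound and would yield the theorem, but it follows a more elaborate route than the paper. After averaging over $S^2$ to obtain a one-dimensional ODE for $\lambda(\varphi,r)$ (note: the correct averaged equation retains the outer factor $(1+\tfrac{2M}{r})$, which you dropped), the paper does \emph{not} invoke variation of parameters or the pair of homogeneous solutions at all. Instead it argues directly: the averaged equation immediately yields an $L^2$ bound on $\p_r\big(r^2(1+\tfrac{2M}{r})\p_r\lambda\big)$ in terms of $\|\psi\|_H$; Cauchy--Schwarz then shows that the flux $r^2(1+\tfrac{2M}{r})\p_r\lambda$ is Cauchy as $r\to 0$, hence has a limit; the condition $\varphi\in\E$ (which gives $\int_0^1|\p_r\lambda|^2\,r\,dr<\infty$) forces that limit to be zero; and one reads off a pointwise bound on $\p_r\lambda$ making it integrable near $0$, so $\lambda(\varphi,0)$ exists. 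The cutoff trick to pass from compactly supported $\varphi$ to general $\varphi$ is the same as in your plan.

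Your approach thus trades a short integration-by-hand argument for the heavier machinery of identifying homogeneous solutions and selecting constants. Both work; the paper's is quicker here because the averaged problem is genuinely one-dimensional with no Fourier parameter, unlike the Taub case where the $\xi$-dependence made the explicit $u_0,u_1$ indispensable. One small correction: the radial coefficient $r^2(1+\tfrac{2M}{r})\sim 2Mr$ \emph{does} degenerate at $r=0$; what simplifies matters compared to Lemma~\ref{pz} is the absence of a transverse variable, not nondegeneracy of the coefficient.
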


\begin{proof}
Let $\varphi \in D(A)$, $\psi\in H$ such that  $\psi=A\varphi$;
multiplying it by a cut-off function away from zero, we assume
that Supp\ $\varphi \subset B(0,1)$. We have

\beqa -\left(1 +\frac{2M}{r}\right)\frac{1}{r^2}\, \left[\pd
r\left(r^2 \left(1 +\frac{2M}{r}\right) \pd r \lambda(\varphi, r)
\right)\right]=\int_{S^2}\psi(r,\mathbf{\theta})d\sigma(\mathbf{\theta})\nn
+ \left(1 +\frac{2M}{r}\right)\frac{1}{r^2}\int_{S^2} \Delta_{S^2}
\varphi(r,\mathbf{\mathbf{\theta}}) d\sigma(\mathbf{\theta}))\nn =
\int_{S^2}\psi(r,\mathbf{\theta})d\sigma(\mathbf{\theta}),\nonumber
 \eeqa
from which we deduce that
\begin{eqnarray}\fl
\int_0^1 \left| \pd r \left( r^2 \left(1 +\frac{2M}{r}\right) \pd r
\lambda(\varphi,r)\right)\right|^2\left(1+\frac{2M}{r}\right)\frac{dr}{r^2}&\leq &
\int_{\Omega}|\psi(r,\theta)|^2 r^2\frac{dr d\sigma
(\mathbf{\theta})}{1 +\frac{2M}{r}}\nn & \leq & C \| \psi,, -
\|_H.\nonumber
\end{eqnarray}
Using this estimate and Cauchy-Schwarz inequality, we obtain
for $s<r<1$
\begin{eqnarray}\label{resta}
\left|r^2 \left(1 +\frac{2M}{r}\right) \pd r
\lambda(\varphi,r)-s^2 \left(1 +\frac{2M}{s}\right) \pd s
\lambda(\varphi,s) \right|\nn \leq C \left( \int_s^r \frac{t^2}{1
+\frac{2M}{t}}dt \right)^{\frac12}\| \psi \|_H.
\end{eqnarray}
Then ${\D \lim_{r\to 0} r^2 \left(1 +\frac{2M}{r}\right) \pd r
\lambda(\varphi,r)}$ exists and it must be zero since
$$\int_0^1|\pd r \lambda(\varphi, r)|^2 r dr <\infty
.$$ Thus, when $s$ goes to $0$, we get from (\ref{resta}) that
$$|\pd r \lambda(\phi,r)|\leq C r \left(\int_0^r |\psi(t,\theta)|^2 dt\right)^{\frac12}.$$
Hence
$$
\lambda(\varphi, 0)=-\int_0^1 \pd r \lambda(\varphi, r) dr
$$
exists when $\mbox{Supp}\ \varphi \subset B(0,1)$. The end of the proof
follows as in the proof of (\ref{t2}).
\begin{flushright}
$\Box$
\end{flushright}

\end{proof}

\bigskip
 {\bf Remark.}  Non essential self-adjointness of $A$ has
been claimed in \cite{HM} where a different argument based on von
Neumann's criterion is suggested. However we think preferable to
give a proof.

 Later on, the authors of \cite{IH} suggested to replace
the underlying Hilbert space $H$ by the energy space $\E$. To
avoid confusion, we note $A_{\E}$ their operator, given by
(\ref{Asch}), defined on $C_0^{\infty}$ and taking values in $\E$.
They claimed that, doing this, the operator $A_{\E}$ is
essentially self-adjoint. This is not founded for three reasons.
First, the energy which is associated to such an operator is not
the physical energy. Second, the importance of the density in
$\E$ of $C_0^{\infty}$ has been skipped in \cite{IH}, while it is
a key property which must be verified, otherwise the operator $A$
would not be densely defined (a non densely defined operator
cannot be essentially self-adjoint). Finally, and more
dramatically, their claimed result is just false.

 Indeed, by
Theorem \ref{lamda} there exists a non trivial linear form on
$D(A)$ which vanishes on $C_0^{\infty}$. Hence the closure of the
latter in the former, which we denote by $D_0$, is strictly
included in $D(A)$. One can use this to construct two different
self-extensions of $A_{\E}$. Let $\beta$ be the sesquilinear form
associated to $A_{\E}$ with domain $D(A)$, and $\beta_0$ the
analogous form with domain $D_0$. Then, we classically define
$\mathcal{A}$ and $\mathcal{A}_0$ respectively on:
$$
D(\mathcal{A}) = \{f \in D(A); \, \forall g \in D(A) \, |
\beta(f,g) | \leq C \| g \|_{\E}\}
$$
and
$$
D(\mathcal{A}_0) = \{f \in D_0; \, \forall g \in D_0 \, |
\beta(f,g) | \leq C \| g \|_{\E}\},
$$
where
$$
\beta(f,\psi)=<Af, A\psi>_H .
$$
Both are self-adjoint extensions of $A_{\E}$, and there are
different since their associated forms have distinct domains.\\

\break

\begin{center}
{{\bf \large{Part II. The alternative well-posedness property}}}
\end{center}

\section{Setting of the problem and statement of the main result}

In this second part, we initiate the program which aims at understanding
what is hidden behind the examples of the first part, and to which extent
one can obtain general results.

 We keep on considering $\Omega = \R^n \times
(0,\infty)$, and turn our attention to the divergence operators of
the form
$$\displaystyle{L=-\frac1m \, \mbox{div}\, M\, \mbox{grad}}.$$

We assume:
 \begin{itemize}
 \item (H1)
 $m \in C^\infty (\Omega)$ and $M \in C^\infty (\Omega, \mathcal{M}_{n+1}(\R))$;
 \ \\
 \item (H2)
 $\displaystyle{
 \mbox{for all}\ (x,z) \in \Omega \ \ m(x,z)>0 \,   \ \mbox{and}\  M(x,z) = M(x,z)^t >0;
 }$
 \ \\
 \item (H3) $m$ and $M$ are integrable on every compact subset of $\bar \Omega$.
 \end{itemize}

 We define the Hilbert space
 $$
 H=\{\varphi \in L^2_{loc}(\O): \int_{\O}|\varphi(x,z)|^2 m(x,z) d\mu <\infty \},
 $$
 and the energy space
 $$
 \E = \{\varphi \in H \cap H^1_{\mbox{loc}}(\O): b(\varphi,\varphi)<\infty \},
 $$
where
\begin{equation}\label{bL}
b(\varphi,\psi)= \int_{\O} M(x,z)\, \mbox{grad}\, \varphi(x,z)
\cdot \mbox{grad}\, \psi(x,z)\, d\mu
\end{equation}
for suitable $\varphi, \psi$. They are equipped with their
canonical norms. Thanks to hypothesis (H3), $C_c^{\infty}(\O)$ is
included in $H$ and $\E$, and we moreover assume
 \begin{itemize}
 \item (H4) $C_c^{\infty}(\O)$ is dense in $H$ and in $\E$.
 \end{itemize}

 Then, the operator $L$ is defined on $C_0^{\infty}(\O)$ and it is symmetric by (H2).
So, we  ask when $L$ has the property we call {\em alternative
well-posedness},
 which, by definition, means that there is only one self-adjoint extension
 of $L$ with domain included in $\E$. A first answer to this question is the  following result.

\medskip

\begin{theorem}\label{UEP}
Let $L$ be a divergent operator fulfilling hypotheses (H1-H4)
\begin{enumerate}
\item Assume $L$ has the { alternative well-posedness property}.
Then, for every measurable and non negligible set $\Gamma$ in
$\R^n$, we have

\beq\label{N}
 \int_0^1 \int_{\Gamma}\frac{1}{m_{n+1,n+1}(x,z)}dx\, dz=\infty.\ \
 \eeq

\item Assume that for all $x \in \R^n$(everywhere, not almost
everywhere) there exists an open ball $B$ containing $x$ such that
 \beq\label{S}
 \int_0^1 \frac{1}{\omega_{B}(z)}dz=\infty,\ \
 \eeq
where $\omega_{B}(z)=\int_{B} m_{n+1,n+1}(y,z)dy$. Then $L$ has the {
alternative well-posedness property}.

 \end{enumerate}
\end{theorem}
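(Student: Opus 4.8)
The plan is to reduce the whole theorem to a single density question. As in the proof of Theorem \ref{esencial} (and the construction following Theorem \ref{lamda}), $L$ has the alternative well-posedness property precisely when $C_0^\infty(\O)$ is dense in $\E$: density identifies the extension $(\mathcal A,D)$ with the Friedrichs extension, hence with the only self-adjoint extension whose domain lies in $\E$; and if the closure $D_0$ of $C_0^\infty(\O)$ in $\E$ were proper, one would build two distinct self-adjoint extensions with form domains inside $\E$, exactly as in Part I. So I would prove (2) by establishing density of $C_0^\infty(\O)$ in $\E$ under \eq{S}, and (1) by producing, whenever \eq{N} fails for some $\Gamma$, a nonzero continuous linear form on $\E$ that vanishes on $C_0^\infty(\O)$.

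\textbf{Proof of (2).} By (H4) it is enough to approximate in $\E$ an arbitrary element $\varphi$ of the class $C_c^\infty(\O)$ appearing in (H4); after removing the part of $\varphi$ supported in $\{z\geq\frac12\}$, which already lies in $C_0^\infty(\O)$, I may assume $\mathrm{supp}\,\varphi\subset\R^n\times[0,1]$. Its $\R^n$-projection $K$ being compact, I would cover $K$ by finitely many balls $B_1,\dots,B_N$ on each of which \eq{S} holds — this is the only place where the hypothesis ``for all $x$, not a.e.'' enters — and write $\varphi=\sum_j\chi_j\varphi$ along a subordinate partition of unity, reducing to $\varphi$ with $\mathrm{supp}\,\varphi\subset B\times[0,1]$ and $\int_0^1\omega_B(z)^{-1}dz=\infty$. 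Then, mimicking Lemma \ref{densidad}, I would set $\Phi(z)=\int_z^1\omega_B(s)^{-1}ds$ (finite for $z>0$, $\to+\infty$ as $z\to0^+$) and $\varphi_\varepsilon=h_\varepsilon\varphi$ with $h_\varepsilon(z)=\bigl(1-\varepsilon\,\Phi(z)\bigr)_+$. The function $h_\varepsilon$ is Lipschitz and vanishes near $z=0$, so $\varphi_\varepsilon\in C_0(\O)\cap\E$, while $\varphi_\varepsilon\to\varphi$ in $H$ by dominated convergence; and in the energy seminorm, expanding $\mathrm{grad}\bigl((h_\varepsilon-1)\varphi\bigr)=(h_\varepsilon-1)\,\mathrm{grad}\,\varphi+\varphi\,h_\varepsilon'\,e_{n+1}$, the first part goes to $0$ by dominated convergence while the second is controlled by $\|\varphi\|_\infty^2\int_0^1|h_\varepsilon'(z)|^2\omega_B(z)\,dz=\|\varphi\|_\infty^2\,\varepsilon\to0$ after the change of variable $s=\Phi(z)$. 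A standard mollification, as at the end of the proof of Lemma \ref{densidad}, then produces the required $C_0^\infty(\O)$ approximants.

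\textbf{Proof of (1).} I would argue by contraposition. Suppose $\int_0^1\int_\Gamma m_{n+1,n+1}(x,z)^{-1}dx\,dz<\infty$ for some non-negligible $\Gamma$, which after shrinking may be taken bounded. Fix $\eta\in C^\infty([0,\infty))$ with $\eta(0)=1$, $\mathrm{supp}\,\eta\subset[0,1)$ and $\eta'$ supported in $[\frac12,1)$, and for $\varphi$ in the class of (H4) set $\ell(\varphi)=\int_\Gamma\varphi(x,0)\,dx=-\int_\Gamma\int_0^1\bigl(\eta\,\p_z\varphi+\eta'\varphi\bigr)\,dz\,dx$. The $\eta'$-term is $\leq C\|\varphi\|_H$ because the density $m$ is bounded below on the compact set $\overline{\Gamma}\times[\frac12,1]\subset\O$; for the $\eta$-term, Cauchy--Schwarz in $z$ and then in $x$, combined with the pointwise ellipticity estimate $M(x,z)\,\xi\cdot\xi\geq m_{n+1,n+1}(x,z)\,|\xi_{n+1}|^2$, yields $\bigl|\int_\Gamma\int_0^1\eta\,\p_z\varphi\bigr|\leq\bigl(\int_\Gamma\int_0^1 m_{n+1,n+1}^{-1}\bigr)^{1/2}b(\varphi,\varphi)^{1/2}$, which is finite by assumption. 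Thus $|\ell(\varphi)|\leq C\|\varphi\|_\E$, so $\ell$ extends by (H4) to a continuous linear form on $\E$; it vanishes on $C_0^\infty(\O)$ (such functions vanish near $z=0$) but $\ell(\chi\,\eta)=|\Gamma|>0$ whenever $\chi\in C_0^\infty(\R^n)$ equals $1$ on $\Gamma$. Hence $C_0^\infty(\O)$ is not dense in $\E$ and the property fails.

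\textbf{The main obstacle.} I expect the real difficulty to be the localization in (2). In Part I the weight was the single variable $z$; here the effective one-dimensional weight $\omega_B$ depends on the ball $B$, so the logarithmic-type cutoffs must be patched over a finite covering while keeping the energy estimate uniform, and this is exactly what forces \eq{S} to be demanded at every point of $\R^n$ rather than merely almost everywhere. By comparison, step (1) is routine once one has the integration-by-parts identity for the boundary value $\varphi(x,0)$ and uses pointwise ellipticity to control the one-dimensional $z$-energy by the full Dirichlet form $b$.
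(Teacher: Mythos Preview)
Your argument for part (i) follows the paper's exactly: build a nonzero continuous functional on $\E$ (the averaged boundary value over $\Gamma$) that annihilates $C_0^\infty(\O)$. One warning: the ``pointwise ellipticity estimate'' $M\xi\cdot\xi\ge m_{n+1,n+1}|\xi_{n+1}|^2$ you invoke is \emph{false} for a general positive matrix $M$ --- it holds only when the off-diagonal entries $m_{j,n+1}$ vanish (try $M=\bigl(\begin{smallmatrix}1&1-\delta\\1-\delta&1\end{smallmatrix}\bigr)$, $\xi=(1,-1)$); the sharp bound is the Schur complement $\det M/\det M'$. The paper is equally terse here (``follows directly from Cauchy--Schwarz''), so your version is no more incomplete than the original, but the justification as written does not stand.

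For part (ii) your proof is correct but takes a genuinely different route. The paper argues by \emph{contraposition}: assuming $\E_0\neq\E$, it picks a nontrivial $\lambda\in\E'$ vanishing on $\E_0$, fixes $\varphi\in C_c^\infty(\O)$ with $\lambda(\varphi)\neq0$, and shows via a tensor-product test (carefully tracking the cross terms through $\beta(z)\le2\sqrt{\alpha(z)\omega(z)}$) that $\int_0^1\omega_B^{-1}<\infty$ for every ball $B\supset\mathrm{supp}\,\varphi(\cdot,0)$; a dyadic descent through partitions of unity then locates a single point at which (\ref{S}) fails. You instead prove density \emph{directly}: cover the compact $x$-support by finitely many balls satisfying (\ref{S}), localise by a partition of unity, and on each piece use the cutoff $h_\varepsilon=(1-\varepsilon\Phi)_+$ with $\Phi(z)=\int_z^1\omega_B^{-1}$, which gives the clean estimate $\int|h_\varepsilon'|^2\omega_B=\varepsilon$. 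Your construction is more explicit and closer in spirit to Lemma \ref{densidad}; the paper's contrapositive, on the other hand, makes transparent \emph{why} the hypothesis must hold at every point rather than almost everywhere --- its descent argument produces an honest point, not a full-measure set, where (\ref{S}) breaks down.
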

 This theorem says that only the normal diagonal part of
$L$, namely the term $-\pd z m_{n+1,n+1} \pd z$, seems to matter
with respect to the{ alternative well-posedness} and that it
should vanish rapidly enough on approaching the boundary for this
property to hold.
\ \\
The necessary condition in ({\it i}) is not sufficient: there
exists an operator $L$ which does not have the {alternative
well-posedness}, but such that
\beq\label{N}
\int_0^1 \int_{\Gamma}\frac{1}{m_{n+1,n+1}(x,z)}dx\,dz=\infty\ \ \eeq
 for every
non negligible subset $\Gamma$ of $\R^n$.

Also, the sufficient condition in ({\it ii}) is not necessary: there
exists an operator $L$ which has the {alternative well-posedness},
but such that \beq\label{a}
 \int_0^1 \frac{1}{\omega_{B}(z)}dz<\infty\ \
 \eeq
 for all balls $B$ containing the origin.
However, this condition is sharp in the sense that there exists an
operator $L$ which has not the {alternative well-posedness
property}, satisfying (\ref{S}) for every $x\in \R^n$ but $0$.

 Nevertheless, an immediate and useful corollary, which
in particular applies when $ m_{n+1,n+1}$ only depends on the
variable $z$, is the following. Its proof is left to the reader.

\begin{cor}
Assume, in addition, that for all $x \in \R^n$ there exists $\rho
>0$ such that
\begin{equation}
\sup_{z\leq 1}\left( \int_{B(x,\rho)} m_{n+1,n+1}(y,z)dy
\right)\left( \int_{B(x,\rho)} \frac{1}{m_{n+1,n+1}(y,z)}dy
\right)< \infty.
\end{equation}
Then $L$ has the {alternative well-posedness property}
if and only if, for all balls $B$ in $\R^n$, we have
\begin{equation}
\int_0^1 \int_{B} \frac1{m_{n+1,n+1}(y,z)}dy \, dz= \infty.
\end{equation}
\end{cor}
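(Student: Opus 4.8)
The plan is to read the equivalence directly off Theorem \ref{UEP}: the ``only if'' part is exactly its necessary condition (i) specialized to balls, the ``if'' part is its sufficient condition (ii), and the extra $A_2$-type hypothesis is only the bridge between the two. No new tool is needed.

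\emph{The ``only if'' direction.} Assume $L$ has the alternative well-posedness property. Every ball $B\subset\R^n$ is measurable and has positive Lebesgue measure, hence is non-negligible, so applying part (i) of Theorem \ref{UEP} with $\Gamma=B$ gives at once $\int_0^1\int_B m_{n+1,n+1}(y,z)^{-1}\,dy\,dz=\infty$. This half uses neither part (ii) nor the extra assumption.

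\emph{The ``if'' direction.} Suppose $\int_0^1\int_B m_{n+1,n+1}(y,z)^{-1}\,dy\,dz=\infty$ for every ball $B\subset\R^n$. Fix $x\in\R^n$, let $\rho=\rho(x)>0$ and $B=B(x,\rho)$ be provided by the extra hypothesis, and set $C_0:=\sup_{z\le1}\left(\omega_B(z)\int_B m_{n+1,n+1}(y,z)^{-1}\,dy\right)<\infty$, where $\omega_B(z)=\int_B m_{n+1,n+1}(y,z)\,dy$ (finite and positive for a.e.\ $z\in(0,1)$, by (H2) and (H3)). From the pointwise inequality $\int_B m_{n+1,n+1}(y,z)^{-1}\,dy\le C_0/\omega_B(z)$, valid for a.e.\ $z\le1$, I integrate in $z$ over $(0,1)$ and invoke the assumption for the single ball $B$:
$$\infty=\int_0^1\int_B\frac{dy\,dz}{m_{n+1,n+1}(y,z)}\le C_0\int_0^1\frac{dz}{\omega_B(z)},$$
so that $\int_0^1\omega_B(z)^{-1}\,dz=\infty$. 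Since $x$ was arbitrary and $x\in B(x,\rho)$, condition (\ref{S}) of part (ii) of Theorem \ref{UEP} holds at every point of $\R^n$, and hence $L$ has the alternative well-posedness property.

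There is no genuine obstacle here; the one point that must be handled with care is that part (ii) of Theorem \ref{UEP} demands (\ref{S}) \emph{everywhere} on $\R^n$, not merely almost everywhere. This is exactly why the corollary's hypothesis attaches a ball $B(x,\rho)$ to each individual point $x$, and why the $A_2$-type bound — which says that, on each horizontal slice $\{z=\text{const}\}$, the quantities $\int_B m_{n+1,n+1}^{-1}\,dy$ and $\omega_B(z)^{-1}$ are comparable uniformly for $z\le1$ — lets one pass from the two-variable integrability condition of (i) to the one-variable condition (\ref{S}) verbatim, pointwise in $x$.
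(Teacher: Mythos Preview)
Your proof is correct and is exactly the argument the paper has in mind: the authors write ``Its proof is left to the reader,'' and the intended derivation is precisely this straightforward combination of parts (i) and (ii) of Theorem~\ref{UEP}, using the $A_2$-type bound to pass from the two-variable divergence of (i) to the pointwise one-variable condition~(\ref{S}).
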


\medskip
A remark is here in order. The reader has noticed that our
hypothesis (H1) on the regularity of the coefficients is obviously
not sharp. They are designed to avoid additional difficulties
which are not essential for our understanding of the alternative
well-posedness property.

\section{Proof of Theorem \ref{UEP}}

We begin with an abstract characterization of the {alternative
well-posedness property}. Let us denote by $\E_0$ the closure of
$C_0^{\infty}(\O)$ in $\E$. Then, we have

\begin{lem}
The operator $L$ has the alternative well-posedness property if
and only if $\E_0 = \E$.
\end{lem}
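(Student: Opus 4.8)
The plan is to prove the equivalence by relating both sides to the structure of self-adjoint extensions with domain in $\E$. The key observation is that $\E_0$, being the closure of $C_0^\infty(\O)$ in $\E$, is exactly the domain of the quadratic form of the Friedrichs extension $L_F$ of $L$, while $\E$ itself carries the form $b$ which is closed on $\E$ and extends the form of $L$. So I would first recall the standard correspondence: closed symmetric extensions of $L$ with form domain between $\E_0$ and $\E$ are in bijection with closed subspaces $\mathcal{F}$ with $\E_0 \subseteq \mathcal{F} \subseteq \E$ on which $b$ restricts to a closed form; each yields a self-adjoint operator by the usual procedure $D(\mathcal{A}_\mathcal{F}) = \{\varphi \in \mathcal{F} : \exists C,\ \forall \eta \in \mathcal{F},\ |b(\varphi,\eta)| \leq C\|\eta\|_H\}$, and its domain is contained in $\E$. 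Thus, if $\E_0 \subsetneq \E$, the choices $\mathcal{F} = \E_0$ and $\mathcal{F} = \E$ already produce two distinct self-adjoint extensions with domain in $\E$ (distinct because their form domains differ), so the alternative well-posedness property fails.

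For the converse, if $\E_0 = \E$, I would argue that there is only one self-adjoint extension of $L$ with domain contained in $\E$. Suppose $\mathcal{A}$ is such an extension. Its form domain $Q(\mathcal{A})$ satisfies $C_0^\infty(\O) \subseteq Q(\mathcal{A})$ and, since $\mathcal{A}$ restricts $L$ and $b$ is the form of $L$ on test functions, the form of $\mathcal{A}$ agrees with $b$ on $C_0^\infty(\O)$. The point is to show $D(\mathcal{A}) \subseteq \E$ forces $Q(\mathcal{A}) \subseteq \E$: indeed $D(\mathcal{A})$ is a form core, and for $\varphi \in D(\mathcal{A})$ one has $b(\varphi,\varphi) = \langle \mathcal{A}\varphi, \varphi\rangle_H < \infty$, with $\|\varphi\|_\E^2 = b(\varphi,\varphi) + \|\varphi\|_H^2$ controlled by the graph norm; closing up, $Q(\mathcal{A}) \subseteq \E$ with $\E$-norm dominated by the form norm. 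Combined with $C_0^\infty(\O) \subseteq Q(\mathcal{A})$ and the fact that $C_0^\infty(\O)$ is $\E$-dense in $\E = \E_0$, we get $Q(\mathcal{A}) = \E_0 = \E$ with matching norms. Two self-adjoint operators with the same (closed) form, namely $b$ on $\E$, coincide; hence uniqueness.

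I expect the main obstacle to be the careful verification that $D(\mathcal{A}) \subseteq \E$ actually implies the form domain $Q(\mathcal{A})$ sits inside $\E$ \emph{with a comparable norm} — one needs $\|\varphi\|_\E \leq C\|\varphi\|_{\mathcal{A}}$ on $D(\mathcal{A})$ (graph norm), which follows from $b(\varphi,\varphi) = \langle\mathcal{A}\varphi,\varphi\rangle_H \leq \|\mathcal{A}\varphi\|_H\|\varphi\|_H$, but then passing this inequality to the form-closure requires knowing that the embedding $\E \hookrightarrow H$ is continuous (immediate from the definition of $\|\cdot\|_\E$) and that $\E$ is complete, so that the closure of $(D(\mathcal{A}), \|\cdot\|_{\mathcal{A}})$ embeds continuously into $\E$. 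A secondary subtlety is confirming that distinct form domains really give distinct self-adjoint operators; this is standard (the form determines the operator and conversely via the first representation theorem), but it is worth stating explicitly since the whole ``non-uniqueness'' direction rests on it. Everything else — that $b$ is closed on $\E$, that hypothesis (H4) gives density of $C_0^\infty(\O)$ — is available from the setup.
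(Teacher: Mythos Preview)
Your proposal is correct and follows essentially the same route as the paper: both directions hinge on comparing the self-adjoint operators arising from the form $b$ with form domains $\E_0$ and $\E$, and on the Friedrichs-extension characterization as the unique self-adjoint extension with operator domain contained in the form domain. The paper's proof is more terse---it simply invokes that $D(\mathcal{L}^{1/2})=\E$ and $D(\mathcal{L}_0^{1/2})=\E_0$ for the forward direction, and the standard Friedrichs uniqueness statement for the converse---whereas you spell out the mechanism (showing $Q(\mathcal{A})\subseteq\E$ via $b(\varphi,\varphi)=\langle\mathcal{A}\varphi,\varphi\rangle_H$ on $D(\mathcal{A})$ and then closing up); note that this last identity itself relies on the density $\E_0=\E$ to pass from test functions to all of $\E$, which you should make explicit.
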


 To prove this assertion, we begin with assuming that $L$
has the alternative well-posedness property. Let $\mathcal{L}$ and
$\mathcal{L}_0$ be the self-adjoint operators associated with the
form $b$ given by (\ref{bL}) and defined on domains $\E$ and $\E_0$ respectively.
They are both extensions of $L$ with domains included in $\E$, and so, are equal.
But then we must have $D(\mathcal{L}^\frac12)=D(\mathcal{L}_0^\frac12)$, which is $\E = \E_0$.

Reciprocally, if $\E = \E_0$, the only self-adjoint
extension of $L$ with domain in $\E$ is its Friedrichs extension,
because the form $b$ defined on $\E$ is the closure of the form
$b$ defined on $C_0^{\infty}(\O)$. The lemma is proved.
\begin{flushright}
$\Box$
\end{flushright}

\begin{rem}
This Lemma, as simple as it is, enlightens the key point in the
alternative well-posedness property. Indeed, in a classical case
where boundary conditions are needed, the space $\E$ is the
largest possible Banach space onto which $(b(\varphi,\varphi) + \|
\varphi \|^2_H)^{\frac12}$ defines a norm, while $\E_0$ is the
smallest one containing $C^\infty_0 (\Omega)$. One thus may
understand the { alternative well-posedness} as saying that the
smallest possible space is also the largest, and this is why no
boundary condition is needed to obtain a self-adjoint extension
with domain in $\E$.
\end{rem}

 We turn to the proof of the first assertion of Theorem
\ref{UEP}. We assume the existence of a non negligible subset
$\Gamma$ of $\R^n$ such that
\begin{equation}\label{gamma}
 \int_0^1 \int_{\Gamma}\frac{1}{m_{n+1,n+1}(x,z)}\,dx\,dz<\infty.\ \
 \end{equation}

We will prove that $\E_0\neq \E$ by constructing a linear form on
$\E$ vanishing on $\E_0$, but not identically vanishing.

 Let $\eta\in C_c^{\infty}(0,\frac12)$ which equals
1 in a neighbourhood of $0$. If $\varphi \in  \E$ we set
$$
\lambda(\varphi)=\int_0^1\int_{\Gamma} \pd z
(\varphi(x,z)\eta(z))dx\, dz.
$$
That $\lambda$ defines a continuous linear form on $\E$ follows
directly from Cauchy-Schwarz inequality and (\ref{gamma}); it is
vanishing on $\E_0$ but not on $\E$, since
$\lambda(\varphi)=-\int_{\Gamma}\varphi(x,0)dx$ whenever $\varphi
\in C_c^{\infty}(\Omega)$.

\medskip
 Regarding the second assertion, we assume $\E \neq \E_0$
and prove that (\ref{S}) does not hold. There exists $\lambda \in
\E'$ (the dual space of $\E$), not identically vanishing, but null on $\E_0$. Therefore,
there is at least one (and in fact many, as we will see) test
function $\varphi \in C_c^{\infty}(\O)$ such that
$\lambda(\varphi)\neq 0$.

 The first step consists in showing that whenever
$\lambda(\varphi)\neq 0$ (suppose $\lambda(\varphi)=1$), then
$\int_0^1\frac1{\omega_B}<\infty$ as soon as $B\supset
\mbox{Supp}\ \phi(\cdot,0)$. For $x\in\R^{n}$, we write
$\zeta_0(x)=\varphi(x,0)$ and we define
$$
 E=\{\eta \in H^{1}_{loc}(0,\infty): \ \zeta_0
\otimes \eta \in \E \},
$$
equipped with the norm $\displaystyle{\|\eta\|_E=\|\zeta_0 \otimes \eta \|_{\E}} .$ It is
a Banach space, on which
$C_c^{\infty}(0,\infty)$ is dense.

Let $\eta \in C_c^{\infty}(0,\infty)$ and $\psi = \eta
(0)\varphi-\zeta_0 \otimes \eta$. Since $\psi$ vanishes on
$\partial \O$, it belongs to $\E_0$ (extend it by $0$ outside $\O$
and remark that $\psi_{\ve}(x,z) = \psi(x,z-\ve)$ converge towards
$\psi$ in $\E$). We thus have
$$
\lambda(\zeta_0 \otimes \eta )=\lambda(\eta(0)\varphi)=\eta(0)
$$
and
\begin{equation}\label{eta(0)}
|\eta(0)| \leq \| \lambda \|_{\E'} \| \eta \|_{E}.
\end{equation}

We compute
 $$ \|\eta\|_E^2=\int_0^{\infty}\omega(z)\eta '(z)^2\,
dz + \int_0^{\infty}\beta(z)\eta '(z)\eta(z)\, dz
+\int_0^{\infty}(\alpha (z)+c(z))\eta(z)^2\, dz $$
where, decomposing the matrix $M$ as
$$ M=\left(\begin{array}{ccccc}
 & & & & m_{1,n+1}\\
  & & & &.\\
 & & M'& & .\\
 & & & &.\\
 m_{1,n+1}& .& .& .& m_{n+1,n+1}
\end{array}\right),
$$
we have set
\begin{eqnarray*}
\omega(z)&=&\int_{\R^{n}} m_{n+1,n+1}(x,z) \zeta_0(x)^2\, dx,\\
\beta(z)&=&2\sum_{j=1}^{n}\int_{\R^{n}} m_{j,n+1}(x,z) \partial_j\zeta_0(x)\zeta_0(x)\, dx,\\
\alpha(z)&=&\int_{\R^{n}} M'(x,z) \nabla\zeta_0(x) \cdot \nabla\zeta_0(x)\, dx,\\
c(z)&=&\int_{\R^{n}} \zeta_0(x)^2\,m(x,z)\, dx.\\
\end{eqnarray*}

By the positive definiteness of $M$, we have for all $z>0$
$$
 \beta(z)\leq 2 \sqrt{\a(z)\omega(z)}.
$$
Inserting this in (\ref{eta(0)}), we find that
$$
|\eta(0)|^2 \leq C \left(\int_0^{\infty}w(z)\eta '(z)^2\, dz +
\int_0^{\infty}(\alpha (z)+c(z))\eta(z)^2\, dz\right)
$$
for some uniform constant $C$. It is not difficult to see that this
last inequality implies
$$ \int_0^{1}\frac{1}{w(z)}\, dz \leq C,
$$
and then \beq \label{wB} \int_0^{1}\frac{1}{\omega_B(z)}\,
dz <\infty , \eeq for all balls $B$ which contain the support of
$\zeta_0$.

The second step is an argument of descent which will give one
point $x$ in $\R^n$ such that (\ref{wB}) holds when $B$ contains
$x$. We will use for each $ j \geq 0 $ a partition of unity
$(\chi_{j, k})_{k\in \mathbb{Z}^n}$with $\mbox{Supp}\ \chi_{j, k}
\subset B(k 2^{-j}, \sqrt{n} 2^{-j})=:B_{j,k}$.

 We start with $j=0$, and decompose $\varphi= \sum_k
\varphi\chi_{0,k}$. There must exist at least one $k_0 \in
\mathbb{Z}^n$ such that $\lambda (\varphi\chi_{0,k_0})\neq 0$. By
the first step, the inequality (\ref{wB}) is verified for
$B=B_{0,k_0}$.

 We now decompose $\varphi_0
=\varphi\chi_{0,k_0}$ at the next finer scale: $\varphi_{0}=\sum_k
\varphi_0\chi_{1,k}$, and select $k_1$ such that $\lambda
(\varphi_0\chi_{1,k_1})\neq 0$. Then, (\ref{wB}) holds for
$B=B_{1,k_1}$ (note that this ball is included in $B_{0,k_0}$).

We continue, and thus inductively construct a nested sequence of
dyadic balls fulfilling (\ref{wB}): the point $x$ we are looking
for is at the intersection of these balls. Theorem \ref{UEP} is
completely proved.

\section{A few conclusive words}

Going back to spacetimes with naked singularities, the alternative
well posedness property being satisfied means that such
spacetimes, though exhibiting a boundary, are physically
consistent because their Laplace-Beltrami operator is well defined
without requiring any condition at the boundary. We are not
claiming that naked singularities do exist in the universe, which
we do not know. But, at least, the necessity to define a condition
at the boundary, fortunately, does not exist.

 However we have proved the {alternative well-posedness
property} only for the examples of this paper. If we believe in
General Relativity, then it is conceivable that any meaningful
solution having naked singularities should fulfill the {
alternative well-posedness property}.
\begin{appendix}
\section{Proof of Proposition \ref{vertical}}\label{apendice}

 Since the characteristics of the  equation \eq{fe} are
$z\pm t=$ constant, the change of variables $\xi=z+t$ and
$\eta=z-t$ brings this equation to $L(\phi)=0$, where
$$ L(\phi)=\pd{\xi \eta} \phi+
\frac{1}{2(\xi+\eta)}(\pd \xi \phi+ \pd \eta \phi ).
$$

 The adjoint  $L^*$ of the  operator $L$
is given by \beqa\lab{Lad} L^*(\psi)=\pd{\xi\eta} \psi-\pd
\xi\left(\frac{\psi}{2(\xi+\eta)}\right)- \pd \eta
\left(\frac{\psi}{2(\xi+\eta)}\right)  \eeqa and we have
\beqa
  \psi L(\phi)-\phi L^*(\psi)=-\pd\xi\left(\phi\; \pd
\eta\psi-\frac{\phi\, \psi}{2(\xi+\eta)}\right) + \pd\eta\left(\pd
\xi \phi\; \psi+\frac{\phi\, \psi}{2(\xi+\eta)}\right). \nonumber
\eeqa

Taking $\phi$ such that $L(\phi)=0 $ and integrating over a
piecewise smooth compact domain $\Gamma$ with boundary
$\partial\Gamma$, one obtains by Green's formula \beqa \lab{circ}\fl
-\int\int_\Gamma \phi L^*(\psi)\, d\xi\,d\eta=
\oint_{\partial\Gamma}\left(\pd \xi \phi\; \psi\,d\xi+\phi\,\pd
\eta  \psi\,d\eta +\frac{\phi\,
\psi}{2(\xi+\eta)}(d\xi-d\eta)\right)\,,\eeqa where the line
integral around $\partial\Gamma$ is taken in the  clockwise sense.

 To obtain a  representation for
$\phi(P)=\phi(\xi_0,\eta_0)$, following Riemann (see, for example,
\cite{CHII}), we choose for $\psi$ a function
$R(\xi,\eta;\xi_0,\eta_0)$ subject to the following conditions:

a) As a function of $\xi$ and $\eta$, $R$ satisfies the adjoint
equation
$$ L_{(\xi,\eta)}^*(R)=0\,.
$$

b) On the characteristics $\eta=\eta_0$ and $\xi=\xi_0$ it
satisfies
$$ \pd \xi R-\frac{R}{2(\xi+\eta)}=0\hspace{2cm}\text{on}\hspace{.5cm} \eta=\eta_0\,,
$$ and
$$ \pd \eta  R-\frac{R}{2(\xi+\eta)}=0\hspace{2cm}\text{on}\hspace{.5cm} \xi=\xi_0\,.
$$

c) $R(\xi_0,\eta_0;\xi_0,\eta_0)=1$.

 Conditions b) are ordinary differential  equations along
the characteristics; integrating them  and using c) we get that
$$
R(\xi,\eta;\xi_0,\eta_0)=\sqrt{\frac{\xi+\eta}{\xi_0+\eta_0}}\,.
$$
on both characteristics.

 Now by trying with the ansatz
$$
R(\xi,\eta;\xi_0,\eta_0)=\sqrt{\frac{\xi+\eta}{\xi_0+\eta_0}}\,F(w)\,,
$$ where

\beq\lab{w} w=-\frac{(\xi-\xi_0)(\eta-\eta_0)}
{(\xi_0+\eta_0)(\xi+\eta)}\,,
\eeq
and using \eq{Lad} we get that in order to satisfy condition a)
the function $F(w)$ must satisfy the differential equation
\beq\lab{hde} w(1-w)F''(w) + (1-2w)F'(w)-\frac{1 }{4}F(w)=0\,.\eeq
The only solution of this equation with $F(0)=1$ is the Gauss
hypergeometric function $\;_2F_1\,(\frac{1}{2},\frac{1}{2};{1};w)$
\cite{abra}. Therefore we have that the Riemann function is
\beq\lab{R}
R(\xi,\eta;\xi_0,\eta_0)=\sqrt{\frac{\xi+\eta}{\xi_0+\eta_0}}\,\;
_2F_1\!\left(\frac{1}{2},\frac{1}{2}
;{1};-\frac{(\xi-\xi_0)(\eta-\eta_0)}{(\xi_0+\eta_0)(\xi+\eta)}\right)\,,
\eeq which is a $C^{\infty}$ function of $\xi$ and $\eta$, for
$\xi+\eta>0$, $\xi_0+\eta_0>0$ and $|w|<1$.

 It can be
shown that as a function of $\xi_0$  and $\eta_0$, $R$ also
satisfies the  equation
$$
 L_{(\xi_0,\eta_0)}(R)=0\,.
$$
For the sake of clearness,  we introduce new time-like coordinates
$t=(\xi_0-\eta_0)/2$ and $ \tau=(\xi-\eta)/2$ and spacelike ones $
z=(\xi_0+\eta_0)/2$ and $\zeta=(\xi+\eta)/2$. In terms of these
coordinates, we have
$$ w=\frac{(t-\tau)^2-(z-\zeta)^2}{4 z \zeta}\,,
$$
and
 \beq \lab{Rtz}
R(\tau,\zeta;t,z)=\sqrt{\frac{\zeta}{z}}\,\;
_2F_1\!\left(\frac{1}{2},\frac{1}{2};{1};\frac{(t-\tau)^2-(z-\zeta)^2}{4
z \zeta}\right)\,. \eeq

\begin{figure}[h!]\begin{center}
\includegraphics[width=\textwidth]{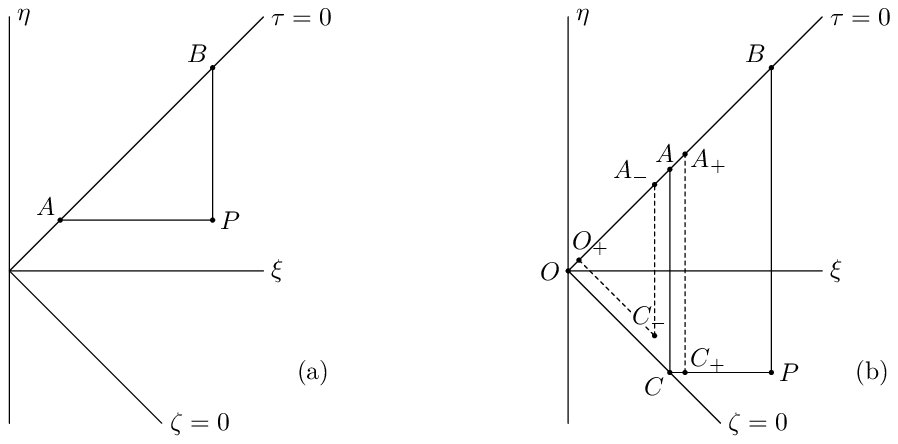}\caption{Domain of integration $\Gamma$: (a) for $z>t$ and (b) for $z<t$.}{\lab{ABC}}\end{center}
\end{figure}

 It can readily  be seen from \eq{w}  that, in the
region $0\leq\xi\leq\xi_0$, $ \eta\geq\eta_0$,  $\xi+\eta>0$ and
$\xi_0+\eta_0>0$  (see Fig. \ref{ABC}), it holds that $w\geq0$. If
in addition $\eta_0>0$ ($z>t$), we can easily check that $0\leq
w<1$  and then  Riemann's function \eq{R} is well-defined in that
region. However, if $\eta_0<0$ ($z<t$), we see that $w=1$ on the
characteristic $\xi=-\eta_0$, and furthermore $w>1$  for
$0\leq\xi<-\eta_0$. Thus, in the latter case,
$R(\xi,\eta;\xi_0,\eta_0)$ is defined through \eq{R} only for
$-\eta_0<\xi\leq\xi_0$.

 Therefore we must treat the cases $z>t$ and $z<t$
separately.

\subsection{$z>t$}

 In this case, since $R(\xi,\eta;\xi_0,\eta_0)$ is an
infinitely differentiable solution of $L_{(\xi,\eta)}^*(\psi)=0$
in $0\leq\xi\leq\xi_0$ and $ \eta\geq\eta_0$, by applying
expression \eq{circ} to the triangle $ABP$  (see Fig.\ref{ABC}(a))
we get
 \beqa
\int_{AB}\left(\pd \xi \phi\; R\,d\xi+\phi\,\pd \eta
R\,d\eta+\frac{\phi\,R}{2(\xi+\eta)}(d\xi-d\eta)\right)&&\nn  +
\int_{BP}\phi\,\left(\pd \eta  R-\frac{
R}{2(\xi+\eta)}\right)d\eta +\int_{PA}\left(\pd \xi \phi\;
R+\frac{\phi\, R}{2(\xi+\eta)}\right)d\xi &=&0 \,,\nonumber \eeqa
since $d\xi=d\eta$ along $AB$  and
$$
\int_{PA} \pd \xi \phi\,R\, d\xi= \phi(A) R(A)-\phi(P)
R(P)-\int_{PA}  \phi\,\pd \xi R\, d\xi\,,
$$
we have
 \beqa    \int_{AB}\left(\pd \xi \phi\;
R\,d\xi+\phi\,\pd \eta  R\,d\eta\right)+\int_{BP}\phi\,\left(\pd
\eta  R-\frac{ R}{2(\xi+\eta)}\right)d\eta &&\nn + \phi(A)
R(A)-\phi(P) R(P)-\int_{PA}\phi\;\left(\pd \xi  R -\frac{
R}{2(\xi+\eta)}\right)d\xi &=&0 \,.\nonumber \eeqa

Taking into
account that $R$ satisfies conditions b) and c), we get  Riemann's
representation formula \beqa    \phi(P)=\phi(A)
R(A)+\int_{AB}\left(\pd \xi \phi\; R\,d\xi+\phi\,\pd \eta
R\,d\eta\right) \,.\nonumber \eeqa

 In order to obtain a more
symmetric expression, we add the identity \beqa
0=\frac{1}{2}\bigl(\phi(B) R(B)-\phi(A) R(A)\bigr)-\frac{1}{2}
\int_{AB}\bigl(\pd \xi (\phi R)\,d\xi+\pd \eta(\phi
R)\,d\eta\bigr) \,\nonumber \eeqa getting \beqa
\phi(P)&=&\frac{1}{2}\bigl(\phi(B) R(B)+\phi(A) R(A)\bigr)\nn &+&
\frac{1}{2}\int_{AB}\bigl(\pd \xi \phi\; R-\phi\;\pd \xi
R\,\bigr)d\xi-\frac{1}{2}\int_{AB}\bigl(\pd \eta\phi\, R-\phi\,\pd
\eta  R\bigr)\,d\eta \,.\nonumber \eeqa
 Finally, by using
\eq{Rtz} we get (\ref{z>t}).

\subsection{$z<t$}

 In this case,  $R(\xi,\eta;\xi_0,\eta_0)$ is an
infinitely differentiable solution of $L_{(\xi,\eta)}^*(\psi)=0$
in $-\eta_0<\xi\leq\xi_0$ and  $ \eta\geq\eta_0$. For $\epsilon>0$
and sufficiently small, let us consider the trapezium
$A_+BPC_+A_+$ with vertices
$A_+=(-\eta_0+\epsilon,-\eta_0+\epsilon)$,  $B=(\xi_0,\xi_0)$,
$P=(\xi_0,\eta_0)$ and  $C_+=(-\eta_0+\epsilon,\eta_0)$ (see
Fig.\ref{ABC}(b)), by using expression \eq{circ}   we have \beqa
0= \oint_{A_+BPC_+A_+}\left(\pd \xi \phi\; R\,d\xi+\phi\,\pd \eta
R\,d\eta+\frac{\phi\,R}{2(\xi+\eta)}(d\xi-d\eta)\right)\,,\nonumber
\eeqa and proceeding as in the $z>t$ case we get \beqa
\phi(P)&=&\phi(C_+) R(C_+)+\int_{A_+B}\left(\pd \xi \phi\;
R\,d\xi+\phi\,\pd \eta R\,d\eta\right)\nn &+&
\int_{C_+A_+}\phi\,\left(\pd \eta  R-\frac{
R}{2(\xi+\eta)}\right)d\eta \,.\nonumber \eeqa

 Now, taking into
account that \beq\lab{log}\fl
_2F_1\,\left(\frac{1}{2},\frac{1}{2};{1};1-\theta\right)=-\frac{1}{\pi
}\log\left(\frac{\theta}{16}\right)-\frac{1}{4
\pi}\left(\log\left(\frac{\theta}{16}\right)+2\right)
\theta+O\left(\theta^2\,\log\theta\right), \eeq a straightforward
computation from \eq{R} shows that, on the characteristic
$C_+A_+$, {\it i.e}., $\xi=-\eta_0+\epsilon$, \beqa \left(\pd \eta
R-\frac{ R}{2(\xi+\eta)}\right)\Bigg
\vert_{\xi=-\eta_0+\epsilon}=\frac{1}{\pi(\xi_0+\eta)}
\sqrt{\frac{\xi_0+\eta_0}{\eta-\eta_0}}+O(\epsilon\,
\ln\epsilon)\,.\nonumber \eeqa Therefore we have \beqa
\phi(P)&=&\phi(C_+)
\sqrt{\frac{\epsilon}{\xi_0+\eta_0}}\,+\int_{A_+B}\left(\pd \xi
\phi\; R\,d\xi+\phi\,\pd \eta  R\,d\eta\right)\nn &+&
\frac{1}{\pi}
\int_{C_+A_+}\frac{\phi}{(\xi_0+\eta)}\sqrt{\frac{\xi_0+\eta_0}
{\eta-\eta_0}}\;d\eta +O(\epsilon\,\ln\epsilon) \,,\nonumber \eeqa
 and when $\epsilon$ goes to $0$, we get
\beqa   \lab{fiP}\fl \phi(P)=\int_{AB}\left(\pd \xi \phi\;
R\,d\xi+\phi\,\pd \eta
R\,d\eta\right)+\frac{1}{\pi}\int_{\eta_0}^{-\eta_0}
\frac{\phi(-\eta_0,\eta)}{(\xi_0+\eta)}
\sqrt{\frac{\xi_0+\eta_0}{\eta-\eta_0}}\;d\eta \,. \eeqa

In order
to get rid of the integral along the  characteristic $CA$ in the
last expression we shall ``extend" the Riemann's function $R$ to
the region $w>1$.

 Let us consider the function
 \beqa \lab{R1}
R_1(\xi,\eta;\xi_0,\eta_0)=\sqrt{\frac{\xi+\eta}{\xi_0+\eta_0}}
\,\frac{1}{\sqrt{w}}\;
_2F_1\!\left(\frac{1}{2},\frac{1}{2};{1};\frac{1}{w}\right)\nn =
\frac{\xi+\eta}{\sqrt{(\xi_0-\xi)(\eta-\eta_0)}}\,\;
_2F_1\!\left(\frac{1}{2},\frac{1}{2};{1};-\frac{(\xi_0+\eta_0)
(\xi+\eta)}{(\xi-\xi_0) (\eta-\eta_0)}\right)\,. \eeqa Clearly
$R_1$ vanishes at the boundary ($\xi+\eta=0$) and by comparing
with \eq{R}, we see that $R$ and $R_1$ ``formally coincide" at
$w=1$, although, of course,  none of them exist there (see
equation \eq{log}).

On the other hand, since
$\displaystyle{\frac{1}{\sqrt{w}}\;
_2F_1\!\left(\frac{1}{2},\frac{1}{2};{1};\frac{1}{w}\right)}$  is
the regular solution at $\infty$ of the hypergeometric
differential equation \eq{hde}, $R_1$
 also satisfies
$$
 L_{(\xi,\eta)}^*(R_1)=0 \hspace{1cm}\text{and}
\hspace{1cm}L_{(\xi_0,\eta_0)}(R_1)=0 \,,
$$
as it can be readily checked. Moreover straightforward
computations show that in a neighborhood of the boundary \beqa
\lab{r1} \left(\pd \eta  R_1-\frac{ R_1}{(\xi+\eta)}\right)\Bigg
\vert_{\xi+\eta=\epsilon}=-\frac{\bigl(\xi_0-\xi+\eta-\eta_0\bigr)
}{4 \bigl((\xi_0-\xi)(\eta-\eta_0)\bigr)^{3/2}}\;\epsilon
+O\left(\epsilon^2\right)\, \eeqa and that in a neighborhood of
the characteristic $CA$ \beqa  \lab{r2} \left(\pd \eta  R_1-\frac{
R_1}{2(\xi+\eta)}\right)\Bigg
\vert_{\xi=-\eta_0-\epsilon'}=\frac{1}{\pi(\xi_0+\eta)}
\sqrt{\frac{\xi_0+\eta_0}{\eta-\eta_0}}+O(\epsilon'\,\ln\epsilon')\,.
\eeqa

For $\epsilon$ and $\epsilon'$ positive  and small enough,
let us consider the triangle $O_+A_-C_-$ with vertices
$O_+=(\epsilon,\epsilon)$,
$A_-=(-\eta_0-\epsilon',-\eta_0-\epsilon')$ and
$C_-=(-\eta_0-\epsilon',\eta_0+2\epsilon+\epsilon')$ (see
Fig.\ref{ABC}(b)). Since $R_1(\xi,\eta;\xi_0,\eta_0)$ is an
infinitely differentiable solution of $L_{(\xi,\eta)}^*(\psi)=0$
in $\xi+\eta>0$ and $0\leq\xi<-\eta_0$, by using expression
\eq{circ} we have \beqa \int_{O_+A_-}\left(\pd \xi \phi\;
R_1\,d\xi+\phi\,\pd \eta R_1\,d\eta
\right)+\int_{A_-C_-}\phi\,\left(\pd \eta R_1-\frac{.
R_1}{2(\xi+\eta)}\right)d\eta &&\nn +\int_{C_-O_+}\pd \xi \phi\;
R_1\,d\xi+\int_{C_-O_+}\phi\,\left(\pd \eta  R_1-\frac{
R_1}{(\xi+\eta)}\right)d\eta =0 \,.\nonumber \eeqa Taking into
account \eq{R1}, \eq{r1} and \eq{r2} we can write \beqa \fl
\int_{O_+A_-}\hspace{-.5cm}\left(\pd \xi \phi\;
R_1\,d\xi+\phi\,\pd \eta  R_1\,d\eta \right)+\frac{1}{\pi}
\int_{A_-C_-}\frac{\phi}{(\xi_0+\eta)}
\sqrt{\frac{\xi_0+\eta_0}{\eta-\eta_0}}\;d\eta
+O(\epsilon,\epsilon'\,\ln\epsilon')=0 \,,\nonumber \eeqa and when
$\epsilon$ and $\epsilon'$ go to $0$, we get \beqa
\int_{OA}\left(\pd \xi \phi\; R_1\,d\xi+\phi\,\pd \eta  R_1\,d\eta
\right)-\frac{1}{\pi}\int_{\eta_0}^{-\eta_0}\frac{\phi(-\eta_0,\eta)}{(\xi_0+\eta)}
\sqrt{\frac{\xi_0+\eta_0}{\eta-\eta_0}}\;d\eta =0\,.\nonumber
\eeqa Therefore, by adding this expression to \eq{fiP}, we get
Riemann's representation formula for the case $\eta_0<0$ \beqa
\phi(P)=\int_{OA}\left(\pd \xi \phi\; R_1\,d\xi+\phi\,\pd \eta
R_1\,d\eta \right)+\int_{AB}\left(\pd \xi \phi\; R\,d\xi+\phi\,\pd
\eta  R\,d\eta\right)\,.\nonumber \eeqa
 Finally, by using
\eq{R1} we obtain(\ref{z<t}).

 Notice that, since  no contribution from the boundary
($\xi+\eta=0$) remains in the last expression, the solution is
completely determined by the initial data. Therefore no boundary
condition should and can be provided. However, at the boundary,
(\ref{z<t}) becames
\beqa\fl  \phi(t,0)&=&
   \int_{0}^{t} \Bigl( f'(\zeta) +g(\zeta)\Bigr) \frac{\zeta }{\sqrt{t^2
   -\zeta^2}} \, d\zeta
  +\,t \int_{0}^{t}\, \frac{f(\zeta) }{(t+\zeta)\;\sqrt{t^2
   -\zeta^2}} \, d\zeta\,,
 \eeqa
 and integrating  by parts we get the trace given in (\ref{z=0}).

 Notice that, for $z=t$, we can easily see that expressions
\eq{z>t} and \eq{z<t} coincide.
\begin{flushright}
$\Box$
\end{flushright}

\end{appendix}

\end{document}